\begin{document}

\title{Injectivity of ReLU networks: perspectives from statistical physics}
\date{\today}
\author{Antoine Maillard$^{\star, \diamond}$, Afonso S.\ Bandeira$^\star$, David Belius$^{\sharp,\triangleleft}$, Ivan Dokmani\'c$^{\sharp,\flat}$, Shuta Nakajima$^\triangleright$}
\maketitle

{\let\thefootnote\relax\footnote{
    \noindent
$\star$ Department of Mathematics, ETH Z\"urich, Switzerland.\\
$\sharp$ Department of Mathematics and Computer Science, University of Basel, Switzerland.\\
$\flat$ Department of Electrical and Computer Engineering, University of Illinois at Urbana--Champaign, USA.\\
$\triangleright$ Graduate School of Science and Technology, Meiji University, Kanagawa, Japan.\\
$\triangleleft$ Faculty of Mathematics and Computer Science, UniDistance Suisse.\\
$\diamond$ To whom correspondence shall be sent: \href{mailto:antoine.maillard@math.ethz.ch}{antoine.maillard@math.ethz.ch}.
}}
\setcounter{footnote}{0}

\begin{abstract}
    \noindent
    When can the input of a ReLU neural network be inferred from its output? In other words, when is the network injective? 
    We consider a single layer, $x \mapsto \mathrm{ReLU}(Wx)$, with a random Gaussian $m \times n$ matrix $W$, 
    in a high-dimensional setting where $n, m \to \infty$.
    Recent work connects this problem to spherical integral geometry giving rise to a conjectured sharp injectivity threshold for  $\alpha = m/n$ by studying the expected Euler characteristic of a certain random set.
    We adopt a different perspective and show that injectivity is equivalent to a property of the ground state of the spherical perceptron, an important spin glass model in statistical physics. By leveraging the (non-rigorous) replica symmetry-breaking theory, we derive analytical equations for the threshold whose solution is at odds with that from the Euler characteristic. Furthermore, we use Gordon's min--max theorem to prove that a replica-symmetric upper bound refutes the Euler characteristic prediction. 
    Along the way we aim to give a tutorial-style introduction to key ideas from statistical physics in an effort to make the exposition accessible to a broad audience.
    Our analysis establishes a connection between spin glasses and integral geometry but leaves open the problem of explaining the discrepancies.
\end{abstract}

\tableofcontents

\newpage
\section{Introduction}\label{sec:introduction}
We ask the following question: when is a randomly-initialized ReLU neural network injective?
For $n,m \geq 1$ we consider a single layer at initialization, that is the map $\varphi_\bW$ defined as
\begin{align}\label{eq:relu_layer}
    \varphi_\bW(\bx)_\mu &= \sigma\Big[\Big(\frac{\bW \bx}{\sqrt{n}}\Big)_\mu\Big], \hspace{2cm} \mu = 1,\cdots,m,
\end{align}
with $\bx \in \bbR^n$ and $\sigma(x) \coloneqq \max(0,x)$, the ReLU activation. 
The weights at initialization are random; concretely, we let $W_{\mu i} \iid \mathcal{N}(0,1)$ in what follows,
although we expect some of our results to generalize to $\bW$ with independent entries with zero mean, unit variance, and uniformly bounded third moment; see the discussion on universality in Section~\ref{subsec:statphys}.

\myskip
Earlier work studied this question in the proportional growth asymptotics, where $n \to \infty$ and the aspect ratio $\frac{m}{n} \to \alpha > 0$. 
Puthawala et al.\ proved that there exist values $\alpha_l$ and $\alpha_h$, with $\alpha_l < \alpha_h$, such that the probability $p_{m, n}$ that the map $\varphi_\bW$ is injective converges to $1$ for $\alpha > \alpha_h$ and to $0$ for $\alpha < \alpha_l$, 
suggesting that interesting transitions may appear precisely in this proportional scaling \cite{puthawala2022globally} .
Indeed, by studying the expected Euler characteristic of the intersection of a random subspace with a union of orthants with sufficiently many negative coordinates, 
Clum, Paleka, Bandeira, and Mixon conjectured a sharp injectivity threshold at the value $\alpha_\inj^\Eul \simeq 8.34$~\cite{paleka2021injectivity,clum2022topics,clum_paleka_bandeira_mixon}. We adopt this setting and propose an alternative derivation of the injectivity threshold, by making a connection with a spin glass model known as the spherical perceptron.

\subsection{Injectivity and (random) neural networks}

Our focus is on framing injectivity as a statistical physics problem and exploring parallels and discrepancies with the mentioned conjecture based on integral geometry\footnote{If formal injectivity was itself the goal, we could simply replace ReLU by Leaky ReLU and reduce the problem to injectivity of matrices. In that case the interesting quantity to study may be the inverse Lipschitz constant.}.
But a study of injectivity has a variety of motivations in contemporary machine learning. 
Inferring $\bx$ from $\varphi_{\bW}(\bx)$ is an ill-posed problem unless $\varphi_{\bW}$ is injective. The question thus arises naturally when applying neural networks to model forward and inverse maps in inverse problems \cite{puthawala2022globally, arridge2019solving}. There has been considerable interest in inverting generative models on their range to regularize ill-posed inverse problems \cite{bora2017compressed} and in building injective generative models \cite{brehmer2020flows,kothari2021trumpets,ross2021tractable}. Normalizing flows are designed to be invertible with efficiently computable inverses; similar feats can be achieved with injective maps, even with ReLU activations, while retaining favorable approximation-theoretic properties \cite{puthawala2022globally,puthawala2022universal,kothari2021trumpets}. In finite dimension injective maps are (locally) Lipschitz \cite{stefanov2009linearizing}. There is significant work on estimating and controlling the Lipschitz constants of deep neural networks; see for example \cite{fazlyab2019efficient,jordan2020exactly,gouk2021regularisation} and references therein.

\myskip
Applications abound beyond inverse problems: certain injective generative models can provably be trained with sample complexity which is polynomial in image dimension \cite{bai2018approximability}; a message-passing graph neural networks is as powerful as the Weisfeiler--Lehman test, but only if the aggregation function is injective \cite{xu2018powerful}; injective ReLU networks are universal approximators of \textit{any} map with a sufficiently high-dimensional output space \cite{puthawala2022globally} as well as of densities on manifolds \cite{puthawala2022universal}.

\myskip
There is an analogy between random neural networks and random matrices.
Just as results for random matrices help us understand general matrices and have implications throughout  mathematics, physics, engineering, and computer science, random neural networks yield insight into general neural networks.
This is the perspective of recent work on ``nonlinear random matrix theory'' for machine learning \cite{pennington2017nonlinear,louart2018random}.
We mention two other examples from this emerging line of research: neural networks at initialization have been used to theoretically study batch normalization \cite{daneshmand2020batch} and properties of gradients in deep networks \cite{hanin2020products}.

\subsection{Injectivity and random geometry}

\paragraph{Notation --} 
$\bbN^\star = \bbZ_{> 0}$ denotes the positive integers.
We say that an event occurs with high probability (w.h.p.) when its probability is $1 - \smallO(1)$ as the dimension $n \to \infty$. 
We denote $\mu_n$ the uniform probability measure on the Euclidean unit sphere $\mcS^{n-1}$ in $\bbR^n$.
The symbol $\pto$ refers to convergence in probability, and $\mcD \xi$ is the standard Gaussian measure on $\bbR$, as usual in physics.

\myskip
Our first tool is a proposition proved in Appendix~\ref{subsec_app:proof_random_intersection}, stated as Proposition~4.10 in \cite{paleka2021injectivity} and Proposition~37 in \cite{clum2022topics}, which is a simple consequence of Theorem~1 of \cite{puthawala2022globally}.
It connects injectivity to random geometry:
\begin{proposition}[Injectivity and random geometry]
    \label{prop:injectivity_random_intersection}
    \noindent
    The probability $p_{m,n}$ that $\varphi_\bW$ is injective is
    \begin{align}\label{eq:pmn_random_intersection}
       p_{m,n} &= \bbP_V\big[V \cap C_{m,n} = \{0\} \big],
    \end{align}
    where $V$ is a uniformly random $n$-dimensional subspace of $\bbR^m$, and $C_{m,n}$ is the set of vectors in $\bbR^m$ with strictly less than $n$ strictly positive coordinates.
\end{proposition}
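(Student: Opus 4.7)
The plan is to split the argument into a deterministic geometric characterization of injectivity and a distributional step exploiting the rotational invariance of the Gaussian law. Setting $V_\bW \coloneqq \bW(\bbR^n)$, I note that almost surely $\bW$ has full column rank $n$, so the map $\bx \mapsto \bW\bx$ is a bijection onto $V_\bW$ and injectivity of $\varphi_\bW$ reduces to injectivity of $\sigma$ restricted to $V_\bW$. The core claim is then that $\sigma|_{V_\bW}$ is injective if and only if $V_\bW \cap C_{m,n} = \{0\}$; together with the fact that the column span of $\bW$ is uniformly distributed on the Grassmannian $\mathrm{Gr}(n,m)$, this yields the proposition.

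For the contrapositive of the ``if'' direction, suppose $\sigma(\bv) = \sigma(\bv')$ with $\bv \neq \bv'$ in $V_\bW$. The strict monotonicity of $\sigma$ on $(0,\infty)$ and its vanishing on $(-\infty,0]$ force $\bv$ and $\bv'$ to share the same strictly-positive support $S$, to coincide on $S$, and to be non-positive on $S^c$. Hence $\bv - \bv' \in V_\bW$ is nonzero and supported on $S^c$, which, under the $\bW$-parametrization, means $\ker(\bW_S) \neq \{0\}$. Almost surely every $n$ rows of $\bW$ are linearly independent (a finite intersection of full-probability events indexed by $S$), so $|S| < n$, placing $\bv$ (or $\bv'$, in the degenerate case $\bv = 0$) in $V_\bW \cap C_{m,n}$.

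For the ``only if'' direction, start from a nonzero $\bu = \bW\bz_0 \in V_\bW \cap C_{m,n}$ with strictly-positive support $T$, $|T| < n$. Then $\dim \ker(\bW_T) = n - |T| > 0$ almost surely; picking a nonzero $\bz_1$ in that kernel, the pair $(\bx, \by) = (\lambda\bz_0 + \bz_1, \lambda\bz_0)$ with $\lambda > 0$ produces $\sigma(\bW\bx) = \sigma(\bW\by) = \lambda \sigma(\bu)$: on $T$ both images are $\lambda \bu_T > 0$ (using $\bW_T\bz_1 = 0$), and on $T^c$ they are both non-positive once $\lambda$ is large. The delicate case is a coordinate $\mu$ with $u_\mu = 0$, where one needs $(\bW\bz_1)_\mu \leq 0$; this is handled by first perturbing $\bu$ within $V_\bW \cap C_{m,n}$ to a representative whose entries outside $T$ are all strictly negative, a generic-position statement that holds almost surely whenever $V_\bW \cap C_{m,n} \neq \{0\}$.

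The distributional step is immediate: because $\bW \stackrel{d}{=} O\bW$ for all $O \in O(m)$, the induced law of $V_\bW$ on $\mathrm{Gr}(n,m)$ is $O(m)$-invariant, hence equal to the Haar-uniform measure. I expect the main technical obstacle to lie in the ``only if'' direction, specifically in the generic-position step eliminating vanishing coordinates of $\bu$ outside its positive support; all other pieces are either standard or a direct consequence of the almost-sure non-degeneracy of Gaussian submatrices.
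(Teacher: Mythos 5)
Your overall architecture matches the paper's: express injectivity as a deterministic condition on $V_\bW = \bW(\bbR^n)$ and then invoke rotational invariance of the Gaussian law to identify the distribution of $V_\bW$ with Haar measure on the Grassmannian. Your ``if'' direction (non-injectivity $\Rightarrow$ nontrivial intersection) is correct and in fact a pleasant alternative to the paper's: the paper passes through the midpoint $\bz = (\bx+\by)/2$ and uses a convexity observation about ReLU, whereas you read off directly that $\sigma(\bv)=\sigma(\bv')$ forces equal strictly-positive supports and agreement there, so $\bv-\bv'$ is a nonzero element of $V_\bW$ supported off $S$; a.s.\ linear independence of any $n$ rows then forces $|S|<n$. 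Both routes are valid; yours avoids the midpoint trick entirely.

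The ``only if'' direction, however, contains a genuine gap that you yourself flag as ``the main technical obstacle'' but do not resolve: the generic-position claim. You assert that whenever $V_\bW \cap C_{m,n} \neq \{0\}$, one can almost surely find a nonzero element with no zero coordinates, and then build a preimage pair from it. But this is exactly the content of the paper's Lemma~\ref{lemma:technical_equivalence_classes}, and it is not a throwaway genericity statement. The subtlety is that $C_{m,n}$ is not open: if $\bu \in V_\bW \cap C_{m,n}$ has, say, $n-1$ strictly positive coordinates and several zeros, an arbitrary small perturbation inside $V_\bW$ can push some zeros strictly positive and expel the point from $C_{m,n}$. The paper's proof circumvents this by an iterative construction: on the event that all $n$-subsets of rows of $\bW$ are linearly independent, one eliminates the zero coordinates one at a time, each step choosing a perturbation direction $\by$ orthogonal to all but one of the currently-active rows so that exactly one zero coordinate moves (to the negative side) while the others stay zero. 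Without this (or an equivalent) argument, your ``only if'' direction is incomplete; simply invoking genericity of $V_\bW$ does not suffice because the target set is neither open nor a manifold transverse to $V_\bW$ in any obvious sense. As a minor secondary point, your phrasing ``a representative whose entries outside $T$ are all strictly negative'' is slightly off, since the perturbed vector's positive support need not equal the original $T$; what you actually need (and what Lemma~\ref{lemma:technical_equivalence_classes} delivers) is a nonzero element of $V_\bW \cap C_{m,n}$ with \emph{no} zero coordinates at all, whose positive support then automatically has size $<n$.
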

\textbf{Remark --} Since $V \cap C_{m,n}$ is a cone, we can equivalently ask in eq.~\eqref{eq:pmn_random_intersection} that $V \cap C_{m,n} \cap \mcS^{m-1}$ be an empty set.

\myskip
Recall that we will study injectivity for large matrices $\bW$ in the proportional growth asymptotics,
\begin{equation*}
    n \to \infty, \quad m / n \to \alpha.
\end{equation*}
In what follows we will only consider the case $m \geq n$ (and therefore $\alpha \geq 1$). For $m < n$ even $\bx \mapsto \bW \bx$ is not injective, implying that $p_{m,n} = 0$.
An analysis of the random subspace--set intersection introduced in Proposition~\ref{prop:injectivity_random_intersection}, 
which is based on the phase transition in the expected Euler characteristic, yields a sharp injectivity threshold prediction of $\alpha_\inj^\mathrm{Euler} \simeq 8.34$ \cite{paleka2021injectivity}, 
see Section~\ref{subsec:related_work}.
Here we refute this prediction and conjecture a new threshold based on a different geometric intuition.

\subsection{Statistical physics and the spherical perceptron}\label{subsec:statphys}

\paragraph{Injectivity as energy minimization --}
The random subspace $V$ of Proposition~\ref{prop:injectivity_random_intersection} is constructed as the column space of the random matrix $\bW$, 
which has dimension $n$ with probability $1$ when $m \geq n$.
If $V' \coloneqq \bW(\mcS^{n-1})$ is the image of the $n$-dimensional unit sphere, 
we have that $\bbP[V \cap C_{m,n} = \{0\}] = \bbP[V' \cap C_{m,n} = \emptyset]$.
Moreover, for any $\bx \in \mcS^{n-1}$ we can define $E_\bW(\bx)$ as the total number of positive coordinates of $\bW \bx$, and $e_\bW(\bx)$ as a normalization of this quantity:
\begin{align}\label{eq:def_energy}
    E_\bW(\bx) &\coloneqq \sum_{\mu=1}^m \theta[(\bW \bx)_\mu], \hspace{2cm} e_\bW(\bx) \coloneqq \frac{E_\bW(\bx)}{n},
\end{align}
where $\theta(x) = \indi(x > 0)$ is the Heaviside step function, with the convention $\theta(0) = 0$.
Since $C_{m,n}$ is the set of all vectors in $\bbR^m$ with strictly less than $n$ (strictly) positive coordinates, 
one has immediately that $\bW \bx \in C_{m,n} \Leftrightarrow E_\bW(\bx) < n$.
Therefore, by Proposition~\ref{prop:injectivity_random_intersection}, $p_{m,n}$ can be rewritten as\footnote{
    The minimum is always reached since $E_\bW(\mcS^{n-1})$ is a finite set.
    }
\begin{align}\label{eq:pmn_minimum}
    p_{m,n} = \bbP_{\bW} \Big[\min_{\bx \in \mcS^{n-1}} E_{\bW}(\bx) \geq n\Big].
\end{align}
Eqs.~\eqref{eq:pmn_random_intersection} and \eqref{eq:pmn_minimum} express two different geometric intuitions. The former one lives in $\bbR^m$ (recall that $m \geq n$) and it is about an intersection of a random $n$-dimensional subspace and a certain nonconvex union of orthants. The latter one lives in $\bbR^n$ and it is about the existence of a halfspace which contains less than $n$ (out of $m$) random vectors. The two intuitions naturally encourage different analytic tools.

\paragraph{Statistical physics of disordered systems --}
The right-hand side of eq.~\eqref{eq:pmn_minimum} is reminiscent of quantities
that theoretical physicists have been tackling since the 1970s, in the field of \emph{physics of disordered systems}.
In these disordered models (also known as \emph{spin glasses}), one wishes to minimize an energy function like $E_\bW$, 
which is itself a function of random interactions (also called \emph{quenched} disorder), represented in our case by $\bW$. We recommend the famous book by Mézard, Parisi, and Virasoro for a beautiful review of the early breakthroughs of the physics of spin glasses  \cite{mezard1987spin}.

\myskip
Given this short description, we can see that eq.~\eqref{eq:pmn_minimum} fits the framework of these studies: 
the energy function given in eq.~\eqref{eq:def_energy} defines a model known in the statistical physics literature as the 
\emph{spherical perceptron} (sometimes referred to as the Gardner--Derrida perceptron \cite{gardner1988optimal} when $E_\bW(\bx)$ is given by eq.~\eqref{eq:def_energy}). 
We consider this perhaps unexpected point of view 
on injectivity of random layers in neural networks.

\paragraph{Cover's theorem and the bound $\alpha_\inj \geq 3$ --}
Cover's theorem \cite{cover1965geometrical} leads to a first natural bound for $\alpha_\inj$. 
It implies that for $\alpha < 2$, there exists with high probability (as $n \to \infty$) $\bx \in \mcS^{n-1}$ s.t.\ $E_\bW(\bx) = 0$ (that is, the constraint satisfaction problem $E_\bW(\bx) = 0$ is satisfiable w.h.p.).
One can easily deduce from this that $\alpha_\inj \geq 3$:
\begin{lemma}[Cover's lower bound for injectivity]\label{lemma:cover}
    \noindent
    Assume $\alpha < 3$. Then as $n,m \to \infty$ the ReLU layer is non injective with high probability, 
    i.e., $\lim_{n \to \infty} p_{m,n} = 0$.
\end{lemma}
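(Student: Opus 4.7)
The plan is to reduce to Cover's theorem by a block decomposition of $\bW$. Given $\alpha < 3$, choose a number $\alpha_1$ with $\max(\alpha - 1, 0) < \alpha_1 < 2$, which is possible since $\alpha - 1 < 2$. Set $m_1 \coloneqq \lfloor \alpha_1 n \rfloor$ and $m_2 \coloneqq m - m_1$, so that $m_1 / n \to \alpha_1 < 2$ and $m_2 / n \to \alpha - \alpha_1 < 1$. In particular, $m_2 < n$ for $n$ large enough. Write
\begin{equation*}
    \bW = \begin{pmatrix} \bW_1 \\ \bW_2 \end{pmatrix}, \qquad \bW_1 \in \bbR^{m_1 \times n}, \quad \bW_2 \in \bbR^{m_2 \times n},
\end{equation*}
where $\bW_1$ and $\bW_2$ are independent matrices with i.i.d.\ $\mcN(0,1)$ entries.

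Now I apply Cover's theorem (as cited right before the lemma) to $\bW_1$: since $m_1 / n \to \alpha_1 < 2$, with high probability there exists some $\bx^\star \in \mcS^{n-1}$ such that $E_{\bW_1}(\bx^\star) = 0$, i.e., $(\bW_1 \bx^\star)_\mu \leq 0$ for all $\mu \in \{1, \ldots, m_1\}$. Since for any $\bx$ we have $E_\bW(\bx) = E_{\bW_1}(\bx) + E_{\bW_2}(\bx)$ and $E_{\bW_2}(\bx) \leq m_2$ trivially, it follows that on this event
\begin{equation*}
    \min_{\bx \in \mcS^{n-1}} E_\bW(\bx) \; \leq \; E_\bW(\bx^\star) \; = \; E_{\bW_2}(\bx^\star) \; \leq \; m_2 \; < \; n.
\end{equation*}
Combined with the reformulation \eqref{eq:pmn_minimum} of $p_{m,n}$, this yields $p_{m,n} \to 0$.

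There is really no hard step here once Cover's theorem is granted: the argument is essentially a counting trick exploiting the fact that the hard ReLU constraint ``$\leq n$ positive coordinates'' leaves $n-1$ rows of ``free'' slack that can absorb the extra rows beyond the $2n$ allowed by Cover. The only mild subtlety is that the definition \eqref{eq:def_energy} uses the convention $\theta(0)=0$, so Cover's theorem must be invoked in the non-strict form $(\bW_1 \bx^\star)_\mu \leq 0$; this is the standard statement, and in any case the event $(\bW_1 \bx)_\mu = 0$ for some $\mu$ has probability zero by continuity of the joint distribution. The bound is presumably far from tight, and the interesting content of the rest of the paper lies in the much subtler statistical-physics analysis needed to push $\alpha_{\mathrm{inj}}$ beyond $3$.
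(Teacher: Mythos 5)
Your proof is correct and follows essentially the same route as the paper's: decompose $\bW$ into a top block of roughly $(\alpha-1)n$ rows (or your $\alpha_1 n$ rows) on which Cover's theorem gives a witness $\bx^\star$ lying in the complementary half-space of every row, and observe that the remaining fewer-than-$n$ rows cannot push $E_\bW(\bx^\star)$ up to $n$. The paper takes $\tilde m = m-(n-1)$ directly rather than introducing an intermediate parameter $\alpha_1$, but this is cosmetic. One difference worth flagging: you invoke as a black box the consequence of Cover's theorem that the constraint $E_\bW(\bx)=0$ is satisfiable w.h.p.\ for $\alpha < 2$; the appendix proof actually derives this step from Cover's combinatorial dichotomy count via a sign-invariance averaging argument (for $\bepsilon$ uniform on $\{\pm 1\}^{\tilde m}$, the event probability is constant in $\bepsilon$ and equals $2^{1-\tilde m}\sum_{k=0}^{n-1}\binom{\tilde m - 1}{k} \to 1$). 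Finally, your remark on $\theta(0)=0$ is slightly backwards: Cover's theorem, with the convention $\sign(0)=0$, produces an $\bx^\star$ with $\bW_\mu\cdot\bx^\star<0$ strictly, which is stronger than the $\leq 0$ needed here, so there is no subtlety to worry about.
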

Such arguments are classical, and we detail the proof of Lemma~\ref{lemma:cover} for completeness in Appendix~\ref{subsec_app:proof_cover}\footnote{In a nutshell, by Lemma \ref{lemma:cover} there is always an $\bx$ at obtuse angle with the top $2n$ rows of $\bW$. Even if all the remaining $m - 2n$ rows form acute angles with $\bx$, we need at least $n$ such rows for injectivity.}.
Results about the perceptron based on Cover's theorem were greatly extended by Gardner and Derrida \cite{gardner1988space,gardner1988optimal} using non-rigorous tools, and then later rigorously justified by Scherbina and Tirozzi \cite{shcherbina2002volume,shcherbina2003rigorous} and Stojnic \cite{stojnic2013another}. 
In the constraint satisfaction problem (CSP) view on the perceptron, $\alpha = 2$ is sometimes referred to as the \emph{Gardner capacity}, which marks the limit between the satisfiable (SAT) and unsatisfiable (UNSAT) phases.

\paragraph{Thermal relaxation: the Gibbs--Boltzmann distribution --}
Statistical physicists characterize the landscape of the (random) energy function $E_\bW(\bx)$ by considering the 
\emph{Gibbs--Boltzmann} distribution $\bbP_{\beta,\bW}$, defined for any \emph{inverse temperature} $\beta \geq 0$ as
\begin{align}\label{eq:def_Gibbs}
    \rd \bbP_{\beta,\bW}(\bx) &\coloneqq \frac{1}{\mcZ_n(\bW, \beta)} e^{-\beta E_\bW(\bx)}\mu_n(\rd \bx). \hspace{1cm} (\bx \in \mcS^{n-1})
\end{align}
Informally, the parameter $\beta \geq 0$ interpolates between two extremes: the infinite-temperature ($\beta = 0$) regime, in which the Gibbs measure is uniform on the sphere, and 
the zero-temperature ($\beta \to \infty$) limit, in which the Gibbs measure is concentrated on the global minima of the energy function $E_\bW(\bx)$.
Studying the properties of the Gibbs measure for $n \to \infty$ at various $\beta$ (remaining finite when $n \to \infty$) yields deep insight about the landscape of the corresponding energy function \cite{ellis2006entropy}\footnote{The Gibbs distribution is also the invariant measure of stochastic optimization procedures such as Langevin dynamics.}.
In particular, many of our results will be based on an analysis of the large $n$ limit of the \emph{free entropy}, which is defined as the 
the logarithm of the normalization in eq.~\eqref{eq:def_Gibbs}:
\begin{align}\label{eq:def_phi}
    \Phi_n(\bW,\beta) &\coloneqq \frac{1}{n} \log \mcZ_n(\bW, \beta) = \frac{1}{n} \log \int_{\mcS^{n-1}} \mu_n(\rd \bx) \, e^{-\beta E_\bW(\bx)}.
\end{align}

\paragraph{Universality of the free entropy --}
Following classical arguments based on the Lindeberg exchange method \cite{chatterjee2006generalization}, one can
show that the free entropy $\Phi(\alpha, \beta)$ is universal for all matrices $\bW$ 
with independent zero-mean entries with unit variance and uniformly bounded third moment.
In particular, all our conjectures and theorems on the free entropy can be stated in this more general case.
We note that in a recent line of work, similar universality properties have been generalized to matrices with independent rows (see, e.g., 
\cite{montanari2022universality,gerace2024gaussian}
and references therein) under a ``one-dimensional CLT'' condition. 
In particular, \cite{gerace2024gaussian} conjectures that the ground state energy $f^\star(\alpha) = \lim_{n \to \infty} \{\min_\bx e_\bW(\bx)\}$ (shown in Fig.~\ref{fig:chi_estar_T0}) 
is universal with respect to the distribution of $\bW$ in a much wider class than matrices with independent elements: we leave the investigation of this conjecture and its implications on injectivity for future work.

\subsection{Related work}\label{subsec:related_work}

\paragraph{Average Euler characteristic prediction --}
We follow here closely the presentation of \cite{paleka2021injectivity} (see also \cite{clum2022topics}).
Proposition~\ref{prop:injectivity_random_intersection} is reminiscent of the kinematic formulas in integral geometry
\cite{schneider2008stochastic}, that
allow to compute expressions of the type $\EE[F(V \cap C)]$, when $V$ is a uniformly-sampled random $n$-dimensional subspace, and
\begin{itemize}
    \item[$(i)$] $C$ is a finite union of convex cones. 
    \item[$(ii)$] $F$ is an additive function, i.e.,\ it satisfies for any $A, B \subseteq \bbR^{m}$ that $F(A \cup B) + F(A \cap B) = F(A) + F(B)$.
\end{itemize}
Recall that we can write eq.~\eqref{eq:pmn_random_intersection} as $p_{m,n} = \EE[\indi_\mcS(V \cap C_{m,n})]$, with $\indi_\mcS(A) \coloneqq \indi\{A \cap \mcS^{m-1} \neq \emptyset\}$ the indicator function of the 
sphere.
While $C_{m,n}$ is indeed a finite union of orthants (and thus of convex cones), $\indi_\mcS$ is not additive.
However, it follows from Groemer's extension theorem \cite{schneider2008stochastic} that 
the unique additive function defined on finite unions of convex cones to agree with $\indi_\mcS$ on convex cones 
is the (spherical) Euler characteristic $\chi_\mcS(A) \coloneqq \chi(A \cap \mcS^{m-1})$.
A possible heuristic is thus to approximate $p_{m,n} = \EE[\indi_\mcS(V \cap C_{m,n})]$ by
\begin{align}\label{eq:def_qmn}
    q_{m,n} \coloneqq \EE[\chi_\mcS(V \cap C_{m,n})],
\end{align}
in order to apply the kinematic formulas.
We refer to \cite{paleka2021injectivity} for more discussion on the validity of this heuristic.
In particular, let us note that this strategy has also been used to estimate 
the probability of excursions of random fields, 
see \cite{adler2007random}.
Using the kinematic formulas, one can obtain an explicit formula for $q_{m,n}$. 
Estimating its limit as $n, m \to \infty$ is involved, and a non-rigorous calculation performed in \cite{paleka2021injectivity}
leads to the conjecture: 
\begin{align}\label{eq:conj_lim_qmn}
    \begin{dcases}
        \limsup_{n \to \infty} \frac{1}{n} \log q_{m,n} < 0 & \textrm{ for } \alpha < \alpha_\inj^\Eul , \\
        \liminf_{n \to \infty} \frac{1}{n} \log q_{m,n} > 0 & \textrm{ for } \alpha > \alpha_\inj^\Eul,
    \end{dcases}
\end{align}
for a sharp threshold $\alpha_\inj^\Eul \simeq 8.34$, which we will call the average Euler characteristic prediction for injectivity.
Checking the validity of this heuristic approach as a prediction for the behavior of $p_{m,n}$ was one of the motivations of our work.

\paragraph{Physics and mathematics of the perceptron --}
Motivated in particular by the relation of the perceptron to continuous constraint satisfaction problems (e.g.\ to soft sphere packing),
studies of the spherical perceptron in physics and mathematics are numerous. Without aiming at being exhaustive, and rather primarily referring to works relevant for our presentation,
these studies include \cite{gardner1988optimal,gardner1988space,franz2017universality} in the physics literature,
while the spherical perceptron has also been studied with mathematically rigorous techniques \cite{shcherbina2002volume}, \cite[Chapter 3]{talagrand2010mean}, \cite[Chapter 8]{talagrand2011mean}, \cite{stojnic2013another,stojnic2013negative,montanari2024tractability}.
In particular, the satisfiability threshold $\alpha = 2$ has been rigorously determined. 
The techniques however do not apply to the unsatisfiable (UNSAT) regime, which is the one that is relevant in this paper. One reason for this is that the satisfiability question can be formulated in terms of a convex Hamiltonian, while in the unsatisfiable regime one is interested in a Hamiltonian given by the number of half-spaces a point is contained in, which is not convex. This precludes the straightforward use of these rigorous techniques to study the injectivity question. A rigorous sharp characterization of the unsatisfiable phase remains an important open problem.
We refer to \cite{bolthausen2022gardner} for a summary of current advances on the spherical perceptron, from both the physics and the mathematics points of view.

\paragraph{Other related work --}
Puthawala et al.\ derived a suite of results on injectivity of neural networks, including a simple analysis of random ReLU layers \cite{puthawala2022globally}. 
By combining ideas related to Cover's theorem with union bounds over row selections from $\bW$ and concentration of measure, they proved upper and lower bounds on the injectivity threshold, the upper bound being later improved by Paleka \cite{paleka2021injectivity} and Clum \cite{clum2022topics}. 
We summarize them in the following theorem: 
\begin{theorem}[Known bounds for injectivity \cite{puthawala2022globally,paleka2021injectivity,clum2022topics}]\label{thm:known_bounds}
    \begin{equation*}
            \Big(\alpha \leq 3.3 \Rightarrow \lim_{n \to \infty} p_{m,n} = 0\Big) \quad \textrm{and} \quad \Big(\alpha \geq 9.091 \Rightarrow \lim_{n \to \infty} p_{m,n} = 1\Big).
    \end{equation*}
\end{theorem}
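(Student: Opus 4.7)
The plan is to prove the two implications separately. Both rely on Proposition~\ref{prop:injectivity_random_intersection}, but use complementary tools: a block-splitting refinement of Cover's theorem for the lower bound, and a union bound over orthants via Wendel's formula for the upper bound.

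For the lower bound ($\alpha \leq 3.3 \Rightarrow p_{m,n} \to 0$), I would extend the construction of Lemma~\ref{lemma:cover} by exploiting the independence of two row blocks. Fix $\epsilon > 0$, and split $\bW = (\bW^{(1)}, \bW^{(2)})$ into an upper block of $m_1 = (2-\epsilon)n$ rows and its complement of size $m_2 = m - m_1$. Cover's theorem guarantees that, with probability $1 - e^{-\Omega(n)}$, there exists $\bx^\star \in \mcS^{n-1}$ (chosen measurably as a function of $\bW^{(1)}$) with $(\bW^{(1)} \bx^\star)_\mu \leq 0$ for every $\mu \leq m_1$. Because the two blocks are independent, conditional on $\bW^{(1)}$ (and hence on $\bx^\star$), the entries $(\bW^{(2)} \bx^\star)_\mu$ are iid $\mathcal{N}(0,1)$, so
\begin{equation*}
    E_\bW(\bx^\star) \;=\; 0 \;+\; \sum_{\mu > m_1} \theta\big((\bW^{(2)} \bx^\star)_\mu\big) \;\sim\; \mathrm{Bin}(m_2, 1/2),
\end{equation*}
which concentrates at $m_2/2$. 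Thus with high probability $E_\bW(\bx^\star) < n$ whenever $m_2/2 < n$, i.e., $\alpha < 4 - \epsilon$; since $\epsilon$ is arbitrary this yields $\alpha_\inj \geq 4$, a fortiori $\alpha_\inj \geq 3.3$. Matching the stated constant $3.3$ verbatim amounts to following the more conservative counting of \cite{puthawala2022globally,paleka2021injectivity,clum2022topics}, which gives up this independence and so settles for the weaker constant.

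For the upper bound ($\alpha \geq 9.091 \Rightarrow p_{m,n} \to 1$), decompose $C_{m,n}$ into disjoint orthants indexed by the ``positive'' support,
\begin{equation*}
    C_{m,n} \;=\; \bigsqcup_{k=0}^{n-1} \bigsqcup_{\substack{S \subseteq [m] \\ |S|=k}} O_S, \qquad O_S \coloneqq \{\by \in \bbR^m : y_\mu > 0 \textrm{ for } \mu \in S,\; y_\mu \leq 0 \textrm{ otherwise}\}.
\end{equation*}
Proposition~\ref{prop:injectivity_random_intersection} and a union bound then give
\begin{equation*}
    1 - p_{m,n} \;\leq\; \sum_{k=0}^{n-1} \binom{m}{k} \, \bbP\big[V \cap O_{[k]} \cap \mcS^{m-1} \neq \emptyset\big],
\end{equation*}
with the inner probability depending only on $k$ by rotational invariance of $V$. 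Since $V$ is the column span of an $m \times n$ Gaussian matrix, this probability is exactly the Cover/Wendel chance of linearly realizing a prescribed $\{+,-\}$ sign pattern, equal to $2^{1-m} \sum_{j=0}^{n-1} \binom{m-1}{j}$, independently of $k$. Stirling gives $\sum_{k=0}^{n-1} \binom{m}{k} \leq 2^{m H(1/\alpha) + \smallO(m)}$ and $2^{1-m}\sum_{j=0}^{n-1}\binom{m-1}{j} \leq 2^{-m(1 - H(1/\alpha)) + \smallO(m)}$, so the right-hand side is at most $\exp\{m(2 H(1/\alpha) - 1)\log 2 + \smallO(m)\}$, which vanishes precisely when $2 H(1/\alpha) < 1$, i.e., when $\alpha > 1/H^{-1}(1/2) \approx 9.0908$; rounding up yields $\alpha \geq 9.091$.

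The main obstacle is the extraction of the sharp numerical constants. On the upper-bound side this is a clean one-dimensional optimization of the binary entropy function, the only subtle point being to check that the $k$-sum is really dominated by the extreme regime encoded in $H(1/\alpha)$ and that no lower-order Stirling corrections spoil the estimate. On the lower-bound side the block-splitting sketched above already delivers $\alpha_\inj \geq 4 > 3.3$, so the theorem's stated constant is conservative; faithfully reproducing $3.3$ amounts to following the explicit counting in the cited references, which is algebraically more involved but conceptually identical. Routine concentration upgrades (Gaussian concentration on the sphere for the Cover existence event, Chernoff bounds for the Binomial) turn expected counts into high-probability statements throughout.
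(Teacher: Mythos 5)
The paper does not prove this theorem; it is stated as a summary of results from the cited references and no proof appears in the appendix, so there is no in-paper argument to compare against. Your reconstruction is essentially correct, and both halves are reasonable renditions of what the literature does. For the upper bound, the disjoint orthant decomposition of $C_{m,n}$, the reduction of each $\bbP[V \cap O_{[k]} \cap \mcS^{m-1} \neq \emptyset]$ to the Cover/Wendel count by sign and rotation invariance, and the binary-entropy Stirling bound on the double binomial sum is the expected route and recovers the $\approx 9.09$ threshold. The one point to spell out is that Cover's formula counts \emph{strict} sign patterns, while your $O_{[k]}$ allows $y_\mu \leq 0$ off the support; a perturbation/measure-zero argument of the kind used in Lemma~\ref{lemma:technical_equivalence_classes} is needed to identify the two probabilities.

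The lower-bound half diverges, and you should notice that it proves more than the theorem states. Lemma~\ref{lemma:cover} in the paper charges every remaining row as positive after Cover's theorem is applied, giving only $\alpha_\inj \geq 3$, and the cited constant $3.3$ presumably comes from a variant of this counting. Your block split -- take $m_1 = (2-\varepsilon)n$ rows, find $\bx^\star$ measurable in $\bW^{(1)}$ making them all nonpositive via Cover and sign invariance, then observe that conditionally on $\bW^{(1)}$ the coordinates of $\bW^{(2)}\bx^\star$ are i.i.d.\ $\mcN(0,1)$ so the positive count concentrates at $m_2/2$ -- yields $\alpha_\inj \geq 4$. I have checked the conditional independence (valid since $\bx^\star$ depends only on $\bW^{(1)}$ and $\|\bx^\star\|_2 = 1$), the measurable selection of $\bx^\star$, the sign-invariance identity $\bbP[\text{all-negative realizable for } m_1 \text{ rows}] = \bbP[\mathrm{Bin}(m_1-1, 1/2) \leq n-1]$, and the binomial concentration, and I cannot find a gap. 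It is unusual for a blind reconstruction to strictly improve the quoted constant, so write the $\varepsilon$-bookkeeping out carefully and cross-check the cited sources before making any claim in that direction; but as sketched the argument is sound and, a fortiori, proves the stated $\alpha \leq 3.3 \Rightarrow p_{m,n} \to 0$.
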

By Proposition \ref{prop:injectivity_random_intersection}, the injectivity threshold can be characterized as a phase transition in the probability that a random subspace intersects a certain union of orthants. 
Similar characterizations arise in the study of convex relaxations of sparse linear regression and other high-dimensional convex optimization problems with random data.
Amelunxen et al.\ connect the probability of success of these optimization problems to random convex constraint satisfaction problems, namely the probability that two random convex cones have a common ray \cite{amelunxen2014living}.
They prove that this probability exhibits a sharp phase transition in terms of scalar values known as the \emph{statistical dimension} of the cones.
Unfortunately, these results are limited to convex cones, whereas the union of orthants from Proposition \ref{prop:injectivity_random_intersection} is non-convex.

\subsection{Main results}\label{subsec:summary_results}

Recall that we study a high-dimensional regime in which $n \to \infty$ 
and $m = m(n)$ satisfies $m(n)/n \to \alpha > 0$. 
We will sometimes use the notation $\alpha_n \coloneqq m(n)/n$.
The proofs of the rigorous statements in this section are given in Appendix~\ref{sec_app:proofs}.

\subsubsection{Relating the free entropy to injectivity}

Our starting point is eq.~\eqref{eq:pmn_minimum} in Proposition~\ref{prop:injectivity_random_intersection} (recall that $e_\bW(\bx) = E_\bW(\bx) / n$):
\begin{align*}
    p_{m,n} = \bbP_{\bW} \Big[\min_{\bx \in \mcS^{n-1}} e_{\bW}(\bx) \geq 1\Big].
\end{align*}
Recall the definition of the free entropy in eq.~\eqref{eq:def_phi}.
We immediately have 
\begin{align}\label{eq:bound_Phi_energy}
    -\frac{\Phi_n(\bW,\beta)}{\beta} \geq \min_{\bx \in \mcS^{n-1}} e_{\bW}(\bx),
\end{align}
which formalizes the fact that the Gibbs distribution is a relaxation of the uniform distribution on the global minima of $E_\bW$.
Our strategy is to use eq.~\eqref{eq:bound_Phi_energy} to characterize injectivity.
This involves two challenging steps:
\begin{itemize}
    \item[$(i)$]
    Make the inequality of eq.~\eqref{eq:bound_Phi_energy} as tight as possible:
    as we explain below,
    conjecturally, when taking $n \to \infty$ and then $\beta \to \infty$, eq.~\eqref{eq:bound_Phi_energy} becomes an equality.
    While we are not able to prove this statement, we will use it to conjecture a sharp transition for injectivity in terms of the aspect ratio $\alpha$.
    Moreover, without assuming that this conjecture holds, we will also use eq.~\eqref{eq:bound_Phi_energy} 
    to prove upper bounds on the injectivity threshold. 
    
    \item[$(ii)$]
    Second, computing the large system limit $n \to \infty$ of $\Phi_n(\bW, \beta)$ on the left-hand side of eq.~\eqref{eq:bound_Phi_energy}.
    This is a central object in the physics of disordered systems, and we will provide a conjecture for its limiting value, as well as rigorous upper bounds.
    Our results leverage a long line of work combining probability theory with heuristic predictions of statistical physics.
\end{itemize}

\myskip
The following statement is classical in the theory of disordered systems and a direct consequence of 
celebrated concentration inequalities \cite{boucheron2013concentration}.
It bounds the probability that the free entropy deviates from its mean (with respect to the disorder $\bW$):
\begin{theorem}[Free entropy concentration]\label{thm:free_entropy_concentration}
    \noindent
   For any $\beta \geq 0$ and $n \geq 1$, we have,  
   \begin{align*}
    \bbP_\bW[|\Phi_n(\bW,\beta) - \EE_\bW \Phi_n(\bW,\beta)| \geq t] &\leq 2 \exp \Big\{ - \frac{n t^2}{2 \alpha_n \beta^2} \Big\}.
   \end{align*}
\end{theorem}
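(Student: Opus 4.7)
The plan is to apply a bounded-differences-type concentration inequality, viewing the $m$ rows $\bW_1,\ldots,\bW_m \in \bbR^n$ of $\bW$ as independent random blocks (the entries are iid Gaussians, so the rows are in particular independent). The unboundedness of the individual Gaussian entries is a priori a concern for applying Gaussian Lipschitz concentration directly — the free entropy is not continuous in $\bW$ in any meaningful global sense — but the ReLU/Heaviside structure saves us: each row enters the energy only through $\theta[(\bW\bx)_\mu] \in \{0,1\}$, so the dependence on any single row is essentially a bounded perturbation.

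First, I would prove a uniform bounded-differences estimate. Fix all rows except the $\mu$-th, and let $\bW^{(\mu)}$ denote the matrix obtained by replacing $\bW_\mu$ by an arbitrary $\bW'_\mu \in \bbR^n$. Because $E_\bW(\bx) = \sum_\nu \theta[(\bW\bx)_\nu]$ and only the $\mu$-th term of this sum can change, we have $|E_{\bW^{(\mu)}}(\bx) - E_\bW(\bx)| \leq 1$ for every $\bx \in \mcS^{n-1}$, uniformly in $\bW'_\mu$. Hence $e^{-\beta E_{\bW^{(\mu)}}(\bx)} \in [e^{-\beta}, e^{\beta}] \cdot e^{-\beta E_\bW(\bx)}$ pointwise on the sphere; integrating against $\mu_n$ yields $\mcZ_n(\bW^{(\mu)}, \beta)/\mcZ_n(\bW, \beta) \in [e^{-\beta}, e^{\beta}]$, and taking logs and dividing by $n$ gives
\begin{equation*}
    \big|\Phi_n(\bW^{(\mu)}, \beta) - \Phi_n(\bW, \beta)\big| \leq \frac{\beta}{n}.
\end{equation*}

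Second, I would apply Azuma--Hoeffding to the Doob martingale $M_\mu \coloneqq \EE[\Phi_n(\bW,\beta) \mid \bW_1,\ldots,\bW_\mu]$ for $\mu = 0,1,\ldots,m$, so that $M_0 = \EE_\bW \Phi_n$ and $M_m = \Phi_n$. By independence of the rows, the bounded-differences estimate from Step 1 transfers to an $L^\infty$ bound on the martingale increments: $|M_\mu - M_{\mu-1}| \leq \beta/n$ almost surely. Summing yields $\sum_{\mu=1}^m (\beta/n)^2 = m\beta^2/n^2 = \alpha_n \beta^2/n$, and Azuma--Hoeffding gives precisely
\begin{equation*}
    \bbP_\bW[|\Phi_n(\bW,\beta) - \EE_\bW \Phi_n(\bW,\beta)| \geq t] \leq 2\exp\Big\{-\frac{n t^2}{2\alpha_n \beta^2}\Big\}.
\end{equation*}
The main obstacle is conceptual rather than technical: one must recognize that even though the Gaussian entries are unbounded (which blocks a naive application of Gaussian Lipschitz concentration to the discontinuous functional $\Phi_n$), the Heaviside activation \emph{quantizes} the row-wise dependence, so the relevant Lipschitz constant with respect to the row-block decomposition is trivial and uniform. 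Once this is noticed, the proof is a direct and short application of a classical martingale inequality.
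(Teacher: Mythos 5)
Your proof is correct and takes essentially the same route as the paper: both establish a row-wise bounded-differences bound for $\Phi_n(\bW,\beta)$ and invoke a martingale concentration inequality (the paper cites McDiarmid's inequality; you unroll it into Azuma--Hoeffding applied to the Doob martingale, which is exactly how McDiarmid is proved). Note that your difference constant $\beta/n$ is sharper than the paper's $2\beta/n$; combined with McDiarmid's inequality in its standard (interval-length) form rather than the two-sided Azuma bound $|M_\mu - M_{\mu-1}|\leq c_\mu$ that you use, this would actually yield the stronger exponent $-2nt^2/(\alpha_n\beta^2)$, but the two factors of two cancel and both arguments land exactly on the stated $-nt^2/(2\alpha_n\beta^2)$.
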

Combined with the bound of eq.~\eqref{eq:bound_Phi_energy}, this already allows us to state a sufficient condition for 
non-injectivity with high probability.
We summarize this in the following corollary, proved in Appendix~\ref{subsec_app:proof_cor_sufficient_non_inj}.
\begin{corollary}[Sufficient condition for non-injectivity]\label{cor:sufficient_non_injectivity}
    \noindent
    We denote $\Phi(\alpha,\beta) = \liminf_{n \to \infty} \EE_\bW \Phi_n(\bW, \beta)$.
    It has the following properties: 
    \begin{itemize}
        \item[$(i)$] $\beta \mapsto - \Phi(\alpha,\beta)/\beta$ is a positive non-increasing function of $\beta > 0$. 
        \item[$(ii)$] Its limit as $\beta \to \infty$ satisfies
        \begin{align*}
            \lim_{\beta \to \infty} \Big[- \frac{\Phi(\alpha,\beta)}{\beta} \Big] &< 1 \Rightarrow \lim_{n \to \infty} p_{m,n} = 0,
        \end{align*}
        that is, the limit being smaller than 1 implies non-injectivity w.h.p.\ as $n,m \to \infty$\footnotemark.
    \end{itemize}
\end{corollary}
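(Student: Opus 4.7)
The plan is to prove (i) from convexity of the log-Laplace transform together with Jensen's inequality, and to prove (ii) by combining the deterministic bound \eqref{eq:bound_Phi_energy} with the Gaussian concentration statement of Theorem~\ref{thm:free_entropy_concentration}. Neither step needs heavy machinery; the only real care is the order of quantifiers.

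For (i), I would first note that for each realization of $\bW$, the map $\beta \mapsto \Phi_n(\bW,\beta)$ is the normalized log-Laplace transform of the non-negative random variable $E_\bW(\bx)$ under $\mu_n(\rd\bx)$, hence is convex in $\beta$ with $\Phi_n(\bW, 0) = 0$. A convex function vanishing at the origin has a non-decreasing secant slope from $0$, so $\beta \mapsto \Phi_n(\bW,\beta)/\beta$ is non-decreasing on $\beta > 0$. Applying $\EE_\bW$ and then $\liminf_n$ (using $a_n \leq b_n \Rightarrow \liminf a_n \leq \liminf b_n$) preserves this inequality, giving the non-increasing property of $-\Phi(\alpha,\beta)/\beta$. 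For strict positivity, I would use Jensen to write $\EE_\bW \Phi_n(\bW,\beta) \leq n^{-1}\log \EE_\bW \mcZ_n(\bW,\beta)$ and then exploit Gaussian rotation invariance: for each fixed $\bx \in \mcS^{n-1}$ and each row index $\mu$, one has $(\bW\bx)_\mu \sim \mathcal{N}(0,1)$, so $\EE_\bW e^{-\beta \theta((\bW\bx)_\mu)} = (1+e^{-\beta})/2$. Factoring over the $m$ independent rows yields $\EE_\bW \Phi_n(\bW,\beta) \leq \alpha_n \log \frac{1+e^{-\beta}}{2} < 0$ for every $\beta > 0$, so $\Phi(\alpha,\beta) < 0$ and $-\Phi(\alpha,\beta)/\beta > 0$.

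For (ii), set $L \coloneqq \lim_{\beta \to \infty}[-\Phi(\alpha,\beta)/\beta] < 1$ and fix $\epsilon \in (0,(1-L)/4)$. By hypothesis I can choose $\beta$ large enough that $-\Phi(\alpha,\beta)/\beta < L + \epsilon$. By the definition of $\liminf$, for all sufficiently large $n$ one has $\EE_\bW \Phi_n(\bW,\beta) \geq \Phi(\alpha,\beta) - \epsilon\beta$. Applying Theorem~\ref{thm:free_entropy_concentration} with $t = \epsilon\beta$, one also has $\Phi_n(\bW,\beta) \geq \EE_\bW \Phi_n(\bW,\beta) - \epsilon\beta$ with probability at least $1 - 2\exp(-n\epsilon^2/(2\alpha_n))$, which tends to $1$. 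On the intersection of these two events, the deterministic bound \eqref{eq:bound_Phi_energy} yields
\begin{equation*}
    \min_{\bx \in \mcS^{n-1}} e_\bW(\bx) \;\leq\; -\frac{\Phi_n(\bW,\beta)}{\beta} \;\leq\; -\frac{\Phi(\alpha,\beta)}{\beta} + 2\epsilon \;<\; L + 3\epsilon \;<\; 1,
\end{equation*}
so $\min_\bx E_\bW(\bx) < n$, and by \eqref{eq:pmn_minimum} the map $\varphi_\bW$ fails to be injective. This event has probability tending to $1$, proving $p_{m,n} \to 0$.

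I do not anticipate any substantive obstacle; the proof is essentially bookkeeping once one has the concentration inequality. The only delicate point is tracking quantifiers (fix $\epsilon$, then select $\beta$, only afterwards let $n \to \infty$) and observing that the $\beta^{-2}$ factor appearing in the concentration exponent is harmless precisely because $\beta$ is held fixed while $n$ diverges.
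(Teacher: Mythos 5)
Your proof is correct. For part $(ii)$ you take essentially the same route as the paper: combine the deterministic bound \eqref{eq:bound_Phi_energy} with McDiarmid concentration (Theorem~\ref{thm:free_entropy_concentration}) at a fixed, sufficiently large $\beta$, then send $n\to\infty$; the quantifier bookkeeping is identical in substance, differing only in whether one tracks $\epsilon$ or $\delta$. For part $(i)$, however, you take a genuinely different and arguably cleaner route. The paper differentiates $\beta\mapsto \EE_\bW \Phi_n(\bW,\beta)/\beta$, identifying $\beta^2\partial_\beta[\EE_\bW\Phi_n/\beta]$ with the expected relative entropy $\EE_\bW D_{\rm KL}(\bbP_{\beta,\bW}\Vert\mu_n)\geq 0$. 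You instead observe that $\beta\mapsto\Phi_n(\bW,\beta)$ is a (normalized) log-Laplace transform, hence convex, with $\Phi_n(\bW,0)=0$, so the secant slope $\Phi_n(\bW,\beta)/\beta$ is non-decreasing; this survives $\EE_\bW$ and $\liminf_n$ by elementary monotonicity. Your argument avoids any differentiation under the expectation sign, which the paper dispatches with ``a dominated convergence argument.'' Moreover, for positivity the paper only invokes $\Phi_n(\bW,\beta)\leq 0$ (from $E_\bW\geq 0$), which gives $-\Phi(\alpha,\beta)/\beta\geq 0$ but not strict positivity as claimed; your Jensen/annealed argument $\EE_\bW\Phi_n\leq\alpha_n\log\frac{1+e^{-\beta}}{2}<0$ in fact establishes the strict inequality, so your treatment of this point is actually tighter than the paper's. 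The only cosmetic remark is that you refer to ``the intersection of these two events'' when the first is a deterministic statement about $n$ being large, not a random event; the meaning is nevertheless unambiguous.
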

\footnotetext{The proof actually shows that $p_{m,n}$ goes to zero exponentially fast in $n$, see eq.~\eqref{eq:bound_pmn}.}

\paragraph{Existence of the limit --} 
While we expect the limit of $\EE_\bW \Phi_n(\bW, \beta)$ as $n \to \infty$ to exist, or, in other words, $\Phi(\alpha, \beta)$ to be defined not only as a $\liminf$, this fact is far from trivial.
In the spin glass literature, this has historically been shown using interpolation methods due to Guerra, by showing sub-additivity of the free entropy in the system size 
\cite{guerra2002thermodynamic,talagrand2010mean} for mean-field spin glass models possessing certain convexity properties.
Guerra's technique, however, fails beyond this setting, e.g.\ in bipartite (or other multi-species) spin glass models \cite{panchenko2015free}.
On the other hand, even in some mean-field spin glasses, including spherical $p$-spins,
the existence of the limit was only shown as a corollary of the much stronger asymptotically tight two-sided bound allowing to precisely relate the value of the limit to the Parisi formula, i.e.\ the prediction of statistical physics \cite{talagrand2006free,chen2013aizenman}\footnote{However an approximate sub-additivity property has recently been shown to be enough to deduce the convergence of the free entropy in this case \cite{subag2022convergence}.}.
In the spherical perceptron we consider here, the existence of this limit is, to the best of our knowledge, still a conjecture.

\myskip
Following the statistical physics intuition about the asymptotic tightness of eq.~\eqref{eq:bound_Phi_energy}, we conjecture the following.
\begin{conjecture}[Tightness of the free entropy bound]\label{conj:tightness_criterion}
    \noindent
        The bound of Corollary~\ref{cor:sufficient_non_injectivity} is tight, i.e.,
        \begin{align*}
            \begin{dcases}
            \lim_{\beta \to \infty} \Big[- \frac{\Phi(\alpha,\beta)}{\beta}\Big] &< 1 \Rightarrow \lim_{n \to \infty} p_{m,n} = 0, \\
            \lim_{\beta \to \infty} \Big[- \frac{\Phi(\alpha,\beta)}{\beta}\Big] &> 1 \Rightarrow \lim_{n \to \infty} p_{m,n} = 1.
            \end{dcases}
        \end{align*}
\end{conjecture}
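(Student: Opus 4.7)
The direction ``$L(\alpha) < 1 \Rightarrow p_{m,n} \to 0$'', with $L(\alpha) := \lim_{\beta \to \infty}[-\Phi(\alpha,\beta)/\beta]$, is already Corollary~\ref{cor:sufficient_non_injectivity}, so the plan focuses on the converse ``$L(\alpha) > 1 \Rightarrow p_{m,n} \to 1$''. The natural route is to establish the zero-temperature identity $L(\alpha) = f^\star(\alpha)$, where $f^\star(\alpha) = \lim_{n \to \infty} \EE_\bW \min_\bx e_\bW(\bx)$ is the ground-state energy per variable of the spherical perceptron. Once this identity is in hand, a bounded-differences argument on the rows of $\bW$ (in the spirit of Theorem~\ref{thm:free_entropy_concentration}) shows that $\min_\bx e_\bW(\bx)$ concentrates around $f^\star(\alpha)$; so if $L(\alpha) > 1$ then $\min_\bx E_\bW(\bx) > n$ w.h.p., and eq.~\eqref{eq:pmn_minimum} yields $p_{m,n} \to 1$.

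\textbf{The easy inequality} $L(\alpha) \geq f^\star(\alpha)$ follows by taking $\EE_\bW$ in eq.~\eqref{eq:bound_Phi_energy}, using that $-\Phi(\alpha,\beta)/\beta = \limsup_n (-\EE_\bW \Phi_n(\bW,\beta)/\beta)$ together with concentration to identify $\lim_n \EE_\bW \min_\bx e_\bW(\bx)$ with $f^\star(\alpha)$, and then sending $\beta \to \infty$; each step preserves the inequality.

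\textbf{The hard inequality} $L(\alpha) \leq f^\star(\alpha)$ is where the real work sits. Since $E_\bW$ is integer-valued, the partition function decomposes as
\begin{equation*}
\mcZ_n(\bW,\beta) = \sum_{k=k^\star}^{m} e^{-\beta k} N_k(\bW), \qquad N_k(\bW) := \mu_n\big(\{\bx \in \mcS^{n-1} : E_\bW(\bx) = k\}\big),
\end{equation*}
with $k^\star = \min_\bx E_\bW(\bx)$. Keeping only the $k = k^\star$ term yields the pointwise estimate
\begin{equation*}
-\frac{\Phi_n(\bW,\beta)}{\beta} \leq \frac{k^\star}{n} - \frac{1}{n\beta}\log N_{k^\star}(\bW).
\end{equation*}
The first term converges to $f^\star(\alpha)$ in probability by concentration of the ground-state energy, so the proof reduces to showing that the entropic correction vanishes after $n \to \infty$ and then $\beta \to \infty$. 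It suffices to prove a uniform bound of the form
\begin{equation*}
\limsup_{n \to \infty} \Big[-\frac{1}{n}\EE_\bW \log N_{k^\star}(\bW)\Big] \leq C < \infty,
\end{equation*}
since the correction is then at most $C/\beta \to 0$.

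\textbf{Main obstacle.} The bound on $\EE_\bW \log N_{k^\star}(\bW)$ is the crux: it amounts to lower-bounding the spherical volume of the set of ground states. A natural heuristic is to fix a minimizer $\bx^\star$ and observe that $E_\bW$ is constant on the open cone where $(\bW \bx)_\mu$ and $(\bW \bx^\star)_\mu$ share a sign for every $\mu$, giving a spherical cap of radius of order $\min_\mu |(\bW \bx^\star)_\mu|/\sqrt{n}$ around $\bx^\star$. Turning this into a sub-exponential lower bound on $N_{k^\star}$, uniformly over the random position of $\bx^\star$, is delicate because the minimizer depends in a complicated way on $\bW$; worse, the expected replica-symmetry-breaking geometry of the low-temperature Gibbs measure discussed in Section~\ref{subsec:related_work} suggests a non-trivial configurational entropy near the ground state. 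Rigorously exchanging the limits $\beta \to \infty$ and $n \to \infty$, encoded in the bound above, is thus the genuine obstruction, and explains why the statement appears as a conjecture rather than a theorem.
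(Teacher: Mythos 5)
The statement is labeled a conjecture and the paper offers no proof of it; the relevant material is the heuristic discussion in the paragraphs following Conjecture~\ref{conj:tightness_criterion} (``A generalized conjecture'' and ``A remark on discretization''). Your outline recovers the same basic structure: one direction is already Corollary~\ref{cor:sufficient_non_injectivity}; the other is equivalent to $L(\alpha) \leq f^\star(\alpha)$; and the obstruction is an entropic correction coming from the spherical volume of the ground-state level set. At the level of identifying the problem you are therefore in agreement with the paper, and you are right that this is why the statement is only a conjecture.

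Where you go astray is in presenting the estimate $\limsup_n [-\frac{1}{n}\EE_\bW \log N_{k^\star}(\bW)] \leq C < \infty$ as a condition that suffices and is merely ``delicate'' to establish. The paper's discretization remark argues that this bound is probably \emph{false}, not just hard: for a fixed $\bx \in \mcS^{n-1}$, w.h.p.\ some row satisfies $|\bW_\mu \cdot \bx| \leq 1$, so a rotation of angular size $O(1/\sqrt n)$ already flips a coordinate sign of $\bW\bx$ and changes $E_\bW(\bx)$. Level sets of $E_\bW$ are therefore expected to have angular width $O(1/\sqrt n)$ around any of their points, giving spherical measure of order $n^{-\Theta(n)}$ --- super-exponentially, not merely exponentially, small. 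That gives $-\frac{1}{n}\log N_{k^\star} = \Omega(\log n)$, so the correction in your decomposition is $\Omega(\log n / \beta)$ rather than $O(1/\beta)$, and the iterated limit $\beta\to\infty$ after $n\to\infty$ does not obviously recover $f^\star$. This is exactly why the paper concludes one would have to take $\beta = \beta(n) \gtrsim \log n$ diverging with $n$ for a discretized free entropy to track the ground state, which takes one outside the fixed-$\beta$ regime in which $\Phi(\alpha,\beta)$ is defined and computed. Your closing remarks about the $\bW$-dependence of $\bx^\star$ and RSB geometry are reasonable secondary concerns, but the real reason the route stalls is that the sub-exponential lower bound on $N_{k^\star}$ on which you hang the argument is likely simply unavailable; a proof would have to circumvent this point (e.g.\ via a coupled double limit or a different comparison quantity), not just execute your outline more carefully.
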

\paragraph{A generalized conjecture --}
Conjecture \ref{conj:tightness_criterion} is a weakened version of a more general conjecture one can make from the definition of $\Phi(\alpha,\beta)$, 
which largely motivates the study of free entropies in statistical physics.
First, assume that the limit defining $\Phi(\alpha,\beta)$ is well defined, so that $\Phi(\alpha,\beta) = \lim_{n \to \infty} \EE_\bW \Phi_n(\beta,\bW)$.
As $\beta \to \infty$, we expect the configurations that have dominating mass under the Gibbs measure of eq.~\eqref{eq:def_Gibbs} 
to have the smallest energy, i.e., to be the ground state configurations.
Therefore, the stronger conjecture that motivates our use of $\Phi(\alpha,\beta)$ to characterize injectivity is that as $\beta \to \infty$, the bound of eq.~\eqref{eq:bound_Phi_energy} is actually an equality.
In a nutshell, this conjecture can be stated as ($\plim$ denotes limit in probability):
\begin{align}\label{eq:relation_Phi_ground_state}
\lim_{\beta \to \infty} -\frac{\Phi(\alpha,\beta)}{\beta} &=
    \plim_{n \to \infty} \Big\{\min_{\bx \in \mcS^{n-1}} e_\bW(\bx) \Big\}.
\end{align}
Note that such a statement also assumes the concentration of the ground state energy on a value independent of $\bW$ as $n \to \infty$.
Generally, the concentration of the intensive energy $e_\bW(\bx)$ under the Gibbs measure at any given $\beta \geq 0$
can be deduced from the existence of the limiting free entropy and its differentiability in $\beta$ \cite{auffinger2018concentration}\footnote{Unfortunately, proving these properties often requires the full power of the so-called Parisi formula for the limit of the free entropy, 
which must first be proven as we discuss later.}.

\paragraph{A remark on discretization --}
A subtlety in establishing eq.~\eqref{eq:relation_Phi_ground_state} arises from the continuous nature of the 
variable $\bx$: one needs to discard the existence of sets with ``super-exponentially'' small volume that might contain the global minima of $e_\bW$.
In discrete models this issue is often not present. For example, replacing $\int_{\mcS^{n-1}} \mu_n(\rd \bx)$ by $2^{-n} \sum_{\bx \in \{\pm 1\}^n}$ in eq.~\eqref{eq:def_phi} yields a model called the \emph{binary (or Ising) perceptron}, for which it is easy to see that
\begin{align}\label{eq:small_temp_discrete}
    \min_{\bx \in \{\pm 1\}^n} e_{\bW}(\bx) \leq - \frac{\Phi_n(\bW, \beta)}{\beta} &\leq \min_{\bx \in \{\pm 1\}^n} e_{\bW}(\bx) + \frac{\log 2}{\beta},
\end{align}
so that the generalized conjecture of eq.~\eqref{eq:relation_Phi_ground_state} follows from the concentration and existence of the limit of the free entropy.
In our spherical model one could hope to approximate $\mcS^{n-1}$ by a sufficiently fine $\varepsilon$-net, so that the value of $\Phi_n(\bW, \beta)$ is well approximated by 
averaging over the points of this net, and such that a two-sided bound similar to eq.~\eqref{eq:small_temp_discrete} holds. 
Let us briefly describe such an approach.
Considering an arbitrary fixed vector $\bx \in \mcS^{n-1}$, it is clear that with high probability there exists $\mu \in [n]$ s.t.\ $|\bW_\mu \cdot \bx| \leq 1$\footnote{Since $\{\bW_\mu \cdot \bx\}_{\mu=1}^n \iid \mcN(0,1)$.}. 
From this, one easily deduces that there exists a small rotation $\by = \bR \bx$ of $\bx$ (in the direction of $\pm \bW_\mu / \|\bW_\mu\|$), with angle $\mcO(1/\sqrt{n})$, such that $(\by \cdot \bW_\mu)(\bx \cdot \bW_\mu) < 0$, while $\|\by - \bx\|_2 \lesssim 1/\sqrt{n}.$
This (very) rough estimation shows that
$\varepsilon \lesssim 1/\sqrt{n}$ is necessary to approximate the minimum of $e_\bW$ over $\mcS^{n-1}$ by the minimum on an Euclidean-distance net.
However it is well known that such a net needs to have cardinality at least $(1 / \varepsilon)^n$ \cite{van2014probability}. 
Thus under this discretization the term $\log 2/\beta$ in the upper bound of eq.~\eqref{eq:small_temp_discrete} becomes $\Omega(\log \varepsilon^{-1} / \beta) = \Omega(\log n/\beta)$. 
Therefore we would need to consider diverging inverse temperatures $\beta = \beta(n) \gtrsim \log n$ in the discretized system for its free entropy to provably approximate the ground state energy.
A rigorous computation of the free entropy on this net with diverging $\beta$ would be challenging: since our results are based on heuristic methods of statistical physics assuming Conjecture~\ref{conj:tightness_criterion}
(with the exception of a rigorous upper bound), we leave the analysis of a possible discretization for future work.

\subsubsection{Predictions of full replica symmetry breaking theory}

Computing $\Phi(\alpha,\beta)$ is in general intractable rigorously.
We will show in Theorem~\ref{thm:bound_Gordon} that we can still derive meaningful rigorous bounds, but before describing that result we first introduce another conjecture, stemming from non-rigorous methods of statistical physics.
This conjecture, which we call a \emph{Parisi formula} as usual in spin glass models, 
and that we derive in Section~\ref{sec:full_rsb} using the non-rigorous \emph{replica method} of statistical physics,
gives us a (heuristic) means to
\textit{exactly} compute $\Phi(\alpha,\beta)$.
\begin{conjecture}[Parisi formula]\label{conj:parisi_formula}
    \noindent
    $\Phi(\alpha,\beta)$ is given by the \emph{full replica symmetry breaking} (FRSB) prediction of statistical physics, discussed in Section~\ref{sec:full_rsb}.
    More precisely, we have $\Phi(\alpha,\beta) = \Phi_\FRSB(\alpha,\beta)$, cf.\ eq.~\eqref{eq:phi_frsb}, with the following interpretation:
    \begin{itemize}
        \item[$(i)$] We have the ``Parisi formula'':
        \begin{align}\label{eq:frsb_general}
          \Phi_\FRSB(\alpha,\beta) = \inf_{q \in \mcF} \mcP[q;\alpha,\beta],  
        \end{align}
        with $\mcF$ the set of non-decreasing functions from $[0, 1]$ to $[0,1]$, and
         $\mcP[q;\alpha,\beta]$ a functional of $q$, whose expression is given in eq.~\eqref{eq:phi_frsb}.
        \item[$(ii)$] The infimum in eq.~\eqref{eq:frsb_general} is attained at a $q^\star \in \mcF$ that is the functional inverse of the CDF of a probability distribution $\rho^\star$ on $[0,1]$, such that for any continuous bounded function $f$ we have
        \begin{align}\label{eq:overlap_distribution}
            \lim_{n \to \infty} \EE_{\bW} \Big[\EE_{(\bx, \bx') \sim \bbP_{\beta,\bW}^{\otimes 2}} f(\bx \cdot \bx')\Big] \, &= \int f(u) \, \rho^\star(\rd u),
        \end{align}
        where $\bbP_{\beta,\bW}$ is the Gibbs measure defined in eq.~\eqref{eq:def_Gibbs}.
    \end{itemize}
\end{conjecture}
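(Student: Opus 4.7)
Since this is a statistical-physics conjecture, the natural route is the non-rigorous replica method, which is what Section~\ref{sec:full_rsb} is announced to carry out. The starting point is the replica identity
\begin{align*}
\EE_\bW \Phi_n(\bW,\beta) = \lim_{k \to 0} \frac{1}{nk} \log \EE_\bW \mcZ_n(\bW,\beta)^k.
\end{align*}
For integer $k \geq 1$ one expands
\begin{align*}
\EE_\bW \mcZ_n(\bW,\beta)^k = \int \prod_{a=1}^k \mu_n(\rd \bx^a)\, \EE_\bW \exp\Big[-\beta \sum_{\mu=1}^m \sum_{a=1}^k \theta\Big(\frac{(\bW \bx^a)_\mu}{\sqrt n}\Big)\Big]
\end{align*}
and observes that, for each $\mu$, the vector $((\bW\bx^a)_\mu/\sqrt{n})_{a=1}^k$ is centered Gaussian with covariance equal to the overlap matrix $Q_{ab} \coloneqq \bx^a \cdot \bx^b / n$, so the disorder average is a function of $Q$ alone. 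Inserting delta functions fixing $Q$, introducing conjugate variables $\hat Q$, and carrying out the $\bx^a$ integrals over $(\mcS^{n-1})^k$ yields a representation $\EE_\bW \mcZ_n^k \approx \int \rd Q \, \rd \hat Q \, e^{n \mathcal{A}(Q, \hat Q; k, \alpha, \beta)}$ amenable to saddle-point analysis as $n \to \infty$.

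The next step is to apply Parisi's full replica symmetry breaking ansatz, which parameterizes the off-diagonal part of $Q$ by a hierarchy of $R+1$ levels with breakpoints $0 = m_0 \leq \cdots \leq m_{R+1} = 1$ and overlaps $q_0 \leq \cdots \leq q_R$. Under the analytic continuation $k \to 0$ the hierarchy becomes a non-decreasing step function $q : [0,1] \to [0,1]$, and taking $R \to \infty$ replaces it by an arbitrary $q \in \mcF$. The inner saddle over $\hat Q$ is solved explicitly, and the double saddle then collapses to the variational principle $\Phi_\FRSB(\alpha,\beta) = \inf_{q \in \mcF} \mcP[q;\alpha,\beta]$ of eq.~\eqref{eq:frsb_general}, with $\mcP$ taking the specific form of eq.~\eqref{eq:phi_frsb}. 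A key ingredient inside $\mcP$ is a Parisi PDE (or equivalent Ruelle-cascade expression) that encodes the contribution of the non-smooth energy $\theta(\cdot)$ at each hierarchical level.

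For statement (ii), the Parisi dictionary identifies the minimizer $q^\star$ with the quantile function of the overlap distribution: in the replica computation, two replicas land in the same ``block at level $i$'' with probability $m_{i+1} - m_i$, conferring overlap $q_i$, so the limiting law of $\bx \cdot \bx'$ under $\bbP_{\beta,\bW}^{\otimes 2}$ has CDF equal to the inverse of $q^\star$, which is exactly the content of eq.~\eqref{eq:overlap_distribution}.

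\textbf{Main obstacle.} The whole scheme rests on the two classical non-rigorous steps of the replica method: exchanging $\lim_{n \to \infty}$ with the $k \to 0$ limit, and the analytic continuation from integer $k$ to a neighborhood of $0$. As emphasized in Section~\ref{subsec:related_work}, these steps cannot currently be justified in the UNSAT regime relevant here, since the Hamiltonian $E_\bW$ is non-convex and the Guerra/Talagrand interpolation machinery does not directly apply. A secondary hurdle is to argue that only a continuous (FRSB) profile $q^\star$ --- rather than a finite $k$-step RSB one --- can satisfy the stationarity conditions at the optimum, typically via a de~Almeida--Thouless stability analysis. These are precisely the obstructions to promoting Conjecture~\ref{conj:parisi_formula} to a theorem.
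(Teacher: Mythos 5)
Your proposal follows essentially the same route as the paper: the replica trick (eq.~\eqref{eq:replica_trick}--\eqref{eq:Phir_final}), the Parisi hierarchical ansatz and its $r \to 0$ continuation (Section~\ref{subsec:rsb_discussion}), the FRSB limit leading to the Parisi PDE (Section~\ref{sec:full_rsb} and Appendix~\ref{sec_app:frsb}), and the identification of $q^\star$ with the quantile function of the overlap distribution. You also correctly flag the same obstructions the paper cites --- exchange of limits, analytic continuation, non-convexity in the UNSAT phase, and the dAT instability argument --- as the reason this remains a conjecture rather than a theorem.
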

Eq.~\eqref{eq:overlap_distribution} shows that in the Parisi formula of eq.~\eqref{eq:frsb_general}, 
the functional parameter $q \in \mcF$ can be interpreted as the \emph{average overlap distribution} of the system.
Intuitively speaking, the ``alignment'' $\bx \cdot \bx'$ of two independent draws $\bx, \bx'$ of the Gibbs measure $\bbP_{\beta, \bW}$ (sharing the same matrix $\bW$) 
will, on average, be distributed according to $\rho^\star$ as $n \to \infty$. The fact that the large-size limit of the system is characterized by this overlap distribution (called therefore an ``order parameter'' in statistical physics)
is one of the most important predictions of the replica symmetry breaking theory of Parisi, and
we will further discuss this theory in the following.

\paragraph{Rigorous approaches --}
While the most general full replica symmetry breaking framework is widely believed to yield exact predictions in the asymptotic limit,
proving these predictions is a field of probability theory in itself. Indeed, significant progress has been made 
in some mean-field spin glass models, see e.g.\ \cite{talagrand2010mean,panchenko2014parisi},
or in the context of inference problems and the study of computational-to-statistical gaps \cite{bandeira2018notes,barbier2019optimal},
but proving the validity of the replica symmetry breaking procedure in more generality remains one of the important open problems in a rigorous description of the physics of disordered systems.
In particular, in the spherical perceptron considered here, the general full-RSB prediction is still a conjecture beyond the satisfiable phase.

\myskip
Based on Conjectures~\ref{conj:tightness_criterion} and \ref{conj:parisi_formula}, we can design a statistical physics program to characterize the injectivity of the ReLU layer:
\begin{itemize}
    \item[$(i)$] For any $\beta \geq 0$, compute $\Phi(\alpha,\beta) = \lim_{n \to \infty}  \EE_\bW \log \mcZ_n(\bW, \beta)/n$, 
    as given by the Parisi formula of Conjecture~\ref{conj:parisi_formula}.
    \item[$(ii)$] Compute analytically the limit $f^\star(\alpha) \coloneqq - \lim_{\beta\to\infty} \Phi(\alpha,\beta)/\beta$.
    As we discussed above, this is a non-decreasing function of $\alpha$, and moreover $f^\star(2) = 0$.
    \item[$(iii)$] $\varphi_\bW$ is (typically) injective if $f^\star(\alpha) > 1$, and non-injective 
    if $f^\star(\alpha) < 1$. In particular, if $f^\star$ is continuous and strictly increasing (which we numerically observe), the injectivity threshold $\alpha_\inj$ is characterized by 
    \begin{align}\label{eq:criterion_alphainj}
        \alpha_\inj &= [f^\star]^{-1}(1). 
    \end{align}
\end{itemize}
We perform this procedure in detail in Section~\ref{sec:full_rsb}, 
and it yields the main result of this section.
\begin{result}[``Full-RSB'' conjecture]\label{result:frsb_result}
    \noindent
    Assume Conjectures~\ref{conj:tightness_criterion} and \ref{conj:parisi_formula} hold.
    Denote by $A$ the event ``$\varphi_\bW$ (cf.\ eq.~\eqref{eq:relu_layer}) is injective'', 
    and let $p_{m,n} = \bbP_\bW[A]$.
    There exists a constant $\alpha_\inj^{\FRSB} \in (6.6979, 6.6982)$, 
    obtained via the Full-RSB prediction of Conjecture~\ref{conj:parisi_formula}, such that: 
    \begin{itemize}[leftmargin=25pt]
        \item[$(i)$] If $\limsup_{n \to \infty} (m/n) < \alpha_\inj^\FRSB$, then $\lim_{n \to \infty} p_{m,n} = 0$.
        \item[$(ii)$] If $\liminf_{n \to \infty} (m/n) > \alpha_\inj^\FRSB$, then $\lim_{n \to \infty} p_{m,n} = 1$.
    \end{itemize}
\end{result}

\subsubsection{Additional bounds}

\paragraph{The replica hierarchy of upper bounds --}
In Conjecture~\ref{conj:parisi_formula}, the FRSB prediction is given as
\begin{align}\label{eq:def_Phi_FRSB}
     \Phi_\FRSB(\alpha,\beta) = \inf_{ q \in \mcF } \mcP[q; \alpha,\beta],
\end{align}
and we saw that the function $q: [0,1] \to [0,1]$ could be interpreted in terms of an \emph{overlap distribution}. 
Restricting the infimum to atomic overlap distributions with $k+1$ atoms (or, equivalently, letting $x \mapsto q(x)$ be a step function with $k+1$ steps)
yields a sequence of upper bounds indexed by $k \geq 0$:
\begin{align}\label{eq:hierarchy_upper_bounds_phi}
     \Phi \overset{\conj}{=} \Phi_\FRSB = \lim_{k \to \infty} \Phi_{k-\RSB} \leq \cdots \leq \Phi_{k-\RSB} \leq \cdots \leq \Phi_{\ORSB} \leq \Phi_{\RS},
\end{align}
in which the ``$k-\RSB$'' functional is given by eq.~\eqref{eq:def_Phi_FRSB}, with the infimum restricted to step functions with $k+1$ steps, and we suppressed the dependence of all quantities on $\alpha$ and $\beta$ to lighten notation. Let $k^\star$ be the smallest $k$ such that $\Phi_\FRSB(\alpha,\beta) = \Phi_{k^\star-\RSB}(\alpha,\beta)$. If $1 \leq k^\star < \infty$, we say that the system is \emph{$k^\star$-th step replica symmetry breaking};
if $k^\star = 0$ the system is called \emph{replica-symmetric} (RS); if $k^\star$ does not exist the system is said to exhibit \emph{full replica symmetry breaking}.
We will clarify the meaning of ``replica symmetry breaking'' in Section~\ref{sec:upper_bounds}.
Finally, note that by using Corollary~\ref{cor:sufficient_non_injectivity} and Conjecture~\ref{conj:tightness_criterion}, 
eq.~\eqref{eq:hierarchy_upper_bounds_phi} transfers into a hierarchy of upper bounds for the injectivity threshold: 
\begin{align}\label{eq:hierarchy_upper_bounds_alpha_inj}
    \alpha_\inj \overset{\conj}{=} \alpha_\inj^\mathrm{FRSB} \leq \cdots \leq \alpha_\inj^\mathrm{(k+1)-RSB} \leq \alpha_\inj^\mathrm{k-RSB} \leq \cdots \leq \alpha_\inj^\mathrm{1RSB} \leq \alpha_\inj^\mathrm{RS},
\end{align}
in which $\alpha_\inj^{k-\RSB}$ is the value of $\alpha$ at which $f^\star_{k-\RSB}(\alpha) \coloneqq \lim_{\beta \to \infty}[-\Phi_{k-\RSB}(\alpha,\beta)/\beta]$ crosses $1$.
In particular, as we will see in Section~\ref{sec:upper_bounds}, one can compute $\alpha_\inj^\RS \simeq 7.65$.
Note that increasing $k$ and in particular going to the full-RSB solution, which conjecturally solves the problem, only takes us further from the Euler characteristic prediction $\alpha_\inj^\mathrm{Euler} \simeq 8.34$.
In Fig.~\ref{fig:chi_estar_T0} we illustrate the predictions at the RS, 1-RSB, and Full-RSB levels.

\paragraph{Proving the replica-symmetric bound --}
We can state another rigorous characterization. 
Using Gordon's min-max theorem \cite{gordon1985some,thrampoulidis2018precise}, 
we can prove that the replica-symmetric prediction is an upper bound on the injectivity threshold:
\begin{theorem}[Replica-symmetric upper bound for the injectivity threshold]\label{thm:bound_Gordon}
    \noindent
    Assume that $\alpha > \alpha_\inj^\RS \simeq 7.65$. Then $p_{m,n} \to 1$ as $n,m \to \infty$, that is, $\varphi_\bW$ is injective w.h.p.
\end{theorem}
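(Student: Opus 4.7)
The goal is to lower-bound $\min_{\bx \in \mcS^{n-1}} E_\bW(\bx)$ with high probability by the replica-symmetric ground-state value $n \cdot f^\star_{k=0-\RSB}(\alpha)$. Since $f^\star_{k=0-\RSB}(\alpha) > 1$ precisely when $\alpha > \alpha_\inj^\RS$, this would yield $\min_\bx E_\bW(\bx) > n$ w.h.p., hence $p_{m,n} \to 1$ by Proposition~\ref{prop:injectivity_random_intersection}.

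Because $\theta(z) = \indi(z > 0)$ is neither convex nor continuous, the idea is to first replace it by a piecewise-linear lower approximation so that the resulting objective becomes a tractable convex-concave saddle point amenable to Gordon. For parameters $\gamma,\delta > 0$, the ramp $\phi_{\gamma,\delta}(z) := \min\bigl(1,\,(z-\gamma)_+/\delta\bigr)$ satisfies $\phi_{\gamma,\delta} \leq \theta$, so that
\[
\min_{\bx \in \mcS^{n-1}} E_\bW(\bx) \;\geq\; \min_{\bx \in \mcS^{n-1}} \sum_\mu \phi_{\gamma,\delta}\bigl((\bW\bx)_\mu\bigr) \;=\; \min_{\bx \in \mcS^{n-1}} \frac{S_\gamma(\bx) - S_{\gamma+\delta}(\bx)}{\delta},
\]
where $S_a(\bx) := \sum_\mu\bigl((\bW\bx)_\mu - a\bigr)_+ = \max_{\bs \in [0,1]^m} \bs^T (\bW\bx - a\mathbf{1})$ is a convex function of $\bx$ with a convex-concave saddle representation in $(\bx,\bs)$ whose $\bW$-dependence is bilinear.

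Next I would apply Gordon's inequality (in the refined CGMT form of \cite{thrampoulidis2018precise}) to the two convex problems $\min_\bx S_\gamma(\bx)$ and $\max_\bx S_{\gamma+\delta}(\bx)$ that control the difference $S_\gamma - S_{\gamma+\delta}$. Gordon replaces $\bs^T \bW \bx$ by the auxiliary Gaussian form $\bg^T \bs + \|\bs\|_2 \,\bh^T \bx$ (using $\|\bx\|_2=1$) with independent $\bg \in \bbR^m$ and $\bh \in \bbR^n$. After minimizing out $\bx$ (contributing $-\|\bs\|_2 \|\bh\|_2$), using $\|\bh\|_2 = \sqrt{n}(1+o(1))$, and integrating out $\bg$, the problem reduces to a scalar saddle-point over a few replica-symmetric order parameters (an overlap $q$ and its conjugate $\hat q$). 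Taking $\delta \to 0$ recovers the ground-state formula $f^\star_\RS(\alpha)$ derived in Section~\ref{sec:upper_bounds}, and the condition $f^\star_\RS(\alpha) > 1$ is exactly $\alpha > \alpha_\inj^\RS \simeq 7.65$.

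\paragraph{Main obstacle.} The principal difficulty is that $S_\gamma$ and $S_{\gamma+\delta}$ share the same Gaussian matrix $\bW$, so CGMT concentration must be applied jointly rather than to each term separately. I would address this by identifying the common RS order parameters of the two saddle points and verifying that their minimizers in $\bx$ concentrate at the same (deterministic) location on the sphere, using Gaussian-width estimates. The limit $\delta \to 0$ must then be taken at a rate slow enough to preserve this joint concentration but fast enough that the approximation $\phi_{\gamma,\delta} \to \indi[z \geq \gamma]$ recovers a bound matching the replica-symmetric prediction rather than a strictly weaker one.
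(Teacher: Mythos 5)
Your high-level strategy matches the paper's: replace $\theta(z)=\indi\{z>0\}$ by a piecewise-linear ramp lower bound (your $\phi_{0,\delta}$ is exactly the paper's $\ell_\delta$ of eq.~\eqref{eq:def_gdelta}), then invoke a Gaussian comparison inequality and send $\delta\to 0$ to recover eqs.~\eqref{eq:chi_RS}--\eqref{eq:fstar_RS}. The gap is in \emph{how} you propose to apply Gordon. You rewrite $\sum_\mu\phi_{\gamma,\delta}((\bW\bx)_\mu)=\delta^{-1}(S_\gamma(\bx)-S_{\gamma+\delta}(\bx))$ as a difference of two convex functions and plan to apply CGMT to $\min_\bx S_\gamma(\bx)$ and $\max_\bx S_{\gamma+\delta}(\bx)$ separately, invoking the bound $\min_\bx(f-g)\geq\min_\bx f-\max_\bx g$. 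This step cannot work: the minimizer of $S_\gamma$ wants $\bW\bx$ to be as negative as possible on all rows, while the maximizer of $S_{\gamma+\delta}$ wants the opposite, so the two optimizers sit at essentially antipodal points of the sphere and the resulting lower bound is vacuously weak (order $-m/\sqrt{n}$, not $\Theta(1)$). Your suggested fix, arguing that the two optimizers ``concentrate at the same deterministic location,'' is false for the same reason, and neither Gordon's inequality nor CGMT provides any device to couple the extremizers of two separate optimization problems that happen to share the same $\bW$. Relatedly, CGMT's two-sided statement requires a jointly convex-concave saddle; the paper's Remark~II after Proposition~\ref{prop:gaussian_minmax} notes explicitly that the converse bound is \emph{not} expected to hold here, precisely because the ground-state problem is not convex.

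The paper avoids this entirely by keeping the ramp as a single (non-convex) separable term. It introduces the auxiliary variable $\bz=\bW\bx$ with Lagrange multiplier $\blambda$ (eq.~\eqref{eq:ground_state_duality}), so the matrix appears bilinearly as $\blambda^\intercal\bW\bx$, and the remaining term $\frac1n\sum_\mu\ell_\delta(z_\mu)-\blambda^\intercal\bz$ is absorbed into the function $\psi$ of Proposition~\ref{prop:gaussian_minmax}. The key point is that the \emph{one-sided} Gordon min-max bound used there requires only compactness and continuity of $\psi$, not convexity-concavity; the non-convexity of $\ell_\delta$ is harmless on the side of the inequality one needs. The rest of the paper's argument (compactification of $\bz$ and $\blambda$, Lemma~\ref{lemma:ao_simplification}, concentration to $\mcM_\delta$, and the limit $\delta\to 0$ in Lemma~\ref{lemma:Md_limit}) is technical but standard once this structural observation is in place. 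If you wish to salvage your decomposition you would have to keep both $\max_{\bs\in[0,1]^m}$ and $\min_{\bt\in[0,1]^m}$ (from the saddle representations of $S_\gamma$ and $S_{\gamma+\delta}$) \emph{inside} a single joint saddle over $(\bx,\bs,\bt)$ with bilinear coupling $(\bs-\bt)^\intercal\bW\bx$, but that is neither simpler than nor essentially different from the paper's Lagrangian route, and the extra inner $\min_{\bt}$ breaks the min-max ordering that Gordon requires.
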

Note that for the computation of the Gardner capacity of the so-called ``positive'' perceptron, the replica-symmetric prediction has been shown to be tight also using Gordon's inequality \cite{stojnic2013another}, because 
one can rewrite the associated min-max problem using a \emph{convex function} on \emph{convex sets}.
In the unsatisfiable phase we consider here
the solution is conjecturally full-RSB, and the replica-symmetric bound of Theorem~\ref{thm:bound_Gordon} is not expected to be tight.

\myskip
Theorem~\ref{thm:bound_Gordon} is proved in Appendix~\ref{subsec_app:bound_Gordon}. It improves upon the earlier upper bounds of Theorem~\ref{thm:known_bounds}, and it disproves the Euler characteristic based threshold prediction $\alpha_\inj^\Eul \simeq 8.34$~\cite{paleka2021injectivity,clum2022topics,clum_paleka_bandeira_mixon}. 
In Appendix~\ref{sec_app:mesh}, we provide an alternative proof of Theorem~\ref{thm:bound_Gordon}, developed during the review process of this paper, 
which leverages Gordon's ``escape through a mesh'' 
theorem~\cite{gordon1988milman}.
Due to this chronology, and because Gordon's min-max theorem underpins the ``escape through a mesh'' result while potentially being better suited for future refinements (see the discussion below),
we have retained the original proof alongside the ``escape through a mesh'' approach.
Finally, let us mention two other bounds one can obtain on the injectivity threshold.

\paragraph{The annealed bound --}
A classical approach in statistical physics to upper-bound the free entropy $\Phi(\alpha,\beta)$ is an \emph{annealed} calculation. 
Namely, one uses Jensen's inequality to write 
\begin{align*}
    \EE_\bW \Phi_n(\bW,\beta) &= \frac{1}{n} \EE_\bW \log \mcZ_n(\bW, \beta) \leq \frac{1}{n} \log \EE_\bW \mcZ_n(\bW, \beta) \stackrel{n \to \infty}{\longrightarrow} \Phi_{\annealed}(\alpha,\beta).
\end{align*}
This gives us an additional upper bound $\Phi(\alpha,\beta) \leq \Phi_\annealed(\alpha,\beta)$\footnote{However, it never improves over the replica-symmetric one, since it is a general fact that $\Phi_\RS(\alpha,\beta) \leq \Phi_\annealed(\alpha,\beta)$.} and a corresponding upper bound for the injectivity threshold $\alpha_\inj \leq \alpha_\inj^\annealed$.
We leave to the reader the exercise to show $\Phi_\annealed(\alpha,\beta) = \alpha \log [(1+e^{-\beta})/2]$. In particular, we have $ -\Phi_{\annealed}(\alpha,\beta)/\beta\to 0$ as $\beta \to \infty$ for any $\alpha>0$, and therefore $\alpha_\inj^\annealed = +\infty$: here, the result of the annealed calculation is completely uninformative. 

\paragraph{An additional lower bound --}
In Fig.~\ref{fig:chi_estar_T0} we also show in green a region that is discarded for the injectivity threshold by a non-rigorous lower bound $\alpha_\inj \geq \alpha_{\AT} \simeq 5.32$, 
based on the de Almeida-Thouless criterion \cite{de1978stability} of statistical physics. We detail its origin in Section~\ref{subsec:rs}, and its calculation in Appendix~\ref{sec_app:rs_lower_bound}.
While it is not mathematically rigorous, proving it would follow from a rigorous computation of the free entropy in the ``high-temperature'' (or small $\beta$) phase in which replica symmetry is conjectured to hold. 
In many models, this turned out to be possible to handle more easily than the complete full-RSB conjecture, so we mention it for completeness.
We also note that it improves over the lower bound $\alpha_\inj \geq 3.4$ of Theorem~\ref{thm:known_bounds}.

\begin{figure}
   \centering
\includegraphics[width=\textwidth]{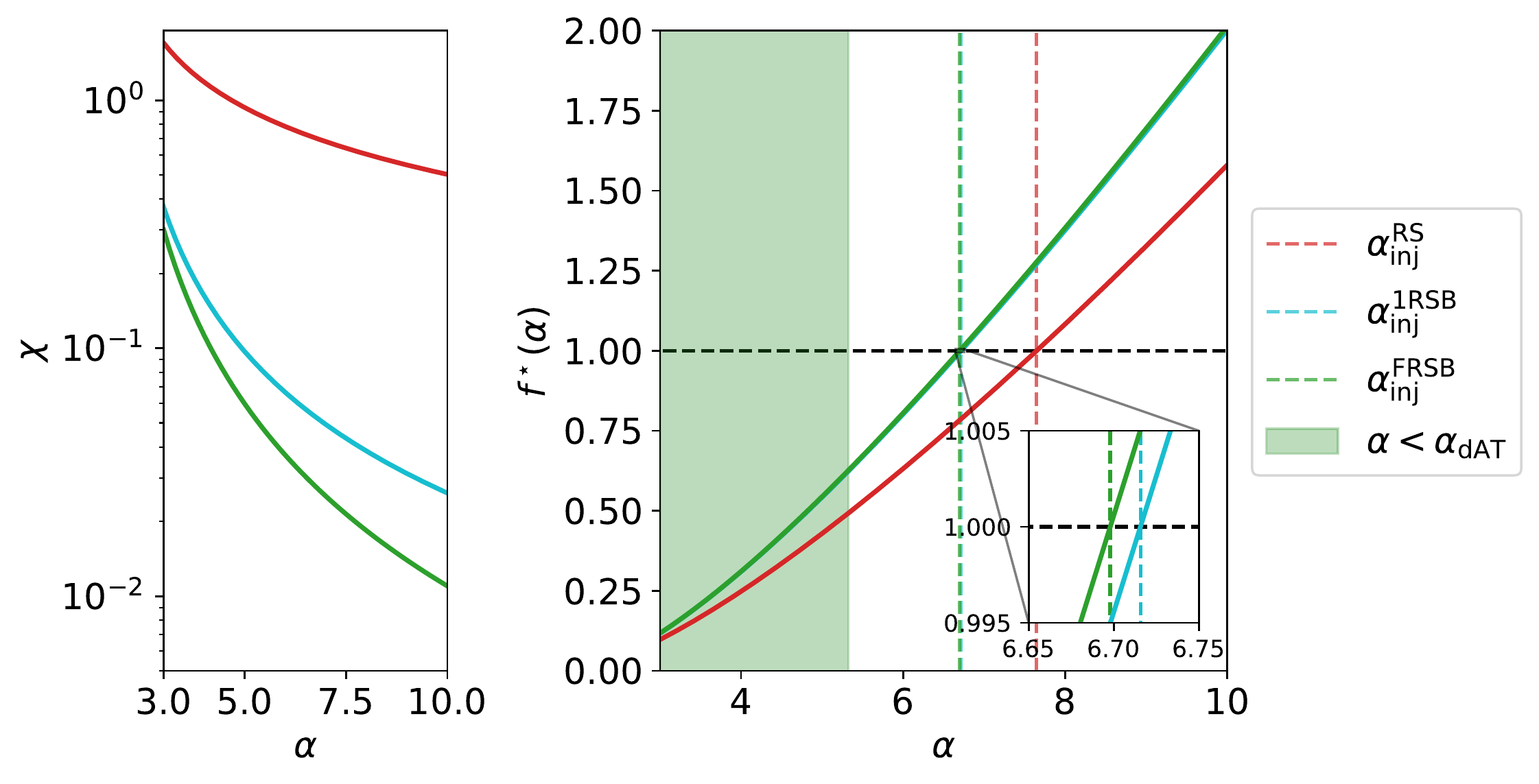}
\caption{$T=0$ limit of the RS, 1RSB and FRSB solutions, as a function of $\alpha$.
   We compare the predictions for the ground state energy $f^\star(\alpha) = \lim_{\beta \to \infty} [-\Phi(\alpha,\beta)/\beta]$ and the zero-temperature susceptibility $\chi$ (see Sections~\ref{sec:upper_bounds} and \ref{sec:full_rsb}).
   The green area is forbidden for $\alpha_\inj$ by the replica-symmetric lower bound of eq.~\eqref{eq:lower_bound_RS_stability}.
\label{fig:chi_estar_T0}}
\end{figure}

\subsection{Structure of the paper and open problems}

Section~\ref{sec:upper_bounds} has in great part a pedagogical purpose, to introduce the unacquainted reader
to the (mostly non-rigorous) results of statistical physics known in the spin glass literature under the umbrella of replica method and replica symmetry breaking. 
We detail there the replica computation in the spherical perceptron and the arising of replica symmetry breaking, 
and derive the replica symmetric and one-step replica symmetry breaking predictions for the injectivity threshold.
In Section~\ref{sec:full_rsb} we discuss the full-replica symmetry breaking prediction for the free entropy, and we derive an efficient algorithmic procedure 
to solve the zero-temperature full-RSB equations. We discuss the numerical behavior of this algorithm, and use it to derive the numerical estimate of the injectivity threshold in Result~\ref{result:frsb_result}.
As mentioned, the proofs of our rigorous results (in particular Theorem~\ref{thm:bound_Gordon}) are given in Appendix~\ref{sec_app:proofs}, 
and other analytical or numerical details and technical arguments will be deferred to the other appendices.

\myskip
Let us finally mention a few open directions that stem from our analysis.

\paragraph{Deep networks --}
A natural extension of our results would be to analyze a composition of multiple ReLU layers. Denoting still by $n$ the input dimension, Theorem~\ref{thm:bound_Gordon} guarantees 
injectivity w.h.p.\ if the size $k_L$ of the $L$-th layer satisfies $k_L > n (\alpha_\inj^\RS)^L$.
However, this is far from optimal: leveraging the structure of the image space of a ReLU layer, \cite{paleka2021injectivity,clum2022topics}
have shown that $k_L \geq n (C_1 + C_2 L \log L)$ (for some constants $C_1, C_2 > 0$) is enough to guarantee injectivity; this may be further improved using arguments based on random projections \cite{puthawala2022globally}.
An interesting open question is whether the techniques we develop here (and in particular the replica symmetry breaking framework) can be extended to predict exact injectivity transitions in the 
multi-layer case.

\paragraph{Stability of the inverse --}
While the injectivity question is limited to non-injective $\sigma$ -- such as ReLU -- in eq.~\eqref{eq:relu_layer}, 
a natural extension would be to estimate the Lipschitz constant of the inverse of $\varphi_\bW$ on its range, either 
in the injective phase we described for $\sigma = \mathrm{ReLU}$, or for any $\alpha > 0$ when $\sigma$ is injective.
Whether this question can be tackled using statistical physics tools similar to the ones we used here is an interesting open direction.

\paragraph{Improvement over Theorem~\ref{thm:bound_Gordon} --}
One can consider a closely-related model called the \emph{negative} perceptron by 
replacing $\theta(x) = \indi\{x > 0\}$ by 
$\indi\{x \geq \kappa\}$ with $\kappa < 0$ in the energy of eq.~\eqref{eq:def_energy}.
In this model, even computing the Gardner capacity conjecturally requires the full-RSB prediction. 
We note that \cite{stojnic2013negative,montanari2024tractability} have
made a refined use of Gordon's inequality to improve over the replica-symmetric upper bound for the capacity.
While similar ideas might be able to improve the upper bound of Theorem~\ref{thm:bound_Gordon}, it is not immediate to implement them, since the 
method used in \cite{montanari2024tractability} relies on the min-max problem being formulated over unit-norm vectors, which is not the case here:
we give more details on this point in Appendix~\ref{subsec_app:improvement_gordon}.
Since Theorem~\ref{thm:bound_Gordon} already allows to disprove the Euler characteristic prediction, we leave such an improvement for later work.

\paragraph{Large deviations of sublevel sets --}
The non-validity of the average Euler characteristic prediction also leads to interesting 
predictions on the energy landscape of the perceptron. Indeed, the quantity $q_{m,n}$ of eq.~\eqref{eq:def_qmn} 
is the mean Euler characteristic of a sublevel set $U$ of the perceptron, more precisely
$q_{m,n} = \EE_\bW[\chi(U)]$, with $U \coloneqq \{\bx \in \mcS^{n-1} \, : \, e_\bW(\bx) \leq 1\}$.
Recall that $\alpha_\inj^\Eul \simeq 8.34$ while $\alpha_\inj^\RS \simeq 7.65$.
According to Theorem~\ref{thm:bound_Gordon}, for all $\alpha \in (\alpha_\inj^\RS,\alpha_\inj^\Eul)$ (and conjecturally in $(\alpha_\inj^\FRSB,\alpha_\inj^\Eul)$) the set $U$ is typically empty: 
however its average Euler characteristic is exponentially large!
A possible explanation for this discrepancy is that there exist large deviations events with probability $\exp(-n I_1)$ in which the set $U$ is not only non-empty, but has Euler characteristic $\exp(n I_2)$. 
A natural conjecture is that $I_2 > I_1$ for $\alpha < \alpha_\inj^\Eul$ and $I_2 < I_1$ for 
$\alpha > \alpha_\inj^\Eul$. 
Exploring further these large deviations could thus explain the error made in the Euler characteristic approach.

\paragraph{Numerical code and reproducibility --}
All figures and numerical results in this paper are fully reproducible. The JAX \cite{jax2018github} code is available in a \href{https://github.com/AnMaillard/Injectivity_ReLu_layer}{GitHub repository} \cite{github_repo}.

\paragraph{Acknowledgments --} 
We would like to thank the anonymous referees for their feedback, which led in particular to the alternative proof of Theorem~\ref{thm:bound_Gordon} presented in Appendix~\ref{sec_app:mesh}.
A.M.\ and A.B.\ thank D.\ Paleka, C.\ Clum, and D.\ Mixon for several discussions related to this paper. A.M.\ is grateful to F.\ Krzakala, L.\ Zdeborov\'a, B.\ Loureiro and P.\ Urbani for insightful discussions. 
I.D.\ acknowledges support by the European Research Council (ERC) Starting Grant 852821---SWING.

\section{The replica hierarchy of upper bounds}\label{sec:upper_bounds}
\subsection{General principles of the replica method}

The replica method is based on the \emph{replica trick}, a heuristic use of the following formula, for any random variable $X > 0$ (assuming that all the moments written hereafter are well-defined):
\begin{align}\label{eq:replica_trick}
    \EE \log X &= \lim_{r \to 0} \frac{\EE X^r - 1}{r} = \frac{\partial}{\partial r} [\log \EE X^r]_{r = 0}.
\end{align}
While the replica trick is most often described as the first equality in eq.~\eqref{eq:replica_trick}, we will here use the second (and equivalent) equality.
Assuming that 
$\Phi(\alpha, \beta) \coloneqq \lim_{n \to \infty} \EE_\bW  \, \Phi_n(\bW, \beta)$ is well defined, we reach:
\begin{align}\label{eq:replica_trick_phi}
    \Phi(\alpha,\beta) &= \lim_{n \to \infty} \frac{\partial}{\partial r} \Big[\frac{1}{n} \log \EE_\bW \big\{\mcZ_n(\bW, \beta)^r\big\}\Big]_{r = 0}.
\end{align}
So far, eq.~\eqref{eq:replica_trick_phi} is not really surprising.
The replica method is based on several heuristics, and leverages the fact that it is often possible to compute the RHS of eq.~\eqref{eq:replica_trick_phi} 
for \emph{integer $r$}. More precisely, the replica method proceeds as follows: 

\myskip
\fbox{\begin{minipage}{0.98\textwidth}
    \textbf{Replica method}
    \begin{itemize}[leftmargin=25pt]
        \item[$(i)$] Assume that the limits $n \to \infty$ and $r \to 0$ can be inverted in eq.~\eqref{eq:replica_trick_phi}, 
        i.e.\ that we have $\Phi(\alpha,\beta) = \partial_r [\Phi(\alpha,\beta;r)]_{r = 0}$, with  
        \begin{align}\label{eq:def_Phir}
            \Phi(\alpha,\beta;r) &\coloneqq \lim_{n \to \infty} \frac{1}{n} \log \EE_\bW \big\{\mcZ_n(\bW, \beta)^r\big\}.
        \end{align}
        \item[$(ii)$] Compute $\Phi(\alpha,\beta;r)$ for \emph{integer $r$}, i.e.\ the asymptotics of the moments of $\mcZ_n(\bW, \beta)$.
        \item[$(iii)$] Use these values to analytically expand $\{\Phi(\alpha,\beta;r)\}_{r \in \bbN}$ to all $r \geq 0$.
        \item[$(iv)$] Compute $\Phi(\alpha,\beta) = \partial_r [\Phi(\alpha,\beta;r)]_{r = 0}$ from the analytic continuation above.
    \end{itemize}
\end{minipage}}

\myskip
Note that step $(i)$, although \emph{a priori} non-rigorous, can sometimes be put on rigorous ground using convexity arguments, cf.\ e.g.\ page 146 of \cite{talagrand2010mean} in the context of the Sherrington-Kirkpatrick (or SK) model.
The arguably ``most heuristic'' step is $(iii)$, as there is in general no guarantee for the uniqueness of the analytic continuation (and it is often not unique!).
The choice of the conjecturally correct continuation was proposed by Parisi in a remarkable series of papers \cite{parisi1979infinite,parisi1980order,parisi1980sequence}, one of the most important contributions for which he earned a Nobel prize in Physics in 2021, and we will describe this choice in the following sections.
In the SK model originally studied by Parisi
his prediction was ultimately proven to be correct by Talagrand \cite{talagrand2006parisi} and generalized by Panchenko \cite{panchenko2014parisi}, 
leveraging notably interpolation techniques that originated with Guerra \cite{guerra2003broken}.
An actual rigorous treatment of the replica method itself remains out of reach, 
and in the spherical perceptron considered here replica predictions have not been proven, with the exception of the satisfiable phase \cite{shcherbina2003rigorous,stojnic2013another}.

\subsection{First steps of the replica method}

Let us now perform step $(ii)$ of the replica method.
From now on, we relax the level of rigor and sometimes adopt notations closer to the theoretical physics literature, since the core of the method is heuristic.
Fixing $r \in \bbN^\star$ we have (recall eq.~\eqref{eq:def_phi}):
\begin{align}\label{eq:Phir}
    \nonumber
   \Phi(\alpha,\beta;r) &= \lim_{n \to \infty}\frac{1}{n} \log \EE_\bW \Bigg\{\Bigg(\int_{\mcS^{n-1}} \mu_n(\rd \bx) \, e^{-\beta E_\bW(\bx)}\Bigg)^r\Bigg\}, \\
                    &= \lim_{n \to \infty}\frac{1}{n} \log \int \prod_{a=1}^r \mu_n(\rd \bx^a) \, \EE_\bW \Bigg\{ \prod_{a=1}^r \exp\Big\{- \beta \sum_{\mu=1}^m \theta\Big[(\bW \bx^a)_\mu\Big]\Big\} \Bigg\}.
\end{align}
We have used Fubini's theorem in eq.~\eqref{eq:Phir}. 
We see appearing a set $\{\bx^a\}_{a=1}^r$ of independent samples from the Gibbs measure $\bbP_{\beta,\bW}$, with the \emph{same} realization of the matrix $\bW$:
we call such independent samples \emph{replicas}, following the statistical physics nomenclature.
The expectation with respect to $\bW$ in eq.~\eqref{eq:Phir} can be performed, since at fixed $\{\bx^a\}$, $\bz^a \coloneqq \bW \bx^a$ are jointly Gaussian vectors
with covariance $\EE[z^a_\mu z^b_\nu] = \delta_{\mu \nu} Q^{ab}$, where we introduced the \emph{overlap matrix} $Q^{ab} \coloneqq \bx^a \cdot \bx^b$ (note that $Q^{aa} = 1$, $Q^{ab}=Q^{ba}$, and that the matrix $\{\bx^a \cdot \bx^b\}$ is almost surely invertible under $\mu_n^{\otimes r}$).
Therefore we have:
\begin{align*}
    \EE_\bW \prod_{a=1}^r e^{- \beta \sum_{\mu=1}^m \theta[(\bW \bx^a)_\mu]} &= I_\beta(\bQ)^m,
\end{align*}
in which we defined
\begin{align}\label{eq:def_Ibeta_Q}
    I_\beta(\bQ) \coloneqq \int_{\bbR^r} \frac{\rd \bz}{(2\pi)^{r/2} \sqrt{\det \bQ}} e^{-\frac{1}{2} \bz^\intercal \bQ^{-1} \bz} e^{-\beta \sum_{a=1}^r \theta(z^a)}.
\end{align}
One can thus write eq.~\eqref{eq:Phir} as:
\begin{align*}
   \Phi(\alpha,\beta;r) &= \lim_{n \to \infty} \frac{1}{n} \log \Bigg[\int \Big\{\prod_{a< b} \rd Q^{ab}\Big\} J(\bQ) \times I_\beta(\bQ)^m \Bigg],
\end{align*}
with $J(\bQ)$ defined as the PDF of the overlap matrix $\bQ(\{\bx^a\})$ (for $\{\bx^a\} \sim \mu_n^{\otimes r}$) evaluated in $\bQ$:
\begin{align}\label{eq:def_JQ}
   J(\bQ) &\coloneqq \int \prod_{a=1}^r \mu_n(\rd \bx^a) \prod_{a < b} \delta(Q^{ab} - \bx^a \cdot \bx^b) = n^{\frac{r(r-1)}{2}} \frac{\int \prod_{a=1}^r \rd \bx^a \prod_{a \leq b} \delta(n Q^{ab} - \bx^a \cdot \bx^b)}{\int \prod_{a=1}^r \rd \bx^a  \,\delta(n - \| \bx^a \|^2)},
\end{align}
in which we used that $Q^{aa} = 1$ and we re-normalized $\bx^a$ by $\sqrt{n}$.
One way to compute the numerator in eq.~\eqref{eq:def_JQ} is to use an exponential tilting method, by the following argument: for any symmetric $\bLambda \in \bbR^{r \times r}$ positive-definite, we have
\begin{align}\label{eq:denominator_JQ}
    \nonumber
    &\frac{1}{n} \log \int \prod_{a=1}^r \rd \bx^a \prod_{a \leq b} \delta(n Q^{ab} - \bx^a \cdot \bx^b) \\
    &= \frac{1}{2} \Tr[\bLambda \bQ] + \frac{1}{n} \log \int \prod_{a=1}^r \rd \bx^a  \, 
    \prod_{a \leq b} \delta(n Q^{ab} - \bx^a \cdot \bx^b)
    \, e^{-\frac{1}{2} \sum_{a,b} \Lambda^{ab} \bx^a \cdot \bx^b}.
\end{align}
The idea is to pick $\bLambda$ so that under the probability distribution $P_\bLambda(\{\bx^a\}) \propto \exp\{-\frac{1}{2} \sum_{a,b} \Lambda^{ab} \bx^a \cdot \bx^b\}$, we have 
with high probability $\bx^a \cdot \bx^b /n \to Q^{ab}$ as $n \to \infty$.
Since $P_\bLambda$ is Gaussian, one finds $\bLambda = \bQ^{-1}$ as the correct choice.
Heuristically, the argument then goes as follows: for $\bLambda = \bQ^{-1}$, the constraint terms in eq.~\eqref{eq:denominator_JQ} are satisfied as $n \to \infty$, so that we can remove the Dirac deltas without affecting the asymptotic value of the integral.
Performing the same calculation in the denominator (for which $\bLambda = \Id_r$ is now the correct choice), one reaches:
\begin{align}\label{eq:JQ_1}
    \frac{1}{n} \log J(\bQ) &= \frac{1}{2} \Tr[\bQ^{-1} \bQ] + \log \int_{\bbR^r} \prod_{a=1}^r \rd x^a \, e^{-\frac{1}{2} \sum_{a,b} (\bQ^{-1})^{ab} x^a x^b} - \frac{r(1+\log 2\pi)}{2} + \smallO_n(1).
\end{align}
Such ``exponential tilting'' arguments can be made rigorous, and are classical e.g.\ in the theory of large deviations \cite{dembo1998large}.
Another (equivalent) way to obtain eq.~\eqref{eq:JQ_1} is to introduce the Fourier transform of the Dirac delta in eq.~\eqref{eq:def_JQ}, and perform a saddle-point method over the parameters of the Fourier integral, see e.g.\ \cite{castellani2005spin} or \cite{urbani2018statistical}.
The Gaussian integral in eq.~\eqref{eq:JQ_1} can be computed:
\begin{align*}
    \frac{1}{n} \log J(\bQ) &= \frac{1}{2} \log \det \bQ + \smallO_n(1).
\end{align*}
This yields:
\begin{align}\label{eq:Phir_before_Laplace}
   \Phi(\alpha,\beta;r) &= \lim_{n \to \infty} \frac{1}{n} \log \Bigg[\int \Big\{\prod_{a< b} \rd Q^{ab}\Big\} \exp \{n F_n(\bQ)\} \Bigg],
\end{align}
with  
\begin{align*}
    F_n(\bQ) &\coloneqq \frac{1}{2} \log \det \bQ + \alpha \log I_\beta(\bQ) + \smallO_n(1).
\end{align*}
It is crucial that in many physical models, the average of the replicated partition function can be written as in 
eq.~\eqref{eq:Phir_before_Laplace}, as a function of a low-dimensional parameter (recall that $\bQ$ is a $r \times r$ matrix, and that $r$ is a fixed positive integer).
In physics, one refers to the overlap matrix $\bQ$ as the \emph{order parameter} of the problem: a low-dimensional quantity that allows to characterize the macroscopic
behavior of our high-dimensional system (similarly to the average magnetization in a ferromagnet for instance).

\myskip
Applying Laplace's method to the integral in eq.~\eqref{eq:Phir_before_Laplace}, we finally reach:
\begin{align}\label{eq:Phir_final}
    \Phi(\alpha,\beta;r) &= \sup_{\bQ} \Big[\frac{1}{2} \log \det \bQ + \alpha \log I_\beta(\bQ)\Big],
\end{align}
where the supremum is over $r \times r$ symmetric positive-definite matrices such that $Q^{aa} = 1$, and 
recall that $I_\beta(\bQ)$ is defined in eq.~\eqref{eq:def_Ibeta_Q}.
Note that we completely removed the high dimensionality of the problem! 
The remaining task is to perform step $(iii)$ of the replica method, i.e.\ to analytically continue $\Phi(\alpha,\beta,r)$ to any $r > 0$.
This is the crucial difficulty of the replica method (and the main reason why it is ill-posed mathematically in general), which was solved by Parisi \cite{parisi1979infinite,parisi1980order,parisi1980sequence}.

\subsection{The replica-symmetric solution}\label{subsec:rs}

The functional in eq.~\eqref{eq:Phir_final} is symmetric: one can permute the different replicas of the systems (and correspondingly swap the rows and columns of $\bQ$) 
without changing the value of the functional. This has led physicists to  first assume that the supremum in eq.~\eqref{eq:Phir_final} is attained by a matrix $\bQ$ that is also invariant under permutations, i.e.\ that satisfies $Q^{ab} = q$ for all $a \neq b$.
This \emph{replica-symmetric} assumption was historically the first one considered to find a solution 
to the SK model \cite{sherrington1975solvable}.
We will see how it allows to complete the final steps of the replica method. 

\myskip 
Note that replica symmetry can be put on a firmer mathematical ground, using the following characterization.

\myskip
\fbox{\begin{minipage}{0.98\textwidth}
    \textbf{Replica symmetry (RS) --} 
    Let $Q(\bx, \bx') \coloneqq \bx \cdot \bx'$, 
    and recall the Gibbs distribution $\bbP_{\beta,\bW}$ of eq.~\eqref{eq:def_Gibbs}.
    Replica symmetry amounts to assuming that the random variable $Q(\bx,\bx')$ concentrates when $\bx,\bx'$ are sampled independently from $\bbP_{\beta,\bW}$ (with the \emph{same} $\bW$), in the following sense:
    \begin{equation}\label{eq:overlap_concentration}
        \lim_{n \to \infty} \EE_\bW \Big[ \EE_{(\bx,\bx') \sim \bbP_{\beta,\bW}^{\otimes 2}} \Big\{(Q(\bx, \bx') - \EE  Q )^2\Big\}\Big] = 0,
    \end{equation}
    with the shorthand $\EE Q \coloneqq \EE_\bW [\EE_{(\bx,\bx') \sim \bbP_{\beta,\bW}^{\otimes 2}}(Q(\bx, \bx'))]$.
\end{minipage}}

\myskip
In particular, under the RS ansatz, we can write the off-diagonal elements of the overlap matrix appearing in eq.~\eqref{eq:Phir_final} 
as $Q^{ab} = q = \EE_\bW [\EE_{(\bx,\bx') \sim \bbP_{\beta,\bW}^{\otimes 2}}(\bx \cdot \bx')]$, in which $\bx, \bx'$ are two independent samples under the Gibbs measure with 
quenched noise $\bW$ (two \emph{replicas} of the system), and $a \neq b$.
. Therefore, we also have 
$q = \EE_\bW [\norm{ \EE_{\bx \sim \bbP_{\beta,\bW}}(\bx)}^2]$, which implies in particular that $q \in [0,1]$.

\myskip 
Let us now finish the replica calculation under a replica symmetric assumption, going back to eq.~\eqref{eq:Phir_final}.
By simple linear algebra calculations, the RS ansatz implies, for all $a \neq b$:
\begin{align*}
    \begin{dcases}
        Q^{-1}_{ab} &= - \frac{q}{(1-q)[1+(r-1)q]},\\
        Q^{-1}_{aa} - Q^{-1}_{ab} &= \frac{1}{1-q},
    \end{dcases}
\end{align*}
and moreover 
\begin{align}\label{eq:detQ_RS}
    \log \det \bQ &= (r-1) \log [1-q] + \log [1+(r-1)q].
\end{align}
Plugging the form of $\bQ^{-1}$ we have:
\begin{align}
\label{eq:Gauss_weight_RS}
\nonumber
    \exp\Big\{-\frac{1}{2} \bz^\intercal \bQ^{-1} \bz\Big\} &= 
\exp\Big\{-\frac{1}{2(1-q)} \sum_{a=1}^r (z^a)^2 + \frac{q}{2 (1-q)[1+(r-1)q]} \Big(\sum_{a=1}^r z^a\Big)^2 \Big\} \\ 
&= \int \mcD \xi \exp\Big\{-\frac{1}{2(1-q)} \sum_{a=1}^r (z^a)^2 + \sqrt{\frac{q}{(1-q)[1+(r-1)q]}}  \Big(\sum_{a=1}^r z^a\Big) \xi \Big\}.
\end{align}
Recall that $\mcD \xi$ is the standard Gaussian measure on $\bbR$, 
and we have used the identity $\exp(x^2/2) = \int \mcD \xi \exp(x \xi)$.
Plugging eqs.~\eqref{eq:detQ_RS} and \eqref{eq:Gauss_weight_RS} in eq.~\eqref{eq:Phir_final} we reach, for $r \in \bbN^\star$: 
\begin{align}\label{eq:phir_RS}
   \Phi_\mathrm{RS}(\alpha,\beta;r) &= \sup_{q \in [0,1]} \Phi_\mathrm{RS}(\alpha,\beta;r, q) = \sup_{q \in [0,1]} \Big[- \frac{\alpha-1}{2} \Big[(r-1) \log [1-q] + \log [1+(r-1)q]\Big] \\ 
   &\nonumber + \alpha \log \int \mcD \xi \Bigg[\int_{\bbR} \frac{\rd z}{\sqrt{2\pi}} e^{-\frac{1}{2(1-q)} z^2 + \sqrt{\frac{q}{(1-q)[1+(r-1)q]}} z \xi -\beta \theta(z)} \Bigg]^r \Big].
\end{align}
One can now begin to see how the replica-symmetric ansatz yields an analytical continuation of $\Phi_\mathrm{RS}(\alpha,\beta;r)$ for all $r > 0$. 
A final non-trivial (and non-rigorous) technicality of the replica method, which we do not detail here, is that when we analytically expand the function above to $r < 1$,
theoretical physicists argue that maximizers of eq.~\eqref{eq:phir_RS} for $r > 1$ are continued into \emph{minima} of $\Phi_\RS(r,q)$ for $r < 1$. 
We refer to \cite{mezard1987spin} for a detailed discussion: 
in the present case this phenomenon can be easily observed numerically, see Fig.~\ref{fig:inversion_min_max}.
\begin{figure}
   \centering
\includegraphics[width=0.5\textwidth]{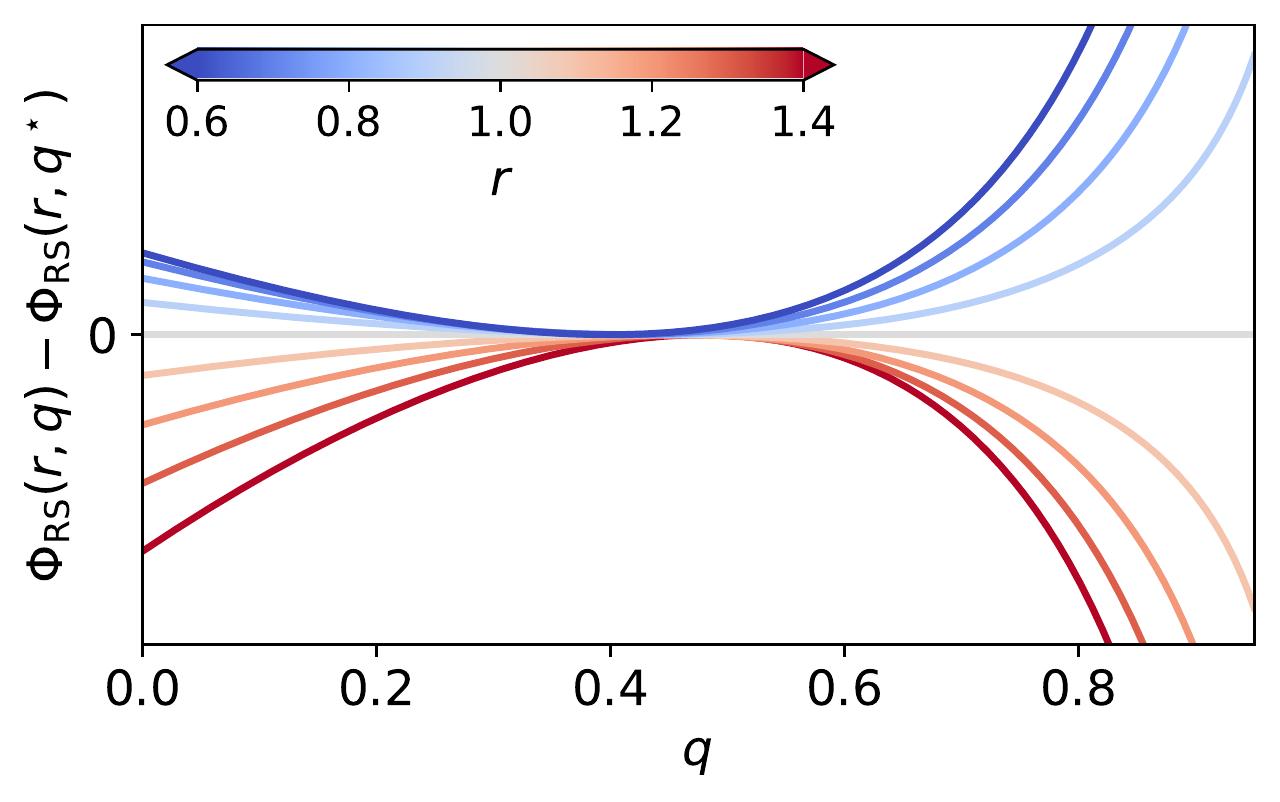}
  \caption{ 
    The function $\Phi_\RS(r, q) - \Phi_\RS(r, q^\star(r))$ as a function 
    of $q \in [0,1]$, for different values of $r$ close to $1$, and $q^\star(r)$ the unique solution to $\partial_q \Phi_\RS(r, q) = 0$. We observe that $q^\star(r)$
    is a global maximum for $r > 1$, and becomes a global minimum for $r < 1$.
    Here $\alpha = 5$ and $\beta = 1$.
        \label{fig:inversion_min_max}}
\end{figure}
Under the replica-symmetric ansatz, we therefore obtain:
\begin{align*}
    \Phi_\mathrm{RS}(\alpha,\beta) &= \partial_r [\Phi_\mathrm{RS}(\alpha,\beta;r)]_{r=0} \nonumber \\ 
    &= \inf_{q \in [0,1]} \Big\{
    - \frac{\alpha-1}{2} \Big[\log [1-q] + \frac{q}{1-q}\Big]
    + \alpha \int \mcD \xi \log \int \frac{\rd z}{\sqrt{2\pi}} e^{-\frac{1}{2(1-q)} z^2 + \frac{\sqrt{q}}{1-q} z \xi -\beta \theta(z)} \Big\}.
\end{align*}
The inner integral is easy to work out:
\begin{align*}
     \int_{\bbR} \frac{\rd z}{\sqrt{2\pi}} e^{-\frac{1}{2(1-q)} z^2 + \frac{\sqrt{q}}{1-q} z \xi -\beta \theta(z)} &= \sqrt{1-q} \, e^{\frac{q}{2(1-q)} \xi^2} \Big[1-(1-e^{-\beta}) H\Big(-\xi \sqrt{\frac{q}{1-q}} \Big) \Big],
\end{align*}
where $H(x) \coloneqq \int_x^\infty \mcD u = [1 - \mathrm{erf}(x/\sqrt{2})]/2$. In particular, $H'(x) = - e^{-x^2/2}/\sqrt{2 \pi}$.
Then:
\begin{align}\label{eq:phi_RS}
   \Phi_\RS(\alpha,\beta) &= \inf_{q \in [0,1]} \Big\{ \frac{1}{2} \Big[\log [1-q] + \frac{q}{1-q}\Big]+ \alpha \int \mcD \xi \log \Big[1-(1-e^{-\beta}) H\Big(\xi \sqrt{\frac{q}{1-q}} \Big) \Big] \Big\}.
\end{align}
For any $\beta \geq 0$, the minimizing $q$ is thus given by the solution to
\begin{align}\label{eq:q_RS_eq_new}
    \frac{q^{3/2}}{\sqrt{1-q}} &= 
    \alpha \int \mcD \xi \frac{(1-e^{-\beta}) \xi H'\Big(\xi \sqrt{\frac{q}{1-q}} \Big)}{1-(1-e^{-\beta}) H\Big(\xi \sqrt{\frac{q}{1-q}} \Big)}
\end{align}
that minimizes the functional of eq.~\eqref{eq:phi_RS}.
The quantity $e^\star(\alpha,\beta) \coloneqq - \partial_\beta \Phi_\RS(\alpha,\beta)$ is called the \emph{average intensive energy}: 
as can be seen from eq.~\eqref{eq:def_phi}, $n e^\star(\alpha,\beta)$ is the average number of negative components of 
$\bx$ when sampled from the Gibbs measure of eq.~\eqref{eq:def_Gibbs}.
At the replica-symmetric level it is given by:
\begin{align}\label{eq:estar_beta}
    e_\RS^\star(\alpha,\beta) &= \alpha e^{-\beta} \int_\bbR \mcD \xi \frac{H\Big(\xi \sqrt{\frac{q}{1-q}} \Big)}{1-(1-e^{-\beta}) H\Big(\xi \sqrt{\frac{q}{1-q}} \Big) }.
\end{align} 
In particular, one sees that for $\beta = 0$ we have $q = 0$ and $e^\star(\alpha,\beta = 0) = \alpha/2$, which is the typical number of negative components of a random $m$-dimensional vector (divided by $n$).

\myskip
\textbf{The zero-temperature limit --}
For any $\alpha > 2$ (i.e.\ in the UNSAT phase), one can check from eq.~\eqref{eq:q_RS_eq_new} that $q \to 1$ as $\beta \to \infty$.
This means that the replica-symmetric ansatz predicts that, as $\beta \to \infty$, the Gibbs measure concentrates on the global minima of $E_\bW(\bx)$, 
and that (at fixed $\bW$) the distance between any two such minima goes to $0$ as $n \to \infty$\footnote{As we will see in Sec~\ref{sec:full_rsb}, while the replica-symmetry assumption turns out to be wrong, this prediction remains correct!}.
One can see also from this equation (cf.\ \cite{gardner1988optimal,franz2017universality}) that the expansion of the solution $q$ is of the type:
\begin{align}\label{eq:q_RS_zerotemp}
    q &= 1 - \frac{\chi_\RS}{\beta} + \mathcal{O}(\beta^{-2}),
\end{align}
where $\chi_\RS$ is the so-called zero-temperature susceptibility.
Plugging this expansion in the equations above, we recover the result of \cite{gardner1988optimal} (we detail the computations in Appendix~\ref{subsec_app:zerotemp_RS}). 
We find that $\chi_\RS$ is the unique solution to:
\begin{align}\label{eq:chi_RS}
   \alpha \int_0^{\sqrt{2\chi_\RS}} \mcD \xi \, \xi^2 &= 1,
\end{align}
and $f^\star_\RS(\alpha) = \lim_{\beta \to \infty} [-\Phi_\RS(\alpha,\beta) / \beta]$ is given as (recall $H(x) = \int_x^\infty \mcD u$):
\begin{align}\label{eq:fstar_RS}
    f^\star_\RS(\alpha) &= \alpha H[\sqrt{2 \chi_\RS}].
\end{align}

\myskip
\textbf{Replica-symmetric prediction for $\alpha_\inj$ --}
Recall the criterion of eq.~\eqref{eq:criterion_alphainj} for the injectivity threshold.
Eqs.~\eqref{eq:chi_RS} and \eqref{eq:fstar_RS} 
are easy to analyze numerically, and they yield that $f^\star_\RS(\alpha) = 1$ for:
\begin{align}\label{eq:alpha_inj_RS}
    \alpha_\inj^\mathrm{RS} &\simeq 7.64769, 
\end{align}
in which $\mathrm{RS}$ stands for the replica-symmetric assumption.

\myskip
\textbf{Instability of the replica-symmetric solution and the need for a different ansatz --} 
An important check of the validity of the replica-symmetric ansatz is that it indeed is a maximum of the functional given in eq.~\eqref{eq:Phir_final} (or a minimum when $r < 1$ as we discussed).
This can be verified locally, by considering the Hessian of this function, and looking at the sign of its eigenvalues when $r \to 0$. The stability criterion
is called the \emph{de Almeida-Thouless} (dAT) condition \cite{de1978stability}, and we derive it in Appendix~\ref{subsec_app:stability_rs} for any inverse temperature $\beta \geq 0$, cf.\ eq.~\eqref{eq:AT_explicit}. 
However, we also show that this condition is never satisfied in the limit $\beta \to \infty$, for any $\alpha > 2$.
This suggests that the correct solution actually breaks the replica symmetry! 
Formally, the functional of eq.~\eqref{eq:Phir_final} exhibits a well-known physical phenomenon known as 
spontaneous symmetry breaking: while the function to maximize is invariant under the group of permutations of the $r$ replicas, any particular maximum is not invariant under this symmetry.

\myskip\textbf{A replica-symmetric lower bound --}
In Appendix~\ref{sec_app:rs_lower_bound}, we detail a way to use the replica-symmetric prediction at finite $\beta \geq 0$, combined with the stability 
analysis of Appendix~\ref{subsec_app:stability_rs}, to obtain a lower bound on $\alpha_\inj$:
\begin{align}\label{eq:lower_bound_RS_stability}
    \alpha_\inj \geq \alpha_\AT\simeq 5.3238,
\end{align}
in which the definition and calculation of $\alpha_\AT$ can be deduced solely from the replica-symmetric calculation at high enough temperature (low enough $\beta$).
High temperature replica-symmetric regimes in spin glasses are notoriously easier to analyze mathematically than low-temperature settings~\cite{talagrand2010mean}:
for this reason, while we do not provide here a mathematical proof of eq.~\eqref{eq:lower_bound_RS_stability}, we expect 
it to be easier to establish rigorously than bounds from replica symmetry breaking theory, and we leave a proof of eq.~\eqref{eq:lower_bound_RS_stability} as an interesting open problem.
We refer to Appendix~\ref{sec_app:rs_lower_bound} for more details on this bound, which is 
shown as a light green area in Fig.~\ref{fig:chi_estar_T0}.

\subsection{The overlap distribution and replica symmetry breaking}\label{subsec:rsb_discussion}

Since we must go beyond replica symmetry, one has to understand what could happen if the overlap concentration of eq.~\eqref{eq:overlap_concentration}
is not satisfied.
We define $q \equiv \bx \cdot \bx'$, in which $\bx,\bx'$ are independent samples under the Gibbs measure of eq.~\eqref{eq:def_Gibbs}, with the \emph{same quenched noise} $\bW$, 
and we will study the law of $q$ \emph{averaged over $\bW$}, which we will denote $\rho_n(q)$.

\myskip
A natural possibility is that, while the random variable $q$ no longer concentrates, its average distribution $\rho_n(q)$ still 
converges (weakly) to an asymptotic law $\rho(q)$ (for $q \in [0,1]$) as $n \to \infty$.
Replica-symmetry then corresponds to the case $\rho(q) = \delta(q - q_0)$.
But how does an arbitrary $\rho(q)$ transfers to a $r \times r$ overlap matrix $\bQ$ maximizing eq.~\eqref{eq:Phir_final}?
Actually, the other way (going from $\bQ$ to $\rho(q)$) is easier to formalize. Indeed, for the same $\bW$, let us draw two independent samples $\bx, \bx'$ under the Gibbs measure (two ``replicas'').
On average, their overlap is distributed as the off-diagonal elements of the overlap matrix, i.e.\ we have (one can formalize this argument, see e.g.\ \cite{montanari2024friendly})
\begin{align*}
    \rho(q) \simeq \frac{1}{r(r-1)}\sum_{a \neq b} \delta(q - Q^{ab}).
\end{align*}
However, recall that our physical system is not represented by the overlap matrix $\bQ$ at finite $r$, but rather by its $r \to 0$ limit, 
so we should take this limit as well to get the $\rho(q)$ that describes our original physical system (even though taking the $r \to 0$ limit of a $r \times r$ matrix shatters much of our intuition!).
More concretely, the overlap distribution $\rho(q)$ is related to the overlap matrix $\bQ$ by:
\begin{align}\label{eq:rho_q_general}
    \rho(q) &= \lim_{r \to 0} \frac{1}{r(r-1)}\sum_{a \neq b} \delta(q - Q^{ab}).
\end{align}
\textbf{One-step replica symmetry breaking --}
To build back our intuition a bit, let us look at the simplest possible $\rho(q)$ beyond the RS ansatz, that is,  let us assume that 
$\rho(q) = m \delta(q-q_0) + (1-m) \delta(q-q_1)$, with $m \in [0,1]$, and $q_0 \leq q_1$.
One brilliant realization of Parisi \cite{parisi1979infinite,parisi1980order,parisi1980sequence} was that this distribution arises from an \emph{ultrametric} overlap matrix $\bQ$, i.e.\ 
that has the following form: 
\begin{align}\label{eq:rhoq_Q_1RSB}
    \rho(q) = m \delta(q - q_0) + (1-m) \delta(q-q_1) \quad ``\Longleftrightarrow" \quad \bQ = \begin{pmatrix}
      1 & q_1 & q_1 && & & \\ 
      q_1 & 1 & q_1 && \cdots & q_0 & \cdots &\\ 
      q_1 & q_1 & 1 && & &\\
       &  & &\ddots & & &\\
       & & &  &1 &q_1 & q_1 \\  
      \cdots & q_0 & \cdots && q_1 & 1 & q_1 \\  
      & &  && q_1 & q_1 & 1 
    \end{pmatrix}.
\end{align}
Let us detail how to go from the $\bQ$ shown in eq.~\eqref{eq:rhoq_Q_1RSB} to the $\rho(q)$ that we want.
We denote $x \in \{1,\cdots,r\}$ the size of the diagonal blocks in this matrix $\bQ$. Then:
\begin{align}\label{eq:Pq_1RSB_finite_r}
    \frac{1}{r(r-1)}\sum_{a \neq b} \delta(q - Q^{ab}) &= \frac{x-1}{r-1} \delta(q-q_1) + \frac{r-x}{r-1} \delta(q-q_0).
\end{align}
Now arises an issue: since we take the $r \downarrow 0$ limit, and $x \in \{1,\cdots,r\}$ is an integer, how should we proceed? 
Comparing eq.~\eqref{eq:Pq_1RSB_finite_r} with our target $\rho(q)$ gives us a possible answer (which turns out to be the correct one \cite{mezard1987spin}): relaxing the constraint that $x \in \{1,\cdots,r\}$, and taking the limit $r \downarrow 0$ independently of $x$, we reach: 
\begin{align*}
    \rho(q) &= (1-x) \delta(q-q_1) + x \delta(q-q_0),
\end{align*}
i.e.\ exactly the $\rho(q)$ we wanted to build, with $x = m \in [0,1]$ which now became a real parameter in $[0,1]$. 
This type of distribution $\rho(q)$ (and by extension the corresponding $\bQ$ in eq.~\eqref{eq:rhoq_Q_1RSB}) 
is called One-Step Replica Symmetry Breaking (1RSB).

\myskip
\textbf{General replica symmetry breaking --}
More generally, one can represent a distribution with a finite support of $(k+1)$ elements as $\rho(q) = \sum_{i=0}^k (m_i - m_{i-1}) \delta(q-q_i)$, with weights 
$m_0 \leq m_1 \leq \cdots \leq m_{k-1} \leq m_k$, using the conventions $m_{-1} = 0, m_k = 1$.
This distribution is called ``$k$-step replica symmetry breaking'' ($k$-RSB),
and in this ansatz, the overlap matrix $\{Q_{ab}\}$ can be written as a hierarchical generalization of eq.~\eqref{eq:rhoq_Q_1RSB} (with the convention $q_{-1} = 0$ and $q_{k+1} = 1$):
\begin{align*}
    \bQ &= \sum_{i=0}^{k+1} (q_i - q_{i-1}) \bJ^{(r)}_{m_{i-1}}, 
\end{align*}
with $\bJ_m^{(r)}$ the block-diagonal matrix with $r/m$ blocks of size $m$, each diagonal block being the all-ones matrix.
Once again, the integers $\{m_i\}_{i=0}^k$ become elements of $[0,1]$ in the $r \downarrow 0$ limit. 
As in the replica-symmetric case discussed above, the limit $r \downarrow 0$ also turns the maximum over $\{m_i,q_i\}$ into an infimum \cite{mezard1987spin}.
In the end, the $k$-RSB prediction for the free entropy is of the form:
\begin{align}\label{eq:Phi_kRSB_general}
    \Phi_{k-\mathrm{RSB}}(\alpha,\beta) &= \inf_{0\leq q_0 \leq \cdots \leq q_k\leq 1}  \, \inf_{0 < m_0 < \cdots m_{k-1} < m_k = 1} \mcP[\{m_i\}, \{q_i\};\alpha,\beta].
\end{align}
It is common to represent the right hand side as a function of a step function $q(x)$ for $x \in [0,1]$, uniquely defined by $\{m_i\}$ and $\{q_i\}$, see Fig.~\ref{fig:q_frsb} (left, blue curve). 
We write then the argument of the RHS of eq.~\eqref{eq:Phi_kRSB_general} as $\mcP[\{q(x)\};\alpha,\beta]$.
\begin{figure}[t]
    \centering
	\includegraphics[width=\textwidth]{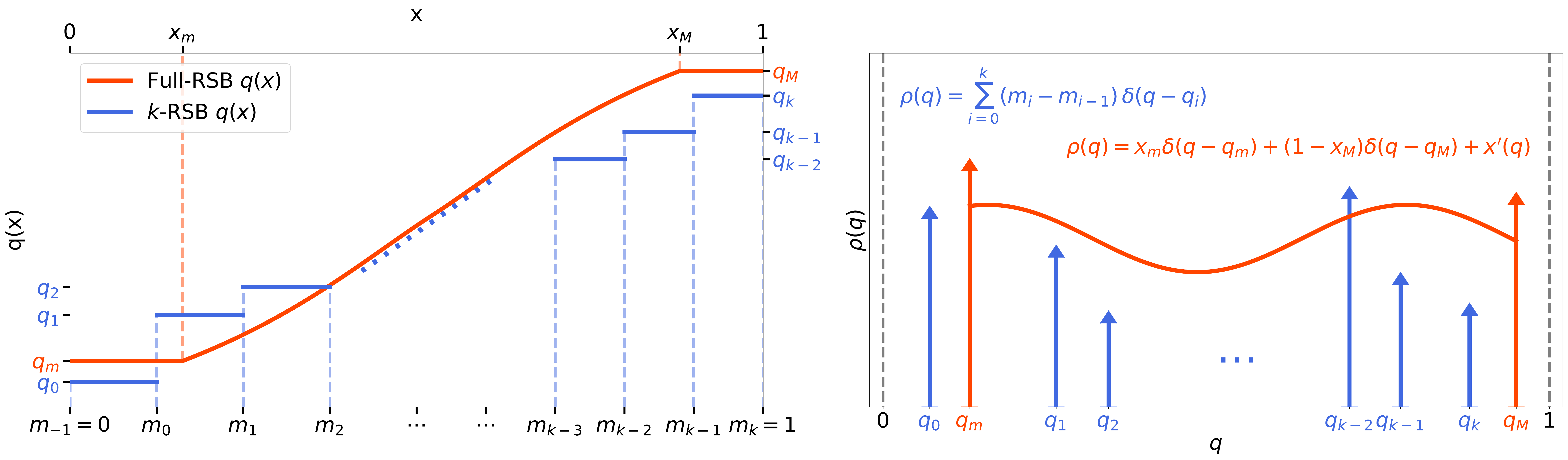}
	\caption{
        Illustration of the finite-RSB and full-RSB structure in the functions $\rho(q)$ (right) and $q(x)$ (left). 
        We use the convention $m_{-1} = 0$.
        In terms of the overlap distribution, the $k$-RSB ansatz (in blue) corresponds to $\rho(q) = \sum_{i=0}^k (m_{i} - m_{i-1}) \delta(q - q_i)$, with the convention $m_{-1} = 0$.
        In orange, the full-RSB distribution is $\rho(q) = \int_0^1 \delta(q-q(x)) \, \rd x$, and is assumed to have two delta peaks at the edges of its support
        $q  \in \{q_m, q_M\}$, with masses $\{x_m,(1-x_M)\}$ (see the equation on the right figure). 
        $x(q)$ is the functional inverse of $q(x)$.
        \label{fig:q_frsb}}
\end{figure}
This allows to consider completely generic distributions $\rho(q)$ (or equivalently functions $q(x)$), by taking the $k \to \infty$ limit of eq.~\eqref{eq:Phi_kRSB_general}. 
This generic procedure is called ``Full Replica Symmetry Breaking'' (Full RSB), and was introduced by Parisi in \cite{parisi1979infinite}.
It yields for the free entropy a formula of the type:
\begin{align}\label{eq:Phi_FRSB_general}
    \Phi_\FRSB(\alpha,\beta) &= \inf_{\{q(x)\}} \{\mcP[\{q(x)\};\alpha,\beta]\}.
\end{align}
Such formulas are usually called Parisi formulas in the spin glass literature.
Note that in many disordered models, the overlap distribution $\rho(q)$ has been observed to have two points with positive mass, at the edges of its bulk (see Fig.~\ref{fig:q_frsb}, right). 
This leads to generically characterize the function $q(x)$ as (see Fig.~\ref{fig:q_frsb} right, red curve):
\begin{align*}
    \begin{cases}
    q(x) = q_m & \textrm{ if } x \in [0,x_m], \\
    q(x) & \textrm{ if } x \in [x_m,x_M], \\
    q(x) = q_M & \textrm{ if } x \in [x_M, 1].
    \end{cases}
\end{align*}
This is purely a convention that often turns out to be convenient
and does not remove any generality as one can always set $x_m = 0$ and $x_M = 1$.

\myskip
\textbf{Relation between $\rho(q)$ and $q(x)$ --}
For an overlap distribution with a well-defined density $\rho(q)$, one has the relation $\rho(q) = x'(q)$, with $x(q) \in [0,1]$ the CDF of the overlap, and $x \mapsto q(x)$ is then the functional inverse of $q \mapsto x(q)$.

\medskip\noindent
\textbf{RSB and the form of the Gibbs measure --}
Interestingly, one can interpret the level of RSB as an assumption on the structure of the level sets of the Gibbs measure (or the global minima of the energy, when $\beta = \infty$).
Roughly speaking, 1-RSB corresponds to an organization of the mass of the Gibbs measure into clusters.
Inside each cluster two solutions typically have overlap $q_1$, while solutions belonging to two different clusters have a typical overlap $q_0$.
This hierarchy can be iterated inside each cluster, which gives rise to the 2-RSB structure.
Iterating even further, the level of RSB corresponds to the depth of this hierarchical structure of clusters, which is known as \emph{ultrametric} \cite{mezard1984nature,panchenko2013parisi}.
Ultrametricity and RSB is a beautiful mathematical representation of the free energy landscape of spin glass models, 
which also allows creating efficient algorithms~\cite{alaoui2021algorithmic,alaoui2021optimization,subag2021following,montanari2021optimization,auffinger2023optimization}.

\myskip
A thorough description of all the consequences of replica symmetry breaking would be beyond our scope:
the major reference on this topic is \cite{mezard1987spin}, and we invite the reader to read as well \cite{talagrand2010mean}, and the very recent lecture notes \cite{montanari2024friendly}, for discussions in a more mathematically-friendly language.

\subsection{One-step replica symmetry breaking}\label{subsec:1rsb}

We start by generalizing the calculation we made in Section~\ref{subsec:rs} to the more general one-RSB ansatz we described above.
We give the results here, while the calculation is detailed in Appendix~\ref{sec_app:1rsb}.
The final result is given as an infimum over three parameters $\{m, q_0, q_1\}$ (see Fig.~\ref{fig:q_frsb} for their interpretation):
\begin{align}\label{eq:phi_1rsb}
   \Phi_\ORSB(\alpha,\beta) &= \inf_{m,q_0,q_1} \Bigg[
   \frac{m - 1}{2m} \log (1-q_1) + \frac{1}{2m} \log [1-mq_0 + (m-1)q_1] + \frac{q_0}{2[1-mq_0 + (m-1)q_1]} \nonumber  \\
   &+ \frac{\alpha}{m} \int \mcD \xi_0 \log \Bigg\{ \int \mcD \xi_1 \Bigg[1- (1-e^{-\beta}) H\Big(- \frac{\sqrt{q_0} \xi_0 + \sqrt{q_1 - q_0} \xi_1}{\sqrt{1-q_1}}\Big) \Bigg]^m \Bigg\} \Bigg].
\end{align}
Note that when $q_1 = q_0$ or when $m = 1$, the overlap distribution $\rho(q)$ reduces to a single delta peak, and we consistently retrieve the replica-symmetric solution of eq.~\eqref{eq:phi_RS}.

\myskip
\textbf{The zero-temperature limit and the injectivity threshold --}  
In Appendix~\ref{subsec_app:zerotemp_1rsb}, we detail how to take the $\beta \to \infty$ limit in $\Phi_\ORSB(\alpha,\beta)$, 
and to obtain the function $f^\star_\ORSB(\alpha) \coloneqq \lim_{\beta \to \infty} [-\Phi_\ORSB(\alpha, \beta)/\beta]$. In Appendix~\ref{subsec_app:1rsb_numerical} we present the numerical procedure we used to solve the resulting equations. 
We reach the light blue curve in Fig.~\ref{fig:chi_estar_T0} for $f^\star_\ORSB(\alpha)$, and in particular 
we have 
\begin{align}\label{eq:alphainj_1RSB}
   \alpha_\inj^\ORSB \simeq 6.7157. 
\end{align}

\myskip
\textbf{Validity of the 1-RSB assumption --}
While the 1-RSB ansatz is a natural extension of the previous replica symmetric assumption,
the results of \cite{franz2017universality} (which study the same model with a slightly different energy function) strongly suggest that for any $\alpha > 2$, at low enough temperatures
the system undergoes a continuous transition from a RS to a Full RSB phase, without any finite level of RSB at intermediate temperatures\footnote{
In particular, a stability analysis of the 1-RSB ansatz, similar to what we did in Appendix~\ref{subsec_app:stability_rs}, would yield that it becomes unstable at the same temperature as the RS 
ansatz.
}. 
This motivates us to compute the complete Full RSB picture in Section~\ref{sec:full_rsb}.
Nevertheless, we will see the 1-RSB prediction of eq.~\eqref{eq:alphainj_1RSB} is already very accurate.

\section{The full-RSB solution: exact injectivity threshold}\label{sec:full_rsb}
\subsection{The full-RSB prediction for the free entropy}

The full-RSB calculation is detailed in Appendix~\ref{sec_app:frsb}, and quite closely follows a similar derivation 
presented in \cite{franz2017universality,urbani2018statistical}. 

\myskip
\textbf{Notations --}
Before stating the result, let us introduce some notation.
For any $\sigma \geq 0$, we let $\gamma_{\sigma^2}(h) = \exp\{-h^2/(2\sigma^2)\} / \sqrt{2 \pi \sigma^2}$ the PDF of $\mcN(0,\sigma^2)$.
For two functions $a, b : \bbR \to \bbR$, we denote $(a \star b)(h) = \int \rd u \, a(u) b(h-u)$ their convolution.
For a function $f(x,h)$ with $x \in [0,1]$ and $h \in \bbR$, we always consider convolutions in the $h$ variable, e.g.\ the notation 
$\gamma_{\sigma^2} \star f(x,h)$ denotes the function $(\gamma_{\sigma^2} \star f)(x,h) = \int \rd u \, \gamma_{\sigma^2}(u) f(x, h - u)$.
Moreover, we denote with a dot derivatives in the $x$ variable, and with a prime derivatives in the $h$ variable, e.g.\ $\dot{f} = \partial_x f$ and $f''= \partial^2_h f$.

\myskip
Let us now state the results of the full-RSB calculation.
We obtain the following formula for the free entropy: 
\begin{align}\label{eq:phi_frsb}
   \Phi_\FRSB(\alpha,\beta) &= \inf_{\{q(x)\}} \Bigg\{\frac{1}{2} \log (1-q(1)) + \frac{q(0)}{2(1-\langle q \rangle)} + \frac{1}{2} \int_0^1 \rd u \frac{\dot{q}(u)}{\lambda(u)} + \alpha (\gamma_{q(0)} \star f) (0, 0)\Bigg\}.
\end{align}
Here, we denoted $\langle q \rangle = \int_0^1 \rd u \, q(u)$ and we defined the auxiliary function: 
\begin{align}\label{eq:def_lambdax}
    \lambda(x) &\coloneqq 1 - x q(x) - \int_x^1 \rd y \, q(y).
\end{align}
Moreover, $f(x,h)$ is taken to be the solution of the \emph{Parisi PDE}:
\begin{align}\label{eq:Parisi_PDE}
    \begin{dcases}
        f(1,h) &= \log{\gamma_{1-q(1)} \star e^{-\beta \theta}(h)}, \\ 
       \dot{f}(x,h) &= - \frac{\dot{q}(x)}{2} \big[f''(x,h) + x f'(x,h)^2\big], \hspace{0.5cm} x \in (0,1).
    \end{dcases}
\end{align}
Similar equations were derived and analyzed in \cite{franz2017universality,urbani2018statistical}. 
These works followed a long series of important papers on 
the spherical perceptron and its connection to the packing of hard spheres \cite{charbonneau2014fractal,franz2015universal,rainone2015following}.
Note that these works consider a shift $\sigma$ in the perceptron activation, so that here we are in the $\sigma = 0$ setting of their results.
Moreover, their energy function is slightly different from eq.~\eqref{eq:def_energy}, as it contains a multiplicative quadratic term.

\myskip
\textbf{The positive-temperature FRSB equations --}
In order to impose the Parisi PDE constraint on the function $f(x,h)$ in eq.~\eqref{eq:phi_frsb}, we use a functional Lagrange multiplier $\Gamma(x,h)$. 
This yields that the free entropy $\Phi_\FRSB(\alpha,\beta)$ is given by the extremization with respect to $q(x), \Lambda(x,h), f(x,h)$ of:
\begin{align}\label{eq:frsb_fentropy_2}
  \Phi_\FRSB(\alpha,\beta) &= \frac{1}{2} \log (1-q(1)) + \frac{q(0)}{2(1-\langle q \rangle)} + \frac{1}{2} \int_0^1 \rd u \frac{\dot{q}(u)}{\lambda(u)} + \alpha \gamma_{q(0)} \star f (0,0) \nonumber \\ 
  & - \alpha \int \rd h \, \Lambda(1,h) [f(1,h) - \log \gamma_{1-q(1)} \star e^{-\beta \theta}(h)] \nonumber  \\ 
  & +  \alpha \int_{0}^1 \rd x \int \rd h \, \Lambda(x,h) [\dot{f}(x,h) + \frac{\dot{q}(x)}{2} (f''(x,h) + x f'(x,h)^2)].
\end{align}
Differentiating these equations with respect to $\Lambda(x,h)$ yields the Parisi PDE of eq.~\eqref{eq:Parisi_PDE} (as it should), 
while differentiation w.r.t.\ $q(x)$ and $f(x,h)$ respectively yield:
\begin{subnumcases}{\label{eq:frsb_eqs}}
\label{eq:frsb_eq_1}
   \frac{q(0)}{\lambda(0)^2} + \int_0^x \rd u \frac{\dot{q}(u)}{\lambda(u)^2} = \alpha \int \rd h \, \Lambda(x,h)  f'(x,h)^2, &\\ 
\label{eq:frsb_eq_2}
   \dot{\Lambda}(x,h) = \frac{\dot{q}(x)}{2} \Big[\Lambda''(x,h) - 2 x (f'(x,h) \Lambda(x,h))'\Big], &\\ 
\label{eq:frsb_eq_3}
   \Lambda(0, h) = \gamma_{q(0)}(h). &
\end{subnumcases}
Finally, differentiation w.r.t.\ $\beta$ yields the average energy: 
\begin{align}\label{eq:estar_frsb}
   e^\star_\FRSB(\alpha, \beta) &\coloneqq - \Phi_\FRSB'(\alpha,\beta) = \alpha \int \rd h \Lambda(1,h) \frac{\big[\gamma_{1-q(1)} \star \theta e^{-\beta \theta}\big](h)}{\big[\gamma_{1-q(1)} \star e^{-\beta \theta}\big](h)}.
\end{align}

\myskip 
\textbf{A sanity check: the RS solution --}
In the RS assumption, we have $q(x) = q_0$ for all $x$. In particular, this implies that 
$\dot{q}(x) = 0$, and $q(0) = \langle q \rangle = q_0$. Moreover, it is easy to see that in this case, 
since $\dot{q}(x) = 0$, we have $f(x,h) = f(1,h) = \log \gamma_{1-q_0} \star e^{-\beta \theta(h)}$ for all $x$, and similarly $\Lambda(x,h) = \gamma_{q_0}(h)$. 
Therefore, eq.~\eqref{eq:frsb_eq_1} becomes:
\begin{align}\label{eq:frsb_eq_rs}
   \frac{q_0}{(1-q_0)^2} &= \alpha \int \rd h \gamma_{q_0}(h) \Bigg[\frac{\big(\gamma_{1-q_0} \star e^{-\beta \theta}\big)'(h)}{\gamma_{1-q_0} \star e^{-\beta \theta}(h)}\Bigg]^2.
\end{align}
One can check (the derivation is presented in Appendix~\ref{subsec_app:rs_from_rsb}) that this equation is equivalent to eq.~\eqref{eq:q_RS_eq_new}: we found back the RS solution!

\subsection{Zero-temperature limit and algorithm for the injectivity threshold}
\label{subsec:zero_temp_algorithmic_frsb}

\textbf{The zero-temperature limit --}
In the zero temperature limit, the scaling of the FRSB equations in the ``UNSAT'' phase of a slightly different spherical perceptron
has been worked out in \cite{franz2017universality}. 
The scaling with $\beta$ of the solution to eqs.~\eqref{eq:Parisi_PDE} and \eqref{eq:frsb_eqs}, as $\beta \to \infty$, can be deduced by transposing their arguments to our model.
More precisely, in the $\beta \to \infty$ limit, letting $\lambda(q) \coloneqq \lambda[x(q)]$ and $f(q,h) \coloneqq f(x(q), h)$, 
one can show that the ``rescaled'' variables $(\beta(1-q_M) ; \beta x(q) ; \beta \lambda(q) ; \beta^{-1} f(q,h))$ 
satisfy non-trivial limiting equations as $\beta \to \infty$, and we define $(\chi; x_\infty; \lambda_\infty; f_\infty)$ as the limiting values of these variables:
\begin{align}\label{eq:frsb_zerotemp_scaling}
   \begin{cases}
       \beta(1 - q_M) &\to \chi, \\
       \beta x(q) &\to x_\infty(q), \\
       \beta \lambda(q) &\to \lambda_\infty(q), \\
       \beta^{-1} f(q,h) &\to f_\infty(q, h).
   \end{cases}
\end{align}
Moreover, $\Lambda(q,h) \coloneqq \Lambda(x(q),h)$ remains finite as $\beta \to \infty$.
In particular, since $x(q = 1) = 1$ by definition (see Fig.~\ref{fig:q_frsb}), we have that $x_\infty(q)$ now extends up to $+\infty$.
We define $q_\infty(x)$ as the inverse function to $x_\infty(q)$, and then we can define 
all functions in terms of $x$, e.g.\ $f_\infty(x, h) \coloneqq f_\infty(q_\infty(x), h)$.
In this limit, all eqs.~\eqref{eq:frsb_eq_1},\eqref{eq:frsb_eq_2},\eqref{eq:frsb_eq_3} scale very naturally, and the 
Parisi PDE of eq.~\eqref{eq:Parisi_PDE} as well. The only non-trivial part is the boundary condition at $x = 1$, 
which becomes 
\begin{align*}
   f_\infty(x \to +\infty, h) &= \frac{1}{\beta}\log \gamma_{\chi/\beta} \star e^{-\beta \theta}(h) + \smallO(1). 
\end{align*}
The scaling of the right hand-side can be worked out exactly:
\begin{align}\label{eq:f_xinfty}
 f_\infty(x \to +\infty, h) &=  \begin{cases}
     0 & \textrm{ if } h < 0, \\ 
     -1 & \textrm{ if } h > \sqrt{2\chi}, \\ 
     -\frac{h^2}{2\chi} & \textrm{ otherwise }.
 \end{cases}
\end{align}
Similarly, we can work out the zero-temperature limit of eq.~\eqref{eq:estar_frsb}, and we get: 
\begin{align*}
   f^\star_\FRSB(\alpha) &= \lim_{\beta \to \infty} e^\star_\FRSB(\alpha, \beta) = \alpha \int_{\sqrt{2\chi}}^\infty \Lambda_\infty(x \to \infty,h)  \rd h.
\end{align*}

\myskip
\textbf{Algorithmic procedure --}
In this paragraph, for the clarity of the presentation, all quantities are considered in the zero-temperature limit, and we drop the $\infty$ subscripts.

\myskip
The procedure we use is relatively similar to the finite-temperature one described in Appendix~B of \cite{franz2017universality}, but is done at zero temperature,
and at fixed $x$ rather than fixed $q$ (as we found this choice to be numerically more stable).
In order to increase numerical precision, we rescale $h$ and use $t = h / \sqrt{2\chi}$, allowing to handle small values of the susceptibility $\chi$.
Our algorithmic procedure is as follows:

\myskip
\begin{itemize}
   \item \textbf{Before starting --} Pick $k$ large enough, $x_\mathrm{max} \gg 1$ large enough, and a grid $0 < x_0 < x_1 < \cdots < x_{k-1} = x_\mathrm{max} < x_k = \infty$.
   \item \textbf{Initialization --} Start from a guess $\chi > 0$ and $0 \leq q_0 \leq q_1 \leq \cdots \leq q_{k-1} < q_k = 1$.
   \item[$(i)$] Find the functions $f(x_i,t)$ via the procedure: 
   \begin{align}\label{eq:frsb_procedure_i}
      f(x_k = \infty, t) &= 
      \begin{dcases}
         0 & \textrm{ if } t \leq 0 , \\    
         -1 & \textrm{ if } t \geq 1, \\    
         - t^2 & \textrm{ if } t \in (0,1).
      \end{dcases}, \\
      f(x_i,t) &= \frac{1}{x_i} \log \Big[\gamma_{\frac{q_{i+1} - q_i}{2\chi}} \star e^{x_i f}(x_{i+1},t)\Big].
   \end{align}
   \item[$(ii)$] Find $\Lambda(q_i,t)$ via the procedure:
   \begin{align}\label{eq:frsb_procedure_ii}
    \begin{dcases}
        \Lambda(x_0, t) &= \gamma_{q_0}(\sqrt{2\chi}t), \\
        \Lambda(x_i,t) &= e^{x_{i-1} f(x_i,t)}\, \gamma_{\frac{q_i - q_{i-1}}{2\chi}} \star \Big[\Lambda \cdot e^{-x_{i-1} f}\Big](x_{i-1},t).
    \end{dcases}
   \end{align}
   \item[$(iii)$] Compute $q^{-1}_i$ (the hierarchical elements of $\bQ^{-1}$, not $1/q_i$) using, for all $i \in \{0,\cdots,k\}$:
   \begin{align}\label{eq:frsb_procedure_iii}
      q_i^{-1} &= - \frac{\alpha}{\sqrt{2\chi}} \int \rd t \, \Lambda(x_i,t) \, f'(x_i, t)^2.
   \end{align}
   \item[$(iv)$] Update $\lambda_i = \lambda(q_i)$ via 
   \begin{align}\label{eq:frsb_procedure_iv}
      \begin{dcases}
      \lambda_0 &= \sqrt{-\frac{q_0}{q_0^{-1}}}, \\    
      \frac{1}{\lambda_i} &= \frac{1}{\lambda_{i-1}} - x_{i-1} (q_i^{-1} - q_{i-1}^{-1}).
      \end{dcases}
   \end{align}
   \item[$(v)$] Update $\{q_i\}_{i=0}^k$ with $q_k = 1$ and
   \begin{align}\label{eq:frsb_procedure_v}
      q_i &= 1 - \frac{\lambda_i}{x_i} - \sum_{j=i+1}^k \Big(\frac{1}{x_j} - \frac{1}{x_{j-1}}\Big) \lambda_j.
   \end{align}
   \item[$(vi)$]
   Update $\chi$ by solving the equation (with $q_{-1} = 0$ and $x_{-1} = 0$): 
   \begin{align}\label{eq:frsb_procedure_vi}
      \sum_{i=0}^k \frac{\sqrt{\chi}(q_i - q_{i-1})}{\Big[\chi + \sum_{j=i+1}^k (q_j - q_{j-1}) x_{j-1}\Big]\Big[\chi + \sum_{j=i}^k (q_j - q_{j-1}) x_{j-1}\Big]}
      &= 2^{3/2}\alpha \int_0^{1} \rd t \, \Lambda( 1,t) \, t^2.
   \end{align}
   \item Iterate steps $(i) \to (vi)$ until convergence. 
   \item \textbf{Final value for the energy --} We then compute the ground state energy as: 
   \begin{align*}
      f^\star_\FRSB(\alpha) &= \alpha \sqrt{2\chi} \int_{1}^\infty \rd t \, \Lambda(1,t).
   \end{align*}
\end{itemize}
The procedure is done for $k$ large enough so that the result does not vary with $k$ and approaches the $k \to \infty$ limit.
Steps~$(i)$ and $(ii)$ are a discretization of the zero-temperature limits of the PDEs of eqs.~\eqref{eq:Parisi_PDE} and \eqref{eq:frsb_eq_2}, arising from the $k$-RSB ansatz (see Appendix~\ref{sec_app:frsb}).
We give more details on the derivation of steps $(iii) - (vi)$ in Appendix~\ref{subsec_app:derivation_algorithmic_frsb}, leveraging results of \cite{franz2017universality}.

\myskip
The different convolutions with Gaussians are done using an analytical formula for the Discrete Fourier Transform (DFT) of a Gaussian under a Shannon-Whittaker interpolation, 
and fast Fourier transform techniques. More details on this point are given in Appendix~\ref{subsec_app:convolutions}.

\myskip
\textbf{Implementation and results --}
We present our results for $f^\star_\FRSB$ and the zero-temperature susceptibility $\chi$ in Fig.~\ref{fig:chi_estar_T0}, and the zero-temperature overlap distribution function $q(x)$ 
for various values of $\alpha$ in Fig.~\ref{fig:q_T0}.
\begin{figure}[th]
   \centering
\includegraphics[width=0.8\textwidth]{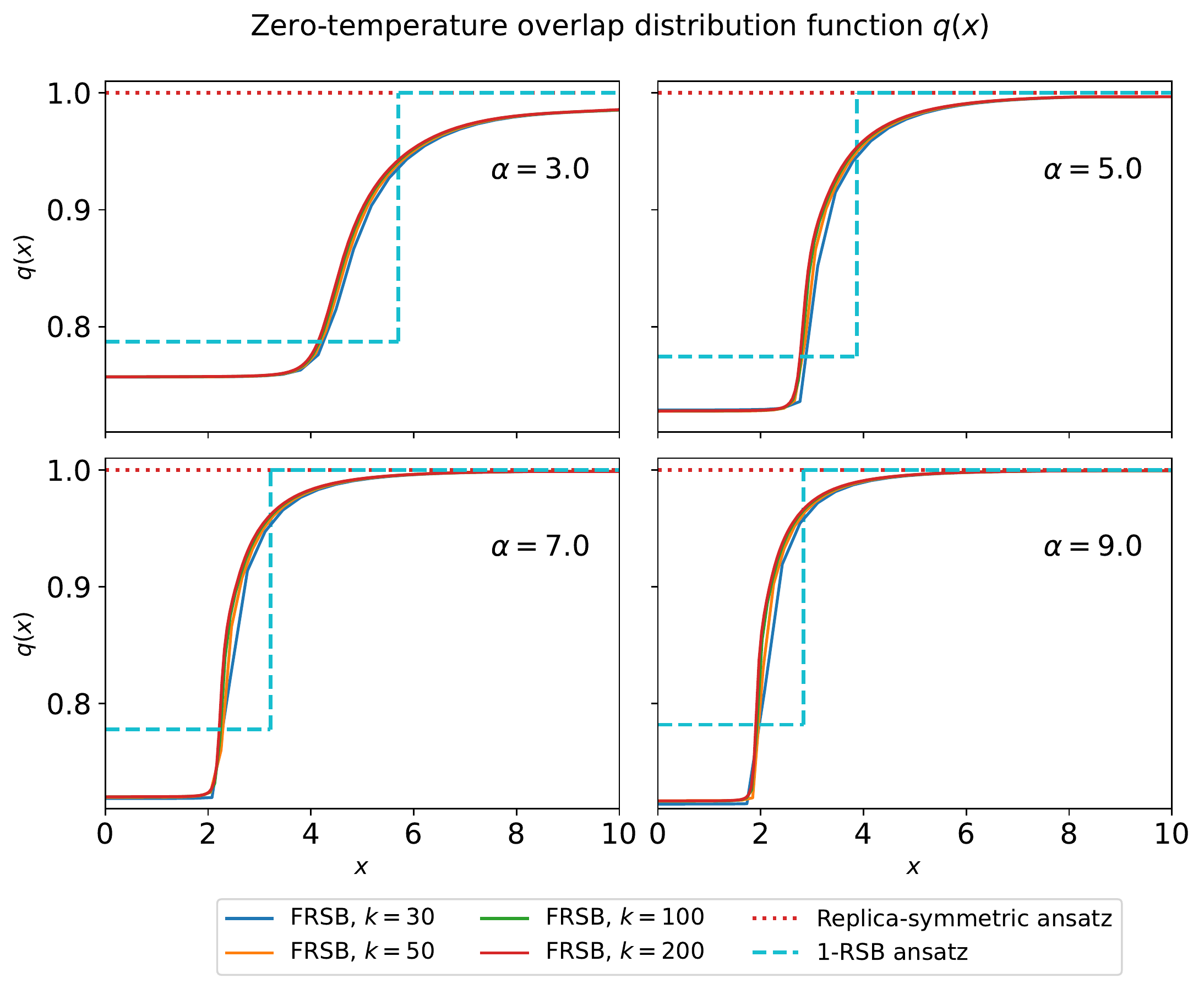}
\caption{$T=0$ limit quantities of the RS, 1RSB and FRSB solutions, as a function of $\alpha$.
   We compare the predictions for the different
   forms of the function $q(x)$ corresponding to the assumed level of replica symmetry breaking.
\label{fig:q_T0}}
\end{figure}
In particular, the full-RSB prediction for the injectivity threshold is $\alpha_\inj^\mathrm{FRSB} \simeq 6.698$.
We ran a more precise binary search procedure for computing the value of this transition, 
which we detail in Appendix~\ref{subsec_app:numerics_frsb_threshold}. A summary of its result is presented 
in Fig.~\ref{fig:alpha_inj_frsb_summary}, and it yields the bound we conjecture in Result~\ref{result:frsb_result}:
\begin{align}\label{eq:fRSB_inj_threshold}
   6.6979 \leq \alpha_\inj^\mathrm{FRSB} \leq 6.6981.
\end{align}
\begin{figure}[th]
   \centering
\includegraphics[width=1.0\textwidth]{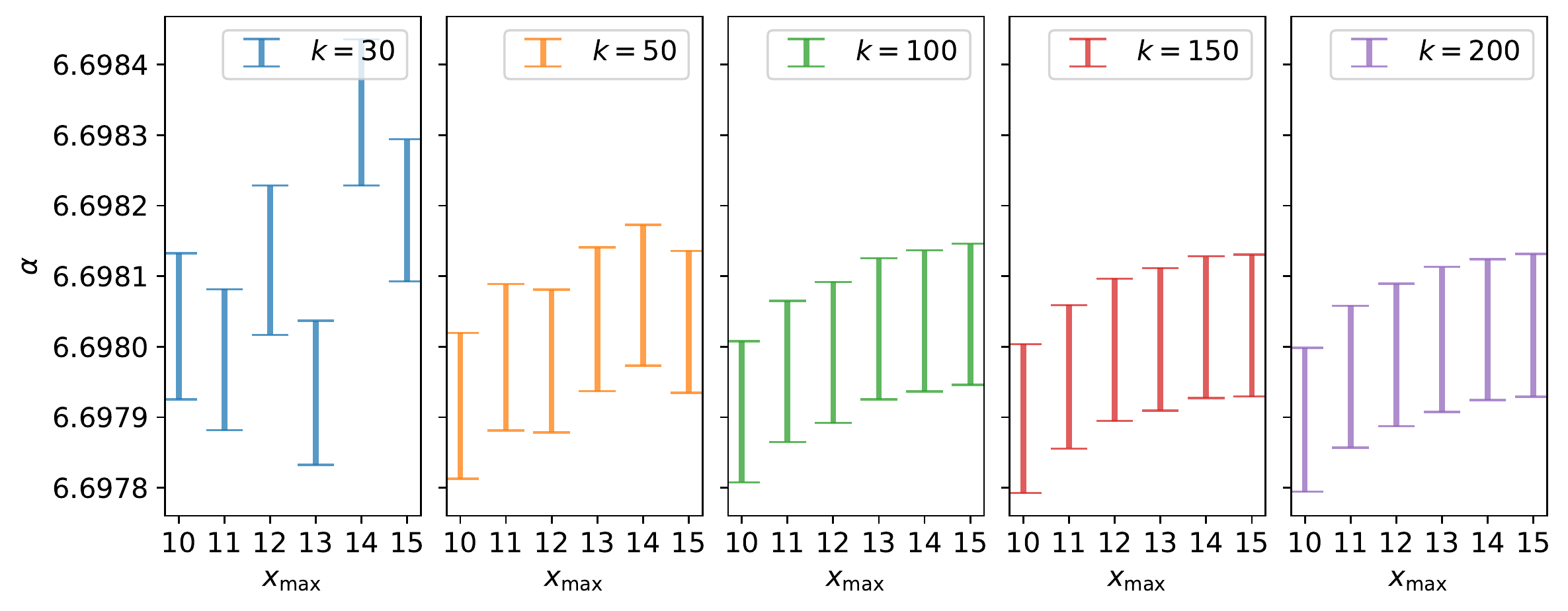}
\caption{\label{fig:alpha_inj_frsb_summary}
Computation of $\alpha_\inj^\FRSB$ using the FRSB algorithmic procedure.
For different values of $x_\mathrm{max}$ and $k$ we give an interval numerically found to contain $\alpha_\inj^\FRSB$.
In Result~\ref{result:frsb_result} we took the interval of values obtained with $k = 200$ and $x_\mathrm{max} = 15$.
We give more details on the numerical procedure in Appendix~\ref{subsec_app:numerics_frsb_threshold}.
}
\end{figure}
Note that this bound is compatible with the hierarchy described in eq.~\eqref{eq:hierarchy_upper_bounds_alpha_inj}.
Moreover, the 1-RSB predictions are found to be very close (but not equal) to the exact FRSB results.
This can be intuitively visualized by the fact that $q(x)$ is relatively well approximated by a step function, which corresponds to the 1RSB ansatz, cf.\ Fig.~\ref{fig:q_T0}.
We emphasize however that the full-RSB algorithmic procedure above does not allow to directly recover the $1$-RSB result, even used with $k = 1$: 
indeed it implicitly relies on the fact that one takes $k$ large enough so as not to have to optimize over the variables $x_1, \cdots, x_k$, so that we can take them to be fixed.

\myskip
\textbf{Remark: convexity of the Parisi functional --} In the context of mixed $p$-spin models, the so-called Parisi functional, i.e.\ the functional whose infimum we take in eq.~\eqref{eq:phi_frsb}, has been shown to be strictly convex, 
and thus to have a unique minimizer \cite{auffinger2015parisi}. 
This is conjectured to hold as well in our setting, however there is no rigorous guarantee that our iterative procedure should converge to a global minimizer.
However, our numerical simulations are compatible with this conjecture: as we detail in Appendix~\ref{subsec_app:numerical_results_frsb}, we find the iterative procedure 
to converge to a consistent solution for all initializing points.
Moreover, our procedure exhibits polynomial convergence (see Fig.~\ref{fig_app:convergence_frsb}): 
this suggests that there is an accumulation of near-zero eigenvalues in the Hessian of the Parisi functional close to the minimum (otherwise we would observe exponential convergence), and thus that the Parisi functional is strictly but not strongly convex.

\bibliographystyle{alpha}
\bibliography{refs}

\appendix 
\addtocontents{toc}{\protect\setcounter{tocdepth}{1}} %Only sections titles in the TOC for the appendix

\section{Proofs}\label{sec_app:proofs}
\subsection{Proof of Proposition~\ref{prop:injectivity_random_intersection}}\label{subsec_app:proof_random_intersection}

Note that if $m < n$, then 
$\bx \in \bbR^n \mapsto \bW \bx \in \bbR^m$ is not injective,
$C_{m,n} = \bbR^m$, and eq.~\eqref{eq:pmn_random_intersection} stands trivially.
We thus assume $m \geq n$. Since $\bx \in \bbR^n \mapsto \bW \bx \in \bbR^m$ is then a.s.\ injective, $\bW \bbR^n$ is a (random) $n$-dimensional subspace of $\bbR^m$.
Moreover, by rotation invariance of the Gaussian distribution, it is uniformly sampled.
Thus it is enough to show that $p_{m,n} = \bbP[(\bW \bbR^{n}) \cap C_{m,n} = \{0\}]$.
In the end, it suffices to show the following lemma, whose proof elements can be found in \cite{puthawala2022globally,paleka2021injectivity,clum2022topics}, which we repeat for completeness: 
\begin{lemma}\label{lemma:technical_random_intersection}
    \noindent
    Almost surely under the law of $\bW$, the following two statements are equivalent: 
    \begin{itemize}
        \item[$(i)$] $\varphi_\bW$ is injective.
        \item[$(ii)$] $(\bW \bbR^{n}) \cap C_{m,n} = \{0\}$.
    \end{itemize}
\end{lemma}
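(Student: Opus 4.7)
The plan is to prove the lemma via two contrapositives, working almost surely on the full-measure event where $\bW$ is in general position: any at most $n$ rows of $\bW$ are linearly independent, and the column map $\bx \mapsto \bW\bx$ has trivial kernel. Both properties hold a.s.\ for Gaussian $\bW$.

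For the direction $(ii) \Rightarrow (i)$, the plan is to start from distinct $\bx, \by$ with $\varphi_\bW(\bx) = \varphi_\bW(\by)$ and exhibit a non-zero element of $\bW \bbR^n \cap C_{m,n}$. A coordinatewise case analysis on $\sigma$ forces the positive-coordinate sets of $\bW\bx$ and $\bW\by$ to coincide; calling this common set $P$, we have $(\bW\bx)_P = (\bW\by)_P$, so $\bz := \bx - \by \neq 0$ lies in $\ker \bW_P$, and genericity forces $|P| < n$. If $\bx \neq 0$, then $\bW\bx \in \bW \bbR^n$ has exactly $|P| < n$ strictly positive coordinates, so $\bW\bx \in C_{m,n}\setminus\{0\}$. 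The edge case $\bx = 0$ forces $\bW\by \leq 0$ componentwise with $\by \neq 0$, and $\bW\by$ itself is then the witness.

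For the harder direction $(i) \Rightarrow (ii)$, I would start from $\bz \neq 0$ with $\bW\bz \in C_{m,n}$, set $T := \{\mu : (\bW\bz)_\mu > 0\}$ (so $|T| < n$) and $Q_0 := \{\mu : (\bW\bz)_\mu = 0\}$, and aim to construct two distinct preimages of the same output. The difficulty is that $\bz$ may sit on the boundary of its sign cell through $Q_0$, so I first push $\bz$ into the interior of an open cell. Since $\bz \in \ker \bW_{Q_0} \setminus \{0\}$, genericity yields $|Q_0| < n$ with $\{\bW_\mu\}_{\mu \in Q_0}$ linearly independent, so the map $\bd \mapsto \bW_{Q_0} \bd$ is surjective onto $\bbR^{|Q_0|}$ and there exists $\bd \in \bbR^n$ with $\bW_\mu \cdot \bd < 0$ for all $\mu \in Q_0$. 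For $\epsilon > 0$ small, $\bz' := \bz + \epsilon \bd$ then satisfies $(\bW\bz')_\mu > 0 \Leftrightarrow \mu \in T$, placing it in the open polyhedral cell $\mcC_T$ on which $\varphi_\bW$ agrees with the linear map $D_T \bW$ (with $D_T$ the diagonal projection onto coordinates in $T$). Since $|T| < n$, $\ker \bW_T$ is nontrivial; picking $\bd' \in \ker \bW_T \setminus \{0\}$ and $\epsilon' > 0$ small enough that $\bz'' := \bz' + \epsilon' \bd'$ remains in $\mcC_T$, we obtain $\varphi_\bW(\bz'') = D_T \bW \bz'' = D_T \bW \bz' = \varphi_\bW(\bz')$ with $\bz'' \neq \bz'$, contradicting injectivity.

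The main subtlety I expect is this cell-boundary issue in the second direction: the linear description of $\varphi_\bW$ is only valid in the interior of a sign cell, so a naive kernel manipulation of $\bz$ does not immediately produce two distinct preimages. The almost-sure linear independence of $\{\bW_\mu\}_{\mu \in Q_0}$ (which follows from $|Q_0| < n$ and the genericity of $\bW$) is precisely what lets us perturb $\bz$ off the cell boundary before applying the kernel argument on $\bW_T$.
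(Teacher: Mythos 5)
Your proof is correct, and it rests on the same genericity ideas as the paper's: work on the full-measure event that every $n$-subset of rows of $\bW$ is linearly independent, then use it to force agreement of enough coordinates (for $(ii)\Rightarrow(i)$) or to perturb off cell boundaries (for $(i)\Rightarrow(ii)$). You streamline both halves. In $(ii)\Rightarrow(i)$ the paper takes the midpoint $(\bx+\by)/2$ and tracks its $\geq n$ non-negative coordinates, while you read off $|P|<n$ directly from $\bx-\by\in\ker\bW_P\setminus\{0\}$ and hand over $\bW\bx$ (or $\bW\by$ in the $\bx=0$ edge case, which you correctly isolate) as the explicit witness in $C_{m,n}\setminus\{0\}$. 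In $(i)\Rightarrow(ii)$ the paper's Lemma~\ref{lemma:technical_equivalence_classes} peels off the zero coordinates one index at a time before concluding non-injectivity by a dimension count on the cell $\mcR_S$; you instead enter the open cell in a single step by solving the linear system on $Q_0$ (surjectivity of $\bW_{Q_0}$ being exactly what genericity and $|Q_0|<n$ provide), and then exhibit an explicit pair of colliding preimages by perturbing along $\ker\bW_T$. Same mechanism, fewer moving parts, and a slightly more constructive conclusion.
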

\begin{proof}[Proof of Lemma~\ref{lemma:technical_random_intersection} --]
    In the following, we assume that the following event stands: 
    \begin{align*}
        E(\bW) \coloneqq \{\textrm{All choices of $n$ distinct rows of $\bW$ are linearly independent vectors in $\bbR^n$}\}.
    \end{align*}
    It is easy to see that $\bbP[E(\bW)] = 1$, since every set of $n$ independent standard Gaussian vectors in $\bbR^n$ is linearly independent almost surely.

    \myskip
    Let us show first that $(ii) \Rightarrow (i)$.
    Recall $\mathrm{ReLU}(x) = \max(0,x)$.
        Note that for any $a \leq b \in \bbR$, $\mathrm{ReLU}(a) = \mathrm{ReLU}(b)$ 
        implies that ReLU is constant on $(a,b)$.
    Assume that $\varphi_\bW(\bx) = \varphi_\bW(\by)$. 
    Let us consider $\bz = (\bx + \by) / 2$, then $\varphi_\bW(\bz) = \varphi_\bW(\bx)$ by the note above.
    Moreover, for all $\mu \in [m]$ such that $(\bW \bz)_\mu > 0$, then $(\bW \bx)_\mu = (\bW \by)_\mu = (\bW \bz)_\mu$.
    On the other hand, if $(\bW \bz)_\mu = 0$, then necessarily $(\bW \bx)_\mu = (\bW \by)_\mu = 0$, since 
    $(\bW \bx)_\mu \leq 0 \Leftrightarrow (\bW \by)_\mu \leq 0$.
    By $(ii)$, $\bW \bz$ has at least $n$ non-negative coordinates, so the argument above implies that there exists at least 
    $n$ values of $\mu \in [m]$ such that $(\bW \bx)_\mu = (\bW \by)_\mu$.
    Since $E(\bW)$ stands, this shows that $\bx = \by$.

    \myskip 
    Let us now show $(i) \Rightarrow (ii)$.
    We divide $\bbR^n$ into equivalence classes defined by the relation 
    $\bx \sim \by \Leftrightarrow \forall \mu \in [m], (\bW \bx)_\mu > 0 \Leftrightarrow (\bW \by)_\mu > 0$.
    These equivalence classes $\mcR_S$ are defined by a subset $S$ of $[m]$, 
    so that for all $\bx \in \mcR_S$, $(\bW \bx)_\mu > 0 \Leftrightarrow \mu \in S$.
    Assume that there exists $\bx \in \bbR^n$ such that $\bW \bx \neq 0$ and $\bW \bx$ has strictly less than $n$ positive coordinates, i.e.\  
    $\bx \in \mcR_S$ with $|S| < n$. On $\mcR_S$, $\varphi_\bW$ is a linear transformation with $|S| < n$ linearly independent rows $\{\bW_\mu\}_{\mu \in S}$. 
    Its image has thus dimension smaller than $n$.
    Therefore, the following result, which implies that $\mcR_S$ has dimension $n$, then implies that $\varphi_\bW$ is not injective on $\mcR_S$. Having proved the contrapositive, we can then infer $(i) \Rightarrow (ii)$.
\end{proof}
    \begin{lemma}\label{lemma:technical_equivalence_classes}
        \noindent
        The following statement is true almost surely: 
        for all $S \subseteq [m]$, either $\mcR_S = \{0\}$, or there exists $\bx \in \mcR_S$ and $\varepsilon > 0$ such that 
        $B_2(\bx, \varepsilon) \subseteq \mcR_S$. 
    \end{lemma}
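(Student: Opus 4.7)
The strategy is to exhibit an interior point of $\mcR_S$ whenever $\mcR_S$ contains a nonzero vector. So suppose $\bx_0 \in \mcR_S$ with $\bx_0 \neq 0$; otherwise $\mcR_S \subseteq \{0\}$ (which only happens when $S = \emptyset$, because any $\bx_0 \in \mcR_S$ with $S \neq \emptyset$ must be nonzero), and the first alternative of the lemma holds. Throughout, I condition on the event $E(\bW)$ of the previous proof, which has probability one, so every sub-collection of at most $n$ rows of $\bW$ is linearly independent.

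Partition the constraints defining $\mcR_S$ at the point $\bx_0$ into active and strict parts. For $\mu \in S$ we have $(\bW \bx_0)_\mu > 0$ and for $\mu \notin S$ we have $(\bW \bx_0)_\mu \leq 0$; let $T \coloneqq \{\mu \notin S \,:\, (\bW \bx_0)_\mu = 0\}$ be the ``boundary'' inactive constraints. The first step is to show $|T| \leq n-1$. Indeed, the submatrix $\bW_T$ satisfies $\bW_T \bx_0 = 0$ with $\bx_0 \neq 0$, so its kernel is non-trivial and hence $\mathrm{rank}(\bW_T) < n$. Under $E(\bW)$, however, any $n$ rows of $\bW$ are linearly independent, so $|T| \geq n$ would force $\mathrm{rank}(\bW_T) = n$, a contradiction.

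The second step is to construct a perturbation direction $\bv \in \bbR^n$ with $(\bW \bv)_\mu < 0$ for all $\mu \in T$. Since $|T| < n$ and, by $E(\bW)$, the rows $\{\bW_\mu\}_{\mu \in T}$ are linearly independent, the linear system $\bW_T \bv = -\mathbf{1}_T$ admits at least one solution $\bv$. For $\varepsilon > 0$ small enough, the vector $\bx_\varepsilon \coloneqq \bx_0 + \varepsilon \bv$ then satisfies
\begin{align*}
    (\bW \bx_\varepsilon)_\mu > 0 \;\; (\mu \in S), \qquad (\bW \bx_\varepsilon)_\mu = -\varepsilon \;\; (\mu \in T), \qquad (\bW \bx_\varepsilon)_\mu < 0 \;\; (\mu \notin S \cup T),
\end{align*}
where the first and third inequalities follow from the fact that the original strict inequalities at $\bx_0$ are preserved by continuity under small perturbations. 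All constraints defining $\mcR_S$ now hold \emph{strictly} at $\bx_\varepsilon$, so by continuity of $\bx \mapsto \bW \bx$ there exists $\delta > 0$ with $B_2(\bx_\varepsilon, \delta) \subseteq \mcR_S$, which is the desired conclusion.

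The only subtle point is the first step, where one must rule out $|T| \geq n$: the proof crucially uses that $\bx_0$ was assumed nonzero together with the generic position hypothesis $E(\bW)$. All the remaining steps are then routine continuity and linear algebra arguments.
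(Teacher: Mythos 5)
Your proof is correct and takes essentially the same approach as the paper: both condition on the almost-sure event $E(\bW)$, observe that $\bx_0 \neq 0$ forces the active-constraint set $T \coloneqq \{\mu \notin S : \bW_\mu \cdot \bx_0 = 0\}$ to have fewer than $n$ elements, and then perturb $\bx_0$ to make those constraints strict while keeping the others undisturbed. The only difference is cosmetic — you solve $\bW_T \bv = -\mathbf{1}_T$ to push all active constraints strictly negative in a single step, whereas the paper relaxes them one index at a time via an iterative argument that decrements $|T|$ by one; your one-shot formulation is a bit cleaner but relies on the same linear-algebra fact.
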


\begin{proof}[Proof of Lemma~\ref{lemma:technical_equivalence_classes} --]
    By a union bound over all $S \subseteq [m]$, it suffices to show this statement a.s.\ for any fixed $S \subseteq [m]$.
    Let us assume that $\mcR_S \neq 0$. The following statement implies the conclusion of Lemma~\ref{lemma:technical_equivalence_classes}: 
    \begin{align}\label{eq:to_show_equivalence_class}
        \bbP \Big\{\forall \bx \in \mcR_S, \, \exists \mu \notin S \textrm{ s.t. } \bW_\mu \cdot \bx = 0\Big\} = 0.
    \end{align}
    Indeed, one can then a.s.\ find an element $\bx \in \mcR_S$ such that $(\bW \bx)_\mu > 0$ for all $\mu\in S$ and $(\bW \bx)_\mu < 0$ for all $\mu\notin S$, therefore $B_2(\bx, \varepsilon) \subseteq \mcR_S$ for sufficiently small $\varepsilon$. 
    We now show eq.~\eqref{eq:to_show_equivalence_class}.
     
    \myskip
    Assume that there exists $\bx \in \mcR_S \backslash\{0\}$, with $\nu_1, \cdots, \nu_{k_\bx} \in [m]$ all indices such that
    $\bW_\nu \cdot \bx = 0$, that satisfies $k_\bx \geq 1$. Note that by $E(\bW)$ (which stands a.s.) and since $\bx \neq 0$ we must have $k_\bx < n $. Thus, since $\{\bW_{\nu_i}\}_{i=1}^{k_\bx}$ are linearly independent on $E(\bW)$, 
    we can then fix $\by \in \big(\{\bW_{\nu_i}\}_{i=1}^{k_\bx-1} \big)^\perp$ such that $\bW_{\nu_{k_\bx}} \cdot \by < 0$.
    Consider $\bx' = \bx + \delta \by$ with arbitrary $\delta > 0$. 
    By hypothesis, $\bW_{\nu_i} \cdot \bx' = 0$ for all $i \in [k_\bx-1]$. Moreover, for $\delta$ small enough, 
    $\bW_\mu \cdot \bx'$ has the same sign as $\bW_\mu \cdot \bx$ if $\mu \notin \{\nu_1, \cdots, \nu_{k_\bx}\}$.
    Finally, $\bW_{\nu_{k_\bx}} \cdot \bx' = \delta \bW_{\nu_{k_\bx}} \cdot \by < 0$. In the end, taking $\delta$ small enough, we have found 
     $\bx' \in \mcR_S$ with $k_{\bx'} = k_{\bx} - 1$. Iterating this procedure, we have shown that a.s.\ there exists a point $\bx \in \mcR_S$ such that $k_\bx = 0$, which implies eq.~\eqref{eq:to_show_equivalence_class}.
\end{proof}

\subsection{Proof of Lemma~\ref{lemma:cover}}\label{subsec_app:proof_cover}

Let us first recall Cover's theorem \cite{cover1965geometrical}. 
We use the $\mathrm{sign}(x)$ function, with the convention $\sign(0) = 0$.
We call a set of vectors $\{\bW_1,\cdots,\bW_m\}$ in $\bbR^n$ in \emph{general position} if it has no linearly independent subset of size strictly less than $n$.
Cover's theorem is an exact formula for the number of dichotomies\footnote{A dichotomy is a binary labeling of the vectors.} of this set that are realizable by a linear separation:
\begin{theorem}[Cover \cite{cover1965geometrical}]\label{thm:cover}
    \noindent
    Let $\bW_1,\cdots,\bW_m \in \bbR^n$ be in general position. Then
    \begin{align*}
        \sum_{\bepsilon \in \{\pm 1\}^m} \indi \Big\{ \exists \bx \in \mcS^{n-1} \textrm{ realizing } \forall \mu \in [m] \, : \, \mathrm{sign}(\bW_\mu \cdot \bx) = \varepsilon_\mu \Big\} = 2 \sum_{k=0}^{n-1} \binom{m-1}{k}.
    \end{align*}
\end{theorem}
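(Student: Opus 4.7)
The plan is to prove Cover's formula by induction, after recasting the left-hand side as the number of chambers in a central hyperplane arrangement. For $\mu = 1, \dots, m$ let $H_\mu \coloneqq \{\bx \in \bbR^n : \bW_\mu \cdot \bx = 0\}$. The indicator in the sum is nonzero for a given $\bepsilon$ precisely when $\bepsilon$ is the sign pattern of some $\bx \in \mcS^{n-1} \setminus \bigcup_\mu H_\mu$, and two such points realize the same pattern iff they lie in the same connected component of this complement. Hence the LHS equals $N(m,n)$, the number of open connected regions on $\mcS^{n-1}$ (equivalently on $\bbR^n \setminus \{0\}$) cut out by the arrangement $\{H_\mu\}_\mu$, whenever $\{\bW_\mu\}$ is in general position.

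Next I would establish the recursion $N(m,n) = N(m-1,n) + N(m-1,n-1)$ for $m,n \geq 2$. Starting from the arrangement of $H_1, \dots, H_{m-1}$, adding $H_m$ splits exactly those old regions that $H_m$ crosses, each such region splitting into two, so the number of new regions created equals the number of full-dimensional chambers on $H_m \cong \bbR^{n-1}$ cut out by the restricted arrangement $\{H_\mu \cap H_m\}_{\mu < m}$. To invoke the inductive hypothesis for this count I need general position to be inherited by the restriction: any $n-1$ vectors among the orthogonal projections of $\bW_1, \dots, \bW_{m-1}$ onto $H_m = \bW_m^\perp$ are linearly independent iff the corresponding $n-1$ original vectors together with $\bW_m$ form a linearly independent set in $\bbR^n$, which follows directly from general position of $\{\bW_1, \dots, \bW_m\}$. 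Hence the splitting count is $N(m-1,n-1)$ and the recursion holds. The base cases $N(1,n) = 2$ (two open half-spaces) and $N(m,1) = 2$ (the two half-lines) are immediate.

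Finally I would verify that $C(m,n) \coloneqq 2\sum_{k=0}^{n-1}\binom{m-1}{k}$ satisfies the same recursion and base cases, so $N = C$ by induction. Using Pascal's identity,
\begin{align*}
C(m,n) - C(m-1,n) &= 2\sum_{k=0}^{n-1}\Big[\binom{m-1}{k} - \binom{m-2}{k}\Big] = 2\sum_{k=1}^{n-1}\binom{m-2}{k-1} = C(m-1,n-1),
\end{align*}
and $C(1,n) = C(m,1) = 2$. The main obstacle is the recursion step: justifying cleanly that each chamber of the trace arrangement on $H_m$ corresponds bijectively to an old chamber split by $H_m$ (a topological argument via local connectedness and the fact that $H_m$ intersects at most one old chamber per trace chamber), combined with the linear-algebra verification that general position descends to the restriction. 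Everything else is bookkeeping and induction.
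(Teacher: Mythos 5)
The paper cites Cover's theorem from \cite{cover1965geometrical} without reproducing a proof, so there is no paper argument to compare your proposal against; the paper's Appendix B.2 only uses the statement to derive Lemma~\ref{lemma:cover}. Your plan is the classical hyperplane-arrangement induction, which is essentially Cover's own proof, and it is correct.

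The one place you flag as ``the main obstacle'' --- the bijection between chambers of the trace arrangement on $H_m$ and the old chambers that $H_m$ splits, together with the claim that each split produces exactly two pieces --- is resolved more cleanly than your sketch suggests, and does not require a topological local-connectedness argument. Each chamber of a central arrangement is an intersection of open half-spaces through the origin, hence an open convex cone $R$. If $H_m$ meets $R$, then $H_m \cap R$ is the intersection of a hyperplane with a convex set, so it is convex and in particular connected; it is therefore contained in a single trace chamber, and since a trace chamber is a connected subset of $\bbR^n \setminus \bigcup_{\mu<m} H_\mu$ it is contained in a single old chamber. This gives the bijection. Convexity also gives immediately that $R \setminus H_m$ has exactly two components when $H_m$ meets $R$. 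Your linear-algebra verification that general position descends to the restriction (projections $P\bW_{\mu_1}, \dots, P\bW_{\mu_{n-1}}$ onto $\bW_m^\perp$ are dependent iff $\bW_{\mu_1},\dots,\bW_{\mu_{n-1}},\bW_m$ are dependent) is correct, and the Pascal-identity verification and base cases are routine. With the convexity observation substituted for the flagged hand-wave, the argument is complete.
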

Let us now show that Theorem~\ref{thm:cover} implies Lemma~\ref{lemma:cover}.
We assume $\alpha < 3$, so in particular we can fix $\delta > 0$ such that $m \leq (3-\delta) (n-1)$ for $n$ large enough.
We denote $\tilde{m} = m - (n-1) \leq (2-\delta) n$.
Since $\bW$ is a Gaussian matrix, the set $\{\bW_\mu\}_{\mu \in [\tilde{m}]}$ is a.s.\ in general position.
Moreover, by sign invariance, 
for any $\bepsilon \in \{\pm 1\}^m$ we have: 
\begin{align*}
    \bbP_\bW \Big\{\exists \bx \in \mcS^{n-1} \, : \, \forall \mu \in [\tilde{m}] , \, \mathrm{sign}(\bW_\mu \cdot \bx) = \varepsilon_\mu \Big\} &=
    \bbP_\bW \Big\{\exists \bx \in \mcS^{n-1} \, : \, \forall \mu \in [\tilde{m}] , \, \mathrm{sign}(\bW_\mu \cdot \bx) = -1 \Big\}.
\end{align*} 
For $\bepsilon$ uniformly sampled in $\{\pm 1\}^m$ (independently of $\bW$), 
we denote $\bbP_{\bW,\bepsilon}$ the joint probability law of $(\bW, \bepsilon)$, and $\bbP_\bepsilon$ the law of $\bepsilon$. 
The previous remark on sign invariance allows to deduce:
\begin{align*}
    &\bbP_\bW \Big\{\exists \bx \in \mcS^{n-1} \, : \, \forall \mu \in [\tilde{m}] , \, \mathrm{sign}(\bW_\mu \cdot \bx) = -1 \Big\} \\
    &= 
    \bbP_{\bW,\bepsilon} \Big\{\exists \bx \in \mcS^{n-1} \, : \, \forall \mu \in [\tilde{m}] , \, \mathrm{sign}(\bW_\mu \cdot \bx) = \varepsilon_\mu \Big\}, \\ 
    &= 
    \EE_{\bW} \Big[\bbP_\bepsilon\Big(\exists \bx \in \mcS^{n-1} \, : \, \forall \mu \in [\tilde{m}] , \, \mathrm{sign}(\bW_\mu \cdot \bx) = \varepsilon_\mu \Big | \bW\Big)\Big], \\
    &= 
    \frac{1}{2^{\tilde{m}}} \times 2 \sum_{k=0}^{n-1} \binom{\tilde{m}-1}{k},
\end{align*}
by Theorem~\ref{thm:cover}. Since $\tilde{m} \leq (2-\delta) n$, it is then elementary to check that this implies
\begin{align*}
    \lim_{n \to \infty} \bbP_\bW \Big\{\exists \bx \in \mcS^{n-1} \, : \, \forall \mu \in [\tilde{m}] , \, \mathrm{sign}(\bW_\mu \cdot \bx) = -1 \Big\} &= 1.
\end{align*}
The proof is then finished by noticing that if $\bx$ satisfies $\mathrm{sign}(\bW_\mu \cdot \bx) = -1$ for all $\mu \in [\tilde{m}]$, 
it must satisfy $E_\bW(\bx) \leq m - \tilde{m} < n$, and using eq.~\eqref{eq:pmn_minimum}.

\subsection{Proof of Theorem~\ref{thm:free_entropy_concentration}}\label{subsec_app:proof_thm_fe_concentration}

    It is easy to see that if $\bW, \bW'$ are two matrices for which $\bW'_{\nu} = \bW_{\nu}$ for all $\nu \in [m] \backslash \{\mu\}$, then 
    \begin{align*}
        |\Phi_n(\bW, \beta) - \Phi_n(\bW',\beta)| &\leq \frac{2 \beta}{n}.
    \end{align*}
    The theorem is then a simple consequence of McDiarmid's inequality (see e.g.\ Theorem~6.2 of \cite{boucheron2013concentration}).

\subsection{Proof of Corollary~\ref{cor:sufficient_non_injectivity}}\label{subsec_app:proof_cor_sufficient_non_inj}
    By a dominated convergence argument, we have: 
    \begin{align*}
        \partial_\beta \{\EE_\bW \Phi_n(\bW, \beta)\} &= - \frac{1}{n} \EE_\bW \Big\{\frac{\int_{\mcS^{n-1}} \mu_n(\rd \bx) \, E_\bW(\bx) \, e^{-\beta E_\bW(\bx)}}{\int_{\mcS^{n-1}} \mu_n(\rd \bx) \, e^{-\beta E_\bW(\bx)}}\Big\}.
    \end{align*}
    Therefore
    \begin{align*}
        -\beta^2 \partial_\beta \{\EE_\bW \Phi_n(\bW, \beta)/\beta\} &= \frac{1}{n}\EE_\bW \Big\{\frac{\int_{\mcS^{n-1}} \mu_n(\rd \bx) \, \beta E_\bW(\bx) \, e^{-\beta E_\bW(\bx)}}{\int_{\mcS^{n-1}} \mu_n(\rd \bx) \, e^{-\beta E_\bW(\bx)}}\Big\} + \EE_\bW \Phi_n(\bW, \beta).
    \end{align*}
    Recall the definition of the Gibbs measure $\bbP_{\beta,\bW}$ in eq.~\eqref{eq:def_Gibbs}. It is easy to see that 
    the previous equation relates directly to the entropy of $\bbP_{\beta,\bW}$, i.e.\ 
    \begin{align*}
        \beta^2 \partial_\beta \{\EE_\bW \Phi_n(\bW, \beta)/\beta\} &= \frac{1}{n} \EE_\bW \int \rd \bbP_{\beta,\bW}(\bx) \, \log \frac{\rd \bbP_{\beta,\bW}}{\rd \mu_n}(\bx) = \EE_\bW D_{\rm KL}(\bbP_{\beta,\bW} \| \mu_n) \geq 0.
    \end{align*}
    In the language of statistical physics, this is a rewriting of the fact that the temperature derivative of the free energy is given by (minus) the entropy.
    In particular, for any $n$, $\beta \mapsto -\EE_\bW \Phi_n(\bW, \beta)/\beta$ is non-increasing, and 
    in the limit this shows that $\beta \mapsto -\Phi(\alpha,\beta)/\beta$ is non-increasing.
    The positivity of this function follows from $\Phi_n(\bW, \beta) \leq 0$, since $E_\bW(\bx) \geq 0$.
    
    \myskip 
    Let us now assume that there exists some $\beta < \infty$ such that $- \Phi(\alpha,\beta) < \beta$.
    In particular, fixing $\delta > 0$, for $n$ large enough we have $\EE_{\bW} \Phi_n(\bW,\beta) \geq - \beta + \delta$.
    Using Theorem~\ref{thm:free_entropy_concentration}, for large enough $n$, this implies 
    \begin{align*}
        \bbP_\bW[\Phi_n(\bW, \beta) \leq - \beta] &\leq 2 \exp\Big\{- \frac{n [\beta + \EE_\bW \Phi_n(\bW,\beta)]^2}{2 \alpha_n \beta^2}\Big\}, \\ 
        &\leq 2 \exp\Big\{- C(\alpha,\beta) n\Big\},
    \end{align*}
    for some $C(\alpha, \beta) > 0$.
    In particular, using eq.~\eqref{eq:bound_Phi_energy} and Proposition~\ref{prop:injectivity_random_intersection}: 
    \begin{align}\label{eq:bound_pmn}
        p_{m,n} &= \bbP_\bW \Big[\min_{\bx \in \mcS^{n-1}} E_{\bW}(\bx) \geq n\Big]
        \leq \bbP_\bW \Big[\Phi_n(\bW, \beta) \leq - \beta\Big]\leq 2 \exp\Big\{-C(\alpha,\beta) n \Big\}.
    \end{align}
    The claim follows.

\subsection{Proof of Theorem~\ref{thm:bound_Gordon}}\label{subsec_app:bound_Gordon}

    \textbf{Remark --} In what follows we usually consider $m = \alpha n$ with $\alpha > 0$, 
    and the proof can be straightforwardly generalized to the original assumption $m/n \to \alpha > 0$.
    For lightness of the presentation, we assume the simplified statement we described. 

    \myskip
    First, note that given Lemma~\ref{lemma:cover}, we can assume $\alpha \geq 3$ in what follows.
    Using Proposition~\ref{prop:injectivity_random_intersection}, we want to characterize 
    \begin{align*}
        G(\bW) &\coloneqq \min_{\bx \in \mcS^{n-1}} e_\bW(\bx) = \min_{\bx \in \mcS^{n-1}} \frac{1}{n} \sum_{\mu=1}^m \indi\{\bW_\mu \cdot \bx > 0\},
    \end{align*}
    in which $\bW = \{\bW_\mu\}_{\mu=1}^m \iid \mcN(0,\Id_n)$.
    The minimum of this function is reached since it
    takes discrete values.
    Introducing an auxiliary variable $z_\mu \coloneqq \bW_\mu \cdot \bx$, and a Lagrange multiplier $\blambda \in \bbR^m$ to fix this relation, 
    the problem is equivalent by strong duality to
    \begin{align}\label{eq:ground_state_duality}
        \nonumber
        G(\bW) &= \min_{\bx \in \mcS^{n-1}} \inf_{\bz \in \bbR^m} \sup_{\blambda \in \bbR^m} \Big\{\blambda^\intercal \bW \bx - \blambda^\intercal \bz + \frac{1}{n} \sum_{\mu=1}^m \indi\{z_\mu > 0\}\Big\}, \\ 
        &=  \inf_{\bz \in \bbR^m} \min_{\bx \in \mcS^{n-1}} \sup_{\blambda \in \bbR^m} \Big\{\blambda^\intercal \bW \bx - \blambda^\intercal \bz + \frac{1}{n} \sum_{\mu=1}^m \indi\{z_\mu > 0\}\Big\}.
    \end{align}
    Note that the infimum over $\bz$ in eq.~\eqref{eq:ground_state_duality} is actually done over $\bz \in \bW \mcS^{n-1}$, since the supremum over $\blambda$ becomes $+\infty$ for $\bz \neq \bW \bx$. 
    Letting $\|\bW\|_{\rm op} \coloneqq \max_{\bx \in \mcS^{n-1}} \|\bW \bx\|_2$, 
    we know by classical concentration inequalities (see e.g.\ \cite{vershynin2018high} - Theorem 4.4.5) and since $m = \alpha n$, 
    that $\bbP[\|\bW\|_{\mathrm{op}} \geq K \sqrt{n}] \leq e^{-n}$, for some constant $K > 1$ (that might depend on $\alpha$).
    Let us denote $B(K) \coloneqq \{\bz \in \bbR^m \, : \|\bz \|_2 \leq K \sqrt{n}\}$.
    By the argument above and the law of total probability, for all $t > 0$, 
    \begin{align}\label{eq:ub_G_GK}
        \bbP[G(\bW) \leq t] \leq \bbP[G_K(\bW) \leq t] + e^{-n},
    \end{align}
    with 
    \begin{align}\label{eq:ground_state_duality_K}
        G_{K}(\bW) \coloneqq \inf_{\substack{\bz \in B(K)}} \min_{\bx \in \mcS^{n-1}} \sup_{\blambda \in \bbR^m} \Big\{\blambda^\intercal \bW \bx - \blambda^\intercal \bz + \frac{1}{n} \sum_{\mu=1}^m \indi\{z_\mu > 0\}\Big\}.
    \end{align}
    Moreover, we will approximate $\indi\{z > 0\}$ by continuous functions; we let, for any $\delta \geq 0$:
    \begin{align}\label{eq:def_gdelta}
        \ell_\delta(x) \coloneqq
        \begin{cases}
            0 &\textrm { if } x \leq 0, \\
            1 &\textrm { if } x > \delta, \\ 
            x/\delta &\textrm { if } x \in (0, \delta].
        \end{cases}
    \end{align}
    Since $\ell_\delta(x) \leq \indi\{x > 0\}$, it is clear that:
    \begin{align}\label{eq:ground_state_duality_smoothed}
        G_K(\bW) &\geq G_{\delta,K}(\bW) \coloneqq \inf_{\substack{\bz \in B(K)}} \min_{\bx \in \mcS^{n-1}} \sup_{\blambda \in \bbR^m} \Big\{\blambda^\intercal \bW \bx - \blambda^\intercal \bz + \frac{1}{n} \sum_{\mu=1}^m \ell_\delta(z_\mu)\Big\}.
    \end{align}
    We now make use of the Gaussian min-max theorem \cite{gordon1988milman,thrampoulidis2015regularized}:
    \begin{proposition}[Gaussian min-max theorem]\label{prop:gaussian_minmax}
        \noindent
        Let $\bW \in \bbR^{m \times n}$ be an i.i.d.\ standard normal matrix, and $\bg \in \bbR^m,\bh \in \bbR^n$ two independent vectors 
        with i.i.d.\ $\mcN(0,1)$ coordinates.
        Let $\mcS_\bv, \mcS_\bu$ be two compact subsets respectively of $\bbR^n$ and $\bbR^m$, and let 
        $\psi : \mcS_\bv \times \mcS_\bu \to \bbR$ be a continuous function. 
        We define the two optimization problems: 
        \begin{align*}
            \begin{dcases}
                C(\bW)  &\coloneqq \min_{\bv \in \mcS_\bv} \max_{\bu \in \mcS_{\bu}} \Big\{\bu^\intercal \bW \bv+ \psi(\bv, \bu)\Big\}, \\ 
                \mcC(\bg, \bh) &\coloneqq \min_{\bv \in \mcS_\bv} \max_{\bu \in \mcS_{\bu}} \Big\{\norm{\bu} \bh^\intercal \bv + \norm{\bv} \bg^\intercal \bu+ \psi(\bv, \bu)\Big\}.
            \end{dcases}
        \end{align*}
        Then, for all $t \in \bbR$, one has 
        \begin{align}\label{eq:gordon_bound}
            \bbP_\bW[C(\bW) \leq t] &\leq 2 \, \bbP_{\bg,\bh}[\mcC(\bg,\bh) \leq t].
        \end{align}
    \end{proposition}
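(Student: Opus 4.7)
The plan is to deduce the claim from Gordon's classical Gaussian process comparison inequality~\cite{gordon1985some}, following the standard route to the one-sided Convex Gaussian Min-max Theorem~\cite{thrampoulidis2015regularized}. The key idea is an auxiliary Gaussian augmentation that makes Gordon's increment conditions hold, followed by a short symmetrization argument that produces the factor of~$2$.

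First I would introduce an independent auxiliary $G_0 \sim \mcN(0,1)$ and define two centered Gaussian processes indexed by $(\bv, \bu) \in \mcS_\bv \times \mcS_\bu$:
\begin{align*}
    A_{\bv, \bu} &\coloneqq \|\bu\|\bh^\intercal \bv + \|\bv\|\bg^\intercal \bu, &
    B_{\bv, \bu} &\coloneqq \bu^\intercal \bW \bv + G_0 \|\bu\|\|\bv\|.
\end{align*}
A direct computation shows that, for fixed $\bv$ and any $\bu, \bu'$,
\begin{align*}
    \EE[(A_{\bv,\bu} - A_{\bv,\bu'})^2] = \EE[(B_{\bv,\bu} - B_{\bv,\bu'})^2] = \|\bv\|^2\bigl[(\|\bu\| - \|\bu'\|)^2 + \|\bu - \bu'\|^2\bigr],
\end{align*}
so Gordon's within-group hypothesis holds with equality; and for $\bv \neq \bv'$ and any $\bu, \bu'$,
\begin{align*}
    \EE[(A_{\bv,\bu} - A_{\bv',\bu'})^2] - \EE[(B_{\bv,\bu} - B_{\bv',\bu'})^2] = 2\bigl(\|\bv\|\|\bv'\| - \bv^\intercal \bv'\bigr)\bigl(\|\bu\|\|\bu'\| - \bu^\intercal \bu'\bigr) \geq 0
\end{align*}
by two Cauchy--Schwarz applications, giving the correct sign for the between-group hypothesis. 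Since the deterministic term $\psi(\bv, \bu)$ does not affect either increment, Gordon's theorem yields, for every $\lambda \in \bbR$,
\begin{align*}
    \bbP_{\bg,\bh}[\mcC(\bg,\bh) \geq \lambda] \leq \bbP_{\bW, G_0}[\tilde C(\bW, G_0) \geq \lambda], \quad \tilde C(\bW, G_0) \coloneqq \min_\bv \max_\bu \bigl\{\bu^\intercal \bW \bv + G_0 \|\bu\|\|\bv\| + \psi(\bv, \bu)\bigr\},
\end{align*}
which by continuity of these min-max distributions is equivalent to $\bbP[\tilde C(\bW, G_0) \leq t] \leq \bbP[\mcC(\bg,\bh) \leq t]$.

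To recover $C(\bW)$ from $\tilde C(\bW, G_0)$ I would symmetrize on the sign of $G_0$: on $\{G_0 \leq 0\}$ the extra term $G_0 \|\bu\|\|\bv\|$ is pointwise nonpositive, so $\tilde C(\bW, G_0) \leq C(\bW)$, whence by independence of $G_0$ and $\bW$ together with $\bbP[G_0 \leq 0] = 1/2$,
\begin{align*}
    \bbP[\tilde C(\bW, G_0) \leq t] \geq \bbP\bigl[G_0 \leq 0,\ C(\bW) \leq t\bigr] = \tfrac{1}{2}\, \bbP[C(\bW) \leq t].
\end{align*}
Chaining with the Gordon bound gives the stated inequality. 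The main obstacle will be the variance mismatch, which causes a naive application of Gordon to $\bu^\intercal \bW \bv$ and $\|\bu\|\bh^\intercal \bv + \|\bv\|\bg^\intercal \bu$ to fail---the within-group increment condition is violated with the wrong sign. The $G_0$-augmentation is precisely what repairs this, and the factor~$2$ is the price paid for the subsequent symmetrization; a secondary technicality is the strict-versus-weak inequality conversion, which is harmless since the min-max of a non-degenerate continuous Gaussian field has a continuous distribution.
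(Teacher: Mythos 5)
Your proof is correct: the $G_0$-augmentation, the increment computations, the Cauchy--Schwarz argument for the sign of the between-group defect, the application of Gordon's comparison inequality, and the factor-of-two symmetrization on the sign of $G_0$ are all handled properly, and this is exactly the standard argument from~\cite{gordon1988milman} as presented in~\cite{thrampoulidis2015regularized}. The paper does not itself prove Proposition~\ref{prop:gaussian_minmax} — it cites it as a known result — so there is no in-paper proof to compare against.

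Two minor remarks. First, Gordon's increment comparison theorem is typically stated over finite index sets; to cover the compact sets $\mcS_\bv, \mcS_\bu$ one should restrict to a dense countable subset and pass to the limit using continuity of $\psi$ and of the Gaussian sample paths — standard, but worth mentioning. Second, the appeal to continuity of the min-max distributions is unnecessary for the one-sided conversion you need: taking complements in Gordon's conclusion gives $\bbP[\tilde C(\bW, G_0) < \lambda] \leq \bbP[\mcC(\bg,\bh) < \lambda]$ for every $\lambda$, and sending $\lambda \downarrow t$ yields $\bbP[\tilde C(\bW, G_0) \leq t] \leq \bbP[\mcC(\bg,\bh) \leq t]$ by monotone convergence, without any regularity assumption on the laws. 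This is preferable here since $\psi$ is an arbitrary continuous function and establishing nondegeneracy of the resulting min-max law would require extra work.
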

    \textbf{Remark I --} 
    It is easy to see from the proof of \cite{gordon1988milman,thrampoulidis2015regularized} that
    the statement of the theorem also holds if $\bW$ is a block matrix of the form 
    \begin{align*}
        \bW = \begin{pmatrix}
            \bW_1 & 0 \\ 
            0 & 0
        \end{pmatrix},
    \end{align*}
    with $\bW_1 \in \bbR^{m_1 \times n_1}$ having i.i.d.\ $\mcN(0,1)$ elements.
    Denoting $\bu^\intercal \bW \bv = \bu_1^\intercal \bW_1 \bv_1$,
    the definition of 
    the auxiliary problem that appears in the theorem is then modified as:
    \begin{align}\label{eq:auxiliary_pb_gordon_block}
         \mcC(\bg, \bh) \coloneqq \min_{\bv \in \mcS_\bv} \max_{\bu \in \mcS_{\bu}} \Big\{\norm{\bu_1} \bh^\intercal \bv_1 + \norm{\bv_1} \bg^\intercal \bu_1+ \psi(\bv, \bu)\Big\},
    \end{align}
    for $\bg \sim \mcN(0, \Id_{m_1})$, $\bh \sim \mcN(0, \Id_{n_1})$.

    \myskip
    \textbf{Remark II --}
    The full result of \cite{thrampoulidis2015regularized} actually includes the proof of a converse bound to eq.~\eqref{eq:gordon_bound} when the function $\psi$ is convex-concave, and the sets $\mcS_\bu, \mcS_\bw$ are convex.
    Here, we do not expect such a converse bound to be true, since the solution is conjecturally described by the full-RSB equations, and we will see that 
    the upper bound of eq.~\eqref{eq:gordon_bound} corresponds to the replica-symmetric (RS) solution.

    \myskip
    Let us first state a lemma that simplifies the auxiliary problem:
    \begin{lemma}[Auxiliary problem simplification]\label{lemma:ao_simplification}
        \noindent
        For any $\delta> 0$, and any $A \in (0,\infty]$, we define the auxiliary optimization problem, for $\bg \in \bbR^m,\bh \in \bbR^n$:
        \begin{align}\label{eq:def_CAd_gh}
                \mcC_{A,\delta}(\bg, \bh) &\coloneqq \inf_{\bz \in \bbR^m} \min_{\bx \in \mcS^{n-1}} \max_{\norm{\blambda} \leq A} \Big\{\norm{\blambda} \bh^\intercal \bx + \bg^\intercal \blambda - \blambda^\intercal \bz + \frac{1}{n} \sum_{\mu=1}^m \ell_\delta(z_\mu)\Big\}.
        \end{align}
        Then $A \mapsto \mcC_{A, \delta}(\bg, \bh)$ is non-decreasing and one has:
        \begin{align}\label{eq:def_Cd_gh}
                \lim_{A \to \infty} \mcC_{A, \delta}(\bg, \bh) &= \min_{\substack{\bz \in \bbR^m \\ \norm{\bz} \leq \norm{\bh}}} \Big\{\frac{1}{n} \sum_{\mu=1}^m \ell_\delta(g_\mu - z_\mu)\Big\}.
        \end{align}
    \end{lemma}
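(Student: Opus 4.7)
My plan is to first evaluate the inner maximum over $\blambda$ in closed form. Writing $\blambda = r \bu$ with $r \in [0, A]$ and $\bu \in \mcS^{m-1}$, the terms depending on $\blambda$ become $r\bh^\intercal\bx + r\bu^\intercal(\bg - \bz)$; maximizing first over $\bu$ gives $r[\bh^\intercal \bx + \norm{\bg - \bz}]$, and then maximizing over $r \in [0, A]$ yields
\begin{align*}
\max_{\norm{\blambda} \leq A} \big\{\norm{\blambda}\, \bh^\intercal \bx + \blambda^\intercal (\bg - \bz)\big\} = A \cdot \big[\bh^\intercal \bx + \norm{\bg - \bz}\big]_+,
\end{align*}
where $[\cdot]_+ \coloneqq \max(0, \cdot)$. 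Monotonicity of $A \mapsto \mcC_{A,\delta}(\bg,\bh)$ is then immediate, since enlarging the constraint set for $\blambda$ can only increase the inner max and hence the outer infimum.

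Next, I would establish matching upper and lower bounds on $\lim_{A\to\infty} \mcC_{A,\delta}(\bg,\bh)$, denoting the right-hand side of eq.~\eqref{eq:def_Cd_gh} by $\tilde{\mcC}(\bg,\bh)$. For the upper bound, fix a near-minimizer $\bz^\star$ of $\tilde{\mcC}(\bg,\bh)$ with $\norm{\bz^\star} \leq \norm{\bh}$, and use the test point $\bx^\# = -\bh/\norm{\bh}$ together with $\bz^\# = \bg - \bz^\star$ inside the infimum defining $\mcC_{A,\delta}(\bg,\bh)$: since $\bh^\intercal \bx^\# + \norm{\bg - \bz^\#} = -\norm{\bh} + \norm{\bz^\star} \leq 0$, the inner max vanishes and the candidate value reduces to $n^{-1}\sum_\mu \ell_\delta(g_\mu - z^\star_\mu)$, yielding $\mcC_{A,\delta}(\bg,\bh) \leq \tilde{\mcC}(\bg,\bh)$ uniformly in $A$.

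For the lower bound I would argue by contradiction. Suppose that along some subsequence $A \to \infty$ there exist $(\bx^{(A)}, \bz^{(A)})$ with objective strictly below $\tilde{\mcC}(\bg,\bh) - \varepsilon$. Nonnegativity of $\ell_\delta$ forces the penalty term to obey $[\bh^\intercal \bx^{(A)} + \norm{\bg - \bz^{(A)}}]_+ = O(1/A)$, and this same bound rules out $\norm{\bz^{(A)}} \to \infty$, because otherwise eventually $\norm{\bg - \bz^{(A)}} - \norm{\bh} \geq 1$ and the first term would exceed $A$. Extracting a convergent subsequence $(\bx^{(A)}, \bz^{(A)}) \to (\bx^\infty, \bz^\infty)$ by compactness of $\mcS^{n-1}$ and of a large ball in $\bbR^m$, and passing to the limit, gives $\bh^\intercal \bx^\infty + \norm{\bg - \bz^\infty} \leq 0$, hence $\norm{\bg - \bz^\infty} \leq \norm{\bh}$, while continuity of $\ell_\delta$ yields $n^{-1}\sum_\mu \ell_\delta(z^\infty_\mu) \leq \tilde{\mcC}(\bg,\bh) - \varepsilon$. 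The point $\tilde{\bz} \coloneqq \bg - \bz^\infty$ is then feasible for eq.~\eqref{eq:def_Cd_gh} with strictly smaller value than $\tilde{\mcC}(\bg,\bh)$ (which is attained by continuity on the compact ball $\{\norm{\tilde\bz} \leq \norm{\bh}\}$), a contradiction. The main technical delicacy I expect is the coercivity argument excluding escape of $\bz^{(A)}$ to infinity as $A$ grows; once this is handled, the remainder reduces to the closed-form inner max and the change of variables $\tilde\bz = \bg - \bz$.
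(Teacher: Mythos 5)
Your proof is correct and follows essentially the same route as the paper's: both first reduce the inner maximum over $\blambda$ to the closed form $A\,[\bh^\intercal\bx + \|\bg - \bz\|]_+$, and both handle the $A \to \infty$ limit by a coercivity bound (the $A$-penalty forces $\|\bz^{(A)}\|$ into a compact set), followed by extraction of a convergent subsequence and passage to the limit using continuity of $\ell_\delta$. The only difference is cosmetic organization: the paper performs the change of variable $\bz \mapsto \bg - \bz$ up front and splits the infimum into the regions $\|\bz\| \leq \|\bh\|$ and $\|\bz\| > \|\bh\|$, proving the second region's limit dominates the first directly, while you keep the original variable and argue by contradiction; these are the same argument in mildly different clothing.
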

    Note that we added a constraint over $\norm{\blambda}$ in the auxiliary problem, so that the set of $\blambda$ considered is compact. This allows to deduce, using Proposition~\ref{prop:gaussian_minmax} (and Remark~I below) in eqs.~\eqref{eq:ground_state_duality_smoothed} and \eqref{eq:ub_G_GK}:
    \begin{lemma}\label{lemma:application_minmax}
        \noindent
        For all $\delta > 0$, and all $t \in \bbR$, one has 
        \begin{align*}
            \bbP_\bW[G(\bW) \leq t] &\leq 2 \bbP_{\bg,\bh}[\mcC_{\delta}(\bg,\bh) \leq t] + e^{-n},
        \end{align*}
        with $\mcC_{\delta}(\bg, \bh)$ the RHS of eq.~\eqref{eq:def_Cd_gh}, and $\bg, \bh$ vectors with i.i.d.\ $\mcN(0,1)$ coordinates.
    \end{lemma}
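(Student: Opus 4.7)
\textbf{Proof plan for Lemma~\ref{lemma:application_minmax}.}
Combining eqs.~\eqref{eq:ub_G_GK} and \eqref{eq:ground_state_duality_smoothed} immediately gives
\begin{align*}
\bbP_\bW[G(\bW) \leq t] \;\leq\; \bbP_\bW[G_{\delta,K}(\bW) \leq t] + e^{-n},
\end{align*}
so the whole task reduces to establishing $\bbP_\bW[G_{\delta,K}(\bW) \leq t] \leq 2\bbP[\mcC_\delta(\bg,\bh) \leq t]$. The obstruction to applying Proposition~\ref{prop:gaussian_minmax} directly is that the supremum over $\blambda$ in the definition of $G_{\delta,K}$ ranges over all of $\bbR^m$, which is not compact.

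To circumvent this, I would first introduce a truncated version $G_{\delta,K,A}(\bW)$ obtained by restricting $\blambda$ to the closed ball $\{\|\blambda\| \leq A\}$. Shrinking the domain of the supremum can only decrease the value, so $G_{\delta,K,A}(\bW)$ is non-decreasing in $A$; and since the inner supremum equals $A\|\bW\bx - \bz\|$, driving $A \to \infty$ sends to $+\infty$ any term with $\bW\bx \neq \bz$, forcing the infimum over $\bz \in B(K)$ to pick $\bz = \bW\bx$. Hence $G_{\delta,K,A}(\bW) \nearrow G_{\delta,K}(\bW)$ as $A \to \infty$. With all relevant sets $\mcS^{n-1}$, $B(K)$, $\{\|\blambda\| \leq A\}$ compact and the objective continuous, I would then apply Proposition~\ref{prop:gaussian_minmax} in its block-matrix version (Remark~I) by taking $\bv = (\bx,\bz) \in \mcS^{n-1} \times B(K)$, $\bu = \blambda \in \{\|\blambda\| \leq A\}$, $\psi(\bv,\bu) = -\blambda^\intercal \bz + \tfrac{1}{n}\sum_\mu \ell_\delta(z_\mu)$, and the padded matrix $\widetilde{\bW} = (\bW \mid 0_{m\times m})$ so that $\bu^\intercal \widetilde{\bW} \bv = \blambda^\intercal \bW \bx$. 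Using $\|\bv_1\| = \|\bx\| = 1$, this yields
\begin{align*}
\bbP_\bW[G_{\delta,K,A}(\bW) \leq t] \;\leq\; 2\,\bbP_{\bg,\bh}\big[\mcC_{A,\delta}^{(K)}(\bg,\bh) \leq t\big],
\end{align*}
where $\mcC_{A,\delta}^{(K)}$ is exactly the RHS of eq.~\eqref{eq:def_CAd_gh} with the extra constraint $\bz \in B(K)$.

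Relaxing the constraint $\bz \in B(K)$ to $\bz \in \bbR^m$ only decreases the infimum, so $\mcC_{A,\delta}^{(K)}(\bg,\bh) \geq \mcC_{A,\delta}(\bg,\bh)$ and the probability inequality transfers to $\mcC_{A,\delta}$. Finally, pass to the limit $A \to \infty$. By Lemma~\ref{lemma:ao_simplification}, $\mcC_{A,\delta} \nearrow \mcC_\delta$; combined with $G_{\delta,K,A} \nearrow G_{\delta,K}$, both families of events $\{G_{\delta,K,A} \leq t\}$ and $\{\mcC_{A,\delta} \leq t\}$ are nested and decreasing in $A$, and their intersections over $A$ are $\{G_{\delta,K} \leq t\}$ and $\{\mcC_\delta \leq t\}$ respectively. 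Continuity of probability along decreasing sequences of events then transports the bound through the limit, yielding $\bbP_\bW[G_{\delta,K}(\bW) \leq t] \leq 2\bbP[\mcC_\delta(\bg,\bh) \leq t]$ and, together with the opening inequality, the lemma.

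The main subtlety I anticipate is making the block-matrix form of the min-max theorem (Remark~I) accommodate the degenerate shape $\widetilde{\bW} = (\bW \mid 0)$ where the lower block of zero rows has height zero. If a fully legal Remark~I setup is preferred, the same conclusion can be reached without padding by first invoking Sion's minimax (justified by convexity of $\ell_\delta$ in $\bz$ together with affinity in $\blambda$, on compact convex sets) to absorb $\bz$ into a continuous function $\psi_K(\blambda) := \inf_{\bz \in B(K)}\{-\blambda^\intercal \bz + \tfrac{1}{n}\sum_\mu \ell_\delta(z_\mu)\}$, putting $G_{\delta,K,A}$ into the plain form $\min_\bx \max_\blambda \{\blambda^\intercal \bW \bx + \psi_K(\blambda)\}$ required by Proposition~\ref{prop:gaussian_minmax}; a symmetric application of Sion's minimax on the auxiliary problem then reconstructs $\mcC_{A,\delta}^{(K)}$ and the argument proceeds identically.
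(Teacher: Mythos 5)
Your proposal is correct and follows essentially the same path as the paper's proof: introduce the truncated ball constraint $\|\blambda\|\le A$, apply the block-matrix version of Gordon's min-max (Remark~I) with $\bv=(\bx,\bz)$, $\bu=\blambda$, and the padded matrix $(\bW\mid 0)$, drop the $\bz\in B(K)$ constraint to go from $\mcC_{A,\delta}^{(K)}$ to $\mcC_{A,\delta}$, and pass to $A\to\infty$ via Lemma~\ref{lemma:ao_simplification}. The only real difference is bookkeeping in the final limit step. The paper keeps $\bbP_\bW[G(\bW)\le t]$ on the left-hand side throughout (chaining $G\le G_K$, $G_K\ge G_{A,\delta,K}$, and then Gordon), so the left side is independent of $A$ and one takes the $A\to\infty$ limit only on the right, via dominated convergence. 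You instead leave $G_{\delta,K,A}$ on the left and then argue monotone convergence $G_{\delta,K,A}\nearrow G_{\delta,K}$. That claim is true, but not as immediate as you present it: the interchange of $\lim_A$ with $\inf_{\bz}\min_{\bx}$ needs a compactness argument (the minimizing $\bz_A$ is forced into $B(K)\cap\bW\mcS^{n-1}$ along a subsequence). You can sidestep this entirely by using only the one-sided inequality $G_{\delta,K,A}\le G_{\delta,K}$ (obvious from restricting the sup domain), which gives $\bbP[G_{\delta,K}\le t]\le\bbP[G_{\delta,K,A}\le t]$ for every $A$, after which only the right-hand limit is needed. Your remark about the degenerate block shape is a fair point of care, though the paper uses the identical padding and relies on the same reading of Remark~I; the Sion-minimax fallback you offer is a reasonable alternative if one wants to avoid the padded matrix entirely.
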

    Lemmas~\ref{lemma:ao_simplification} and \ref{lemma:application_minmax} are proven
     in Section~\ref{subsubsec:proof_lemma_application_minimax}.
    We are now ready to prove Theorem~\ref{thm:bound_Gordon}.
    Note that by weak duality:
    \begin{align*}
         \mcC_{\delta}(\bg, \bh) &= \inf_{\bz \in \bbR^m} \sup_{\kappa \geq 0} \Big\{\frac{\kappa}{n} (\norm{\bz}^2 - \norm{\bh}^2) + \frac{1}{n} \sum_{\mu=1}^m \ell_\delta(g_\mu - z_\mu)\Big\}, \\ 
         & \geq \sup_{\kappa \geq 0}  \inf_{\bz \in \bbR^m} \Big\{\frac{\kappa}{n} (\norm{\bz}^2 - \norm{\bh}^2) + \frac{1}{n} \sum_{\mu=1}^m \ell_\delta(g_\mu - z_\mu)\Big\} \coloneqq \mathcal{M}_{\delta}(\bg,\bh).
    \end{align*}
    Therefore $\bbP[\mcC_{\delta}(\bg,\bh) \leq t] \leq \bbP[\mcM_{\delta}(\bg,\bh) \leq t]$.
    Moreover, by $\norm{\bz}^2=\sum z_\mu^2$, one has:
    \begin{align}\label{eq:M_gh}
      \mcM_{\delta}(\bg,\bh) &= \sup_{\kappa \geq 0} \Big\{- \frac{\kappa}{n} \norm{\bh}^2 + \frac{1}{n} \sum_{\mu=1}^m \inf_{z \in \bbR} \{\kappa z^2 + \ell_\delta(g_\mu - z) \}\Big\}.
    \end{align}
    Let us show 
    \begin{align}\label{eq:to_show_Mgh}
       \mcM_{\delta}(\bg, \bh) \pto \mcM_{\delta} \coloneqq \sup_{\kappa \geq 0} \Big\{- \kappa + \alpha \int_{\bbR} \mcD x \, \Big[\inf_{z \in \bbR} \{\kappa z^2 + \ell_\delta(x - z) \}\Big]\Big\}.
    \end{align}    
    We can assume $\norm{\bh}^2/n \geq 1/2$, an event that has probability $1-\smallO_n(1)$.
    Denoting $f(\kappa,\bg,\bh)$ the maximized function in eq.~\eqref{eq:M_gh}, we then have\footnote{Indeed, $\inf_{z \in \bbR} \{\kappa z^2 + \ell_\delta(x - z) \} \leq \ell_\delta(x) \leq 1$.}
    $f(\kappa,\bg,\bh) \leq 2 \alpha - \kappa/2$ and $f(0, \bg, \bh) \geq 0$, so that we can write $\mcM_{\delta}(\bg,\bh) = \max_{0\leq\kappa\leq 4\alpha} f(\kappa,\bg,\bh)$.
    Letting 
    \begin{align*}
        f_\infty(\kappa) \coloneqq - \kappa + \alpha \int_{\bbR} \mcD x \, \Big[\inf_{z \in \bbR} \{\kappa z^2 + \ell_\delta(x - z) \}\Big],
    \end{align*}
    we have then for all $\kappa \in [0,4\alpha]$:
    \begin{align}\label{eq:f_finfty_kappa}
        &|f(\kappa,\bg,\bh) - f_\infty(\kappa)| \leq 4 \alpha \Big|\frac{\norm{\bh}^2}{n} - 1\Big| + \frac{\alpha}{m} \Bigg|\sum_{\mu=1}^m (X_\mu - \EE[X_\mu])\Bigg|, \\
        \nonumber
        &X_\mu \coloneqq \inf_{z \in \bbR} \{\kappa z^2 + \ell_\delta(g_\mu - z) \}.
    \end{align}
    Note that $\{X_\mu\}$ are i.i.d.\ random variables, and one shows easily that $X_{\mu} \in [0,1]$, so by Hoeffding's inequality, for all $t > 0$: 
    \begin{align*}
        \bbP\Bigg[\frac{1}{m} \Bigg|\sum_{\mu=1}^m (X_\mu - \EE[X_\mu])\Bigg| \geq t\Bigg] &\leq 2 e^{- 2 m t^2}.
    \end{align*}
    Plugging it in eq.~\eqref{eq:f_finfty_kappa} and using the concentration of $\|\bh\|^2/n$, we reach that for all $t > 0$:
    \begin{align*}
        \lim_{n\to \infty}\sup_{\kappa \in [0, 4\alpha]}\bbP[|f(\kappa,\bg,\bh) - f_\infty(\kappa)| \geq t] &= 0.
    \end{align*}
    It is elementary to check that this implies 
    $\max_{0\leq\kappa\leq 4\alpha} f(\kappa,\bg,\bh) \pto \max_{0\leq\kappa\leq 4\alpha} f_\infty(\kappa)$, and therefore 
    eq.~\eqref{eq:to_show_Mgh}.
    By Lemma~\ref{lemma:application_minmax} we have then shown that for any $t, \delta > 0$, 
    \begin{align}\label{eq:implication_Md_G}
        \mcM_{\delta} > t \Rightarrow \lim_{n \to \infty} \bbP[G(\bW) > t] = 1.
    \end{align}
    We will then conclude by considering the limit $\delta \to 0$:
    \begin{lemma}\label{lemma:Md_limit}
        \noindent
        We have $\lim_{\delta \to 0} \mcM_{\delta} = \mcM$, 
        with 
        \begin{align}\label{eq:def_M}
            \nonumber
            \mcM &\coloneqq \sup_{\kappa \geq 0} \Big\{- \kappa + \alpha \int_{\bbR} \mcD x \, \Big[\inf_{z \in \bbR} \{\kappa z^2 + \indi\{x > z\} \}\Big]\Big\}, \\
            &=\sup_{\kappa \geq 0} \Big\{- \kappa + \alpha \int_{1/\sqrt{\kappa}}^\infty \mcD x + \alpha \kappa \int_0^{1/\sqrt{\kappa}} \mcD x \, x^2    \Big\}.
        \end{align}
    \end{lemma}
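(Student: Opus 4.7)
The proof decomposes into a direct computation verifying the second equality in the definition of $\mcM$, followed by a monotone-limit argument for $\mcM_\delta \to \mcM$. For the explicit formula, I would compute $h(x,\kappa) := \inf_{z \in \bbR}\{\kappa z^2 + \indi\{x > z\}\}$ by cases: for $x \leq 0$ the choice $z = 0$ attains the value $0$, which is clearly optimal; for $x > 0$, comparing the region $z \geq x$ (where the indicator vanishes and the cost is $\kappa z^2 \geq \kappa x^2$) with the region $z < x$ (where the cost $\kappa z^2 + 1 \geq 1$ is minimized at $z = 0$) yields $h(x,\kappa) = \min(\kappa x^2, 1)$. Since $\kappa x^2 \leq 1 \iff x \leq 1/\sqrt{\kappa}$, the identity $\int \mcD x \, h(x,\kappa) = \int_0^{1/\sqrt{\kappa}} \mcD x \, \kappa x^2 + \int_{1/\sqrt{\kappa}}^\infty \mcD x$ (using symmetry of the Gaussian and $h = 0$ on $(-\infty,0]$) gives the claimed explicit form on the right-hand side of (\ref{eq:def_M}).

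For the limit, the structural fact driving the argument is that $\delta \mapsto \ell_\delta(x)$ is non-increasing on $(0,\infty)$ for every fixed $x$: from (\ref{eq:def_gdelta}), $\ell_\delta \equiv 0$ on $(-\infty, 0]$, equals $x/\delta$ on $(0,\delta]$ (which decreases in $\delta$), and equals $1$ on $(\delta,\infty)$; a direct three-case comparison confirms monotonicity. Hence $\ell_\delta \nearrow \indi\{\cdot > 0\}$ pointwise as $\delta \downarrow 0$, and so does the inner infimum $h_\delta(x,\kappa) := \inf_z\{\kappa z^2 + \ell_\delta(x-z)\}$, with $h_\delta \leq h$ for every $\delta > 0$. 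I would then verify pointwise convergence $h_\delta(x,\kappa) \to h(x,\kappa)$: the case $x \leq 0$ is immediate via $z = 0$, and for $x > 0$ one splits the real line into $z \geq x$, $z \in (x-\delta, x)$, and $z \leq x - \delta$, observing that the critical point of $\kappa z^2 + (x-z)/\delta$ on the middle region is $z^\star = 1/(2\kappa\delta)$, which for $\delta$ sufficiently small exceeds $x$; the middle-region infimum is thus attained at the right boundary $z = x$ with value $\kappa x^2$, while the left-region infimum is $1$ (attained at $z = 0$ for $\delta < x$), and the right-region infimum is $\kappa x^2$, so the overall infimum matches $\min(\kappa x^2, 1)$. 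Since $0 \leq h_\delta \leq 1$, bounded convergence gives $F_\delta(\kappa) := -\kappa + \alpha \int \mcD x \, h_\delta(x,\kappa) \to F(\kappa) := -\kappa + \alpha \int \mcD x \, h(x,\kappa)$ for every $\kappa \geq 0$.

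Finally I promote pointwise convergence of $F_\delta$ to convergence of the suprema. The inequality $F_\delta \leq F$ immediately gives $\mcM_\delta \leq \mcM$ for every $\delta > 0$. For the matching lower bound, fix $\kappa_0 \geq 0$: since $\mcM_\delta \geq F_\delta(\kappa_0)$, letting $\delta \to 0$ gives $\liminf_{\delta \to 0} \mcM_\delta \geq F(\kappa_0)$, and taking $\sup_{\kappa_0 \geq 0}$ on the right yields $\liminf_{\delta \to 0} \mcM_\delta \geq \mcM$. The main technical step is the piecewise-in-$z$ analysis of $h_\delta(x,\kappa)$ for $x > 0$, which is elementary but requires tracking the critical point $z^\star = 1/(2\kappa\delta)$ relative to the transition layer $[x - \delta, x]$ and checking that no improvement is hidden inside it; all remaining steps are standard applications of bounded convergence and of the $\sup$--$\liminf$ interchange. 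If desired, one can localize the optimization to $\kappa \in [0, \alpha]$ uniformly in $\delta$ using $F_\delta(\kappa) \leq -\kappa + \alpha$ and $F_\delta(0) \geq 0$, but this is not strictly needed given the monotonicity above.
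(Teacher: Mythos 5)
Your proof is correct, and it takes a genuinely different route from the paper's. The paper sandwiches $\indi\{x > \delta\} \leq \ell_\delta(x) \leq \indi\{x > 0\}$ and computes explicitly the infima corresponding to the two step functions on each side; subtracting gives a bound
\[
|f_\delta(\kappa) - f_0(\kappa)| \leq 2\alpha \int_0^\delta \mcD x,
\]
which vanishes \emph{uniformly} in $\kappa$, and uniform convergence of $f_\delta \to f_0$ on $[0,\infty)$ immediately transfers to the suprema. You instead exploit the monotonicity of $\delta \mapsto \ell_\delta(x)$ (so $\ell_\delta \nearrow \indi\{\cdot > 0\}$ as $\delta \downarrow 0$) to obtain only \emph{pointwise} convergence $F_\delta(\kappa) \to F(\kappa)$ via bounded convergence, and then close the argument with the one-sided domination $F_\delta \leq F$ (which forces $\mcM_\delta \leq \mcM$) together with the elementary $\liminf$--$\sup$ exchange $\liminf_\delta \mcM_\delta \geq \sup_\kappa F(\kappa)$. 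The paper's sandwich argument is cleaner and buys the stronger uniform estimate essentially for free from an explicit computation, avoiding the case analysis of the transition layer; your monotonicity-based argument is more elementary in that it never needs uniformity in $\kappa$, at the modest cost of tracking the location of the critical point $z^\star = 1/(2\kappa\delta)$ relative to $[x-\delta, x]$ to establish pointwise convergence of $h_\delta$. Both are correct and both yield exactly the claimed identity for $\mcM$.
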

    Moreover, the maximum in eq.~\eqref{eq:def_M} is reached in $\kappa^\star$ such that: 
    \begin{align*}
        1 &= \alpha \int_0^{1/\sqrt{\kappa^\star}} \mcD x \, x^2.
    \end{align*}
    And the limit is then given by:
    \begin{align*}
        \mcM &= \alpha \int_{1/\sqrt{\kappa^\star}}^\infty \mcD x.
    \end{align*}
    We recognize the replica-symmetric prediction of eq.~\eqref{eq:fstar_RS}, with $\kappa = (2 \chi_\RS)^{-1}$!
    By Lemma~\ref{lemma:Md_limit} and eq.~\eqref{eq:implication_Md_G}, 
    we showed that $\mcM > t$ implies that $\bbP[G(\bW) > t] \to 1$ as $n \to \infty$.
    Applying it for $t = 1$ ends the proof of Theorem~\ref{thm:bound_Gordon}. \qed

    \subsection{Proof of Lemmas~\ref{lemma:ao_simplification}, \ref{lemma:application_minmax} and \ref{lemma:Md_limit}}\label{subsubsec:proof_lemma_application_minimax}

    \begin{proof}[Proof of Lemma~\ref{lemma:ao_simplification} --]
    First note that in eq.~\eqref{eq:def_CAd_gh}, writing $\blambda = \tau \bbe$ with $\norm{\bbe} = 1$, one can perform the supremum over $\bbe$: 
      \begin{align*}
              \mcC_{A, \delta}(\bg, \bh) &= \inf_{\bz \in \bbR^m} \min_{\bx \in \mcS^{n-1}} \max_{\tau \in [0,A]} \Big\{ \tau \bh^\intercal \bx + \tau \norm{\bg - \bz} + \frac{1}{n} \sum_{\mu=1}^m \ell_\delta(z_\mu)\Big\}.
      \end{align*}
      The maximum over $\tau \in [0,A]$ and the minimum over $\bx$ can be carried out explicitly:
      \begin{align*}
              \mcC_{A, \delta}(\bg, \bh) &= 
              \inf_{\bz \in \bbR^m} \Big\{\frac{1}{n} \sum_{\mu=1}^m \ell_\delta(z_\mu) + \min_{\bx \in \mcS^{n-1}} [A(\bh^\intercal \bx + \| \bg - \bz\|)\indi\{\bh^\intercal \bx+\| \bg - \bz\| > 0 \}] \Big\}, \\ 
            &= 
            \inf_{\bz \in \bbR^m} \Big\{\frac{1}{n} \sum_{\mu=1}^m \ell_\delta(z_\mu) +  A(-\|\bh\|  + \| \bg - \bz\|)\indi\{-\|\bh \|+\| \bg - \bz\| > 0 \} \Big\}.
      \end{align*}
    Letting $\bz' = \bg - \bz$, this yields:
    \begin{align*}
            \mcC_{A, \delta}(\bg, \bh) &= \min\Bigg\{\min_{\substack{\bz \in \bbR^m \\ \norm{\bz} \leq \norm{\bh}}} \Big[\frac{1}{n} \sum_{\mu=1}^m \ell_\delta(g_\mu - z_\mu)\Big] \, , \, \inf_{\substack{\bz \in \bbR^m \\ \norm{\bz} > \norm{\bh}}} \Big[\frac{1}{n} \sum_{\mu=1}^m \ell_\delta(g_\mu - z_\mu) + A (\norm{\bz} - \norm{\bh}) \Big]\Bigg\}.
    \end{align*}
    We now show that:
    \begin{align}\label{eq:to_show_A_infty}
        \lim_{A \to \infty} \inf_{\substack{\bz \in \bbR^m \\ \norm{\bz} > \norm{\bh}}} \Big[\frac{1}{n} \sum_{\mu=1}^m \ell_\delta(g_\mu - z_\mu) + A (\norm{\bz} - \norm{\bh}) \Big] 
        &\geq \min_{\substack{\bz \in \bbR^m \\ \norm{\bz} \leq  \norm{\bh}}} \Big[\frac{1}{n} \sum_{\mu=1}^m \ell_\delta(g_\mu - z_\mu)\Big],
    \end{align}
    which ends the proof.
    Notice that the LHS of eq.~\eqref{eq:to_show_A_infty} is obviously a non-decreasing function of $A$, so that it indeed has a limit (possibly $+\infty$).
    Moreover, we can restrict the infimum to $\norm{\bz} \leq \norm{\bh} + \alpha / A$, since trivially for all $A$ one has (recall $\ell_\delta \leq 1$): 
    \begin{align*}
        \inf_{\substack{\bz \in \bbR^m \\ \norm{\bh}<\norm{\bz} \leq  \norm{\bh} + \alpha / A}} \Big[\frac{1}{n} \sum_{\mu=1}^m \ell_\delta(g_\mu - z_\mu) &+ A (\norm{\bz} - \norm{\bh}) \Big] 
        \overset{(\rm a)}{\leq}  \inf_{\substack{\bz \in \bbR^m \\ \norm{\bh}<\norm{\bz} \leq  \norm{\bh} + \alpha / A}} \Big[\alpha + A (\norm{\bz} - \norm{\bh}) \Big], \\
        &\leq \alpha \overset{(\rm b)}{\leq}
        \inf_{\substack{\bz \in \bbR^m \\ \norm{\bz} > \norm{\bh} + \alpha / A}} \Big[\frac{1}{n} \sum_{\mu=1}^m \ell_\delta(g_\mu - z_\mu) + A (\norm{\bz} - \norm{\bh}) \Big],
    \end{align*}
    in which we used in $(\rm a)$ and $(\rm b)$ that $\ell_\delta(x)\in[0,1]$.
    We let $\varepsilon > 0$, and for all $A > 0$ we fix $\tilde{\bz}^{(A)} \in \bbR^m$ with $\|\tilde{\bz}^{(A)}\| \in (\|\bh\| , \|\bh\| + \alpha/A]$ such that:
    \begin{align*}
        \frac{1}{n} \sum_{\mu=1}^m \ell_\delta(g_\mu - \tilde{z}^{(A)}_\mu) + A (\norm{\tilde{\bz}^{(A)}} - \norm{\bh}) 
        &\leq 
        \inf_{\substack{\bz \in \bbR^m \\ \norm{\bz}>\norm{\bh} }} \Big[\frac{1}{n} \sum_{\mu=1}^m \ell_\delta(g_\mu - z_\mu) + A (\norm{\bz} - \norm{\bh}) \Big] + \varepsilon.
    \end{align*}
    Since $\norm{\tilde{\bz}^{(A)}} \leq \norm{\bh} + \alpha / A$, we can extract a converging subsequence $\bz^{(k)}=\tilde{\bz}^{A(k)}$ such that $\exists\lim_{k\to\infty}\bz^{(k)} =: \bz^*$, 
    and $\norm{\bh} < \norm{\bz^{(k)}} \leq \norm{\bh} + \alpha/A(k)$, with $A(k) \to \infty$.
    Therefore $\norm{\bz^*} = \norm{\bh}$. Moreover:
    \begin{align*}
        \sum_{\mu=1}^m \ell_\delta(g_\mu - z^*_\mu) &= \lim_{k \to \infty} \sum_{\mu=1}^m \ell_\delta(g_\mu - z^{(k)}_\mu), \\ 
        &\leq \liminf_{k \to \infty} \Big[\frac{1}{n} \sum_{\mu=1}^m \ell_\delta(g_\mu - z^{(k)}_\mu) + A(k) (\norm{\bz^{(k)}} - \norm{\bh}) \Big], \\
        &\leq \liminf_{k \to \infty}\Big\{\inf_{\substack{\bz \in \bbR^m\\  \norm{\bz}>\norm{\bh}}} \Big[\frac{1}{n} \sum_{\mu=1}^m \ell_\delta(g_\mu - z_\mu) + A(k) (\norm{\bz} - \norm{\bh}) \Big]\Big\} + \varepsilon, \\
        &\leq 
        \lim_{A \to \infty} \inf_{\substack{\bz \in \bbR^m \\ \norm{\bz} > \norm{\bh}}} \Big[\frac{1}{n} \sum_{\mu=1}^m \ell_\delta(g_\mu - z_\mu) + A (\norm{\bz} - \norm{\bh}) \Big] + \varepsilon.
    \end{align*}
    Letting $\varepsilon > 0$ be arbitrarily small, the claim of eq.~\eqref{eq:to_show_A_infty} follows.
    \end{proof}

    \myskip 
    \begin{proof}[Proof of Lemma~\ref{lemma:application_minmax} --]
        Recall eq.~\eqref{eq:ground_state_duality_smoothed}. In particular, for any $A, \delta, K > 0$ we have: 
        \begin{align}\label{eq:def_GAdK}
            G_K(\bW) \geq G_{A, \delta, K}(\bW) &\coloneqq  \inf_{\bz \in B(K)} \min_{\bx \in \mcS^{n-1}} \max_{\norm{\blambda} \leq A} \Big\{\blambda^\intercal \bW \bx - \blambda^\intercal \bz + \frac{1}{n} \sum_{\mu=1}^m \ell_\delta(z_\mu)\Big\}.
        \end{align}
        By eq.~\eqref{eq:ub_G_GK} and eq.~\eqref{eq:def_GAdK}, we have:
        \begin{align}\label{eq:domination_G_GAdK}
           \bbP_\bW[G(\bW) \leq t] &\leq \bbP_\bW[G_K(\bW) \leq t] + e^{-n} \leq \bbP_\bW[G_{A, \delta, K}(\bW)\leq t ] + e^{-n}. 
        \end{align}
        Using Proposition~\ref{prop:gaussian_minmax} (since all sets are compact and functions involved are continuous, see in particular Remark~I below it)
        we have, for all $t \in \bbR$:
        \begin{align*}
           \bbP_\bW[G_{A, \delta, K}(\bW, \bz)\leq t ] &\leq 2 \bbP_{\bg,\bh}[\mcC_{A, \delta, K}(\bg,\bh,\bz) \leq t],
        \end{align*}
        in which $\mcC_{A, \delta, K}(\bg, \bh)$ is defined as in eq.~\eqref{eq:def_CAd_gh}, restricting furthermore the infimum to $\bz \in B(K)$.
        In particular, $\mcC_{A, \delta, K}(\bg, \bh) \geq \mcC_{A, \delta}(\bg, \bh)$.
        Therefore by eq.~\eqref{eq:domination_G_GAdK}:
        \begin{align}
            \label{eq:domination_G_CAd}
           \bbP_\bW[G(\bW) \leq t]&\leq 2 \bbP_{\bg,\bh}[\mcC_{A, \delta}(\bg,\bh) \leq t] + e^{-n}.
        \end{align}
        Note that $\bbP_{\bg,\bh}[\mcC_{A, \delta}(\bg,\bh) \leq t] = \EE_{\bg,\bh} [\indi\{\mcC_{A, \delta}(\bg,\bh) \leq t\}]$, 
        and moreover by Lemma~\ref{lemma:ao_simplification}\footnote{We use there the fact that $A \mapsto \mcC_{A, \delta}(\bg, \bh)$ is non-decreasing}: 
        \begin{align*}
            \lim_{A \to \infty} \indi\{\mcC_{A, \delta}(\bg,\bh) \leq t\} = \indi\{\mcC_\delta(\bg,\bh) \leq t\}.
        \end{align*}
        Taking the $A \to \infty$ limit in eq.~\eqref{eq:domination_G_CAd} and using the dominated convergence theorem 
        ends the proof of Lemma~\ref{lemma:application_minmax}.
    \end{proof}

    \myskip 
    \begin{proof}[Proof of Lemma~\ref{lemma:Md_limit} --]
        For $\delta\geq 0$, we define
        \begin{align*}
            f_\delta(\kappa) &\coloneqq - \kappa + \alpha \int_{\bbR} \mcD x \, \Big[\inf_{z \in \bbR} \{\kappa z^2 + \ell_\delta(x - z) \}\Big],  
        \end{align*}
        so that $\mcM_\delta = \sup_{\kappa \geq 0} f_\delta(\kappa)$ for $\delta>0$, and 
        $\mcM = \sup_{\kappa \geq 0} f_0(\kappa)$.
        Notice first that eq.~\eqref{eq:def_M} follows from the following identity, that can be easily checked:
        \begin{align*}
            \inf_{z \in \bbR} \{\kappa z^2 + \indi\{z < x\} \} &= \indi\{\sqrt{\kappa} x \geq 1\} + \indi\{\sqrt{\kappa} x \in (0,1)\} \kappa x^2.
        \end{align*}
        Lemma~\ref{lemma:Md_limit} will follow if we can show: 
        \begin{align}\label{eq:to_show_Md_M}
            \lim_{\delta \to 0} \sup_{\kappa \geq 0} |f_\delta(\kappa) - f_0(\kappa)| = 0.
        \end{align}
        Notice that for all $\delta > 0$ and all $x \in \bbR$, we have
        $\indi\{x > \delta \} \leq  \ell_\delta(x) \leq \indi\{x > 0\}$.
        In particular, 
        \begin{align*}
            \inf_{z \in \bbR} \{\kappa z^2 + \indi\{z < x - \delta\} \} \leq \inf_{z \in \bbR} \{\kappa z^2 + \ell_\delta(x - z) \} \leq \inf_{z \in \bbR} \{\kappa z^2 + \indi\{z < x\} \}.
        \end{align*}
        One computes easily the left and right sides of this inequality:
        \begin{align*}
            \begin{dcases}
                \inf_{z \in \bbR} \{\kappa z^2 + \ell_\delta(x - z) \} &\leq \indi\{\sqrt{\kappa} x \geq 1\} + \indi\{\sqrt{\kappa} x \in (0,1)\} \kappa x^2, \\
                \inf_{z \in \bbR} \{\kappa z^2 + \ell_\delta(x - z) \} &\geq \indi\{\sqrt{\kappa} (x-\delta) \geq 1\} + \indi\{\sqrt{\kappa} (x-\delta) \in (0,1)\} \kappa (x-\delta)^2.
            \end{dcases}
        \end{align*}
        Therefore we reach:
        \begin{align*}
            |f_\delta(\kappa) - f_0(\kappa)| &\leq \alpha \int_{1/\sqrt{\kappa}}^{1/\sqrt{\kappa} + \delta} \mcD x + \alpha \kappa \int_{0}^{1/\sqrt{\kappa}} \frac{\rd x}{\sqrt{2 \pi}} \, x^2 \, \Big[e^{-x^2/2} - e^{-(x+\delta)^2/2}\Big], \\ 
            &\leq \alpha \int_0^\delta \mcD x + \alpha \int_{0}^{1/\sqrt{\kappa}} \frac{\rd x}{\sqrt{2 \pi}} \, \Big[e^{-x^2/2} - e^{-(x+\delta)^2/2}\Big], \\
            &\leq \alpha \int_0^\delta \mcD x + \alpha \int_{0}^{\infty} \frac{\rd x}{\sqrt{2 \pi}} \, \Big[e^{-x^2/2} - e^{-(x+\delta)^2/2}\Big], \\
            &\leq 2\alpha \int_0^\delta \mcD x,
        \end{align*}
        which goes to $0$ as $\delta \to 0$, uniformly in $\kappa$. This ends the proof.
    \end{proof}

\subsection{Improving over a standard use of Gordon's inequality?}\label{subsec_app:improvement_gordon}

We give here a brief and informal description of the improvements made in \cite{stojnic2013negative,montanari2024tractability} over the standard use of Gordon's inequality.
The starting point of this improvement is to use Gordon's inequality in the form of stochastic domination, cf. e.g.\ Theorem~1 of \cite{thrampoulidis2015regularized}.
This form implies that for $\bG$ a Gaussian i.i.d.\ matrix and $z \sim \mcN(0,1)$, $\bg, \bh \sim \mcN(0, \Id_d)$, defining the min-max problems
\begin{align*}
    \begin{dcases}
    \hat{\xi} &\coloneqq \min_\bw \max_\bu [\bu^\T \bG \bw + \psi(\bw, \bu)], \\
    \xi &\coloneqq \min_\bw \max_\bu [\bu^\T \bG \bw + \psi(\bw, \bu) + z \|\bu\| \|\bw\|], \\
    \xi_\mathrm{lin.} &\coloneqq \min_\bw \max_\bu [\|\bg\|\bu^\T \bh + \|\bh\| \bw^\T \bg + \psi(\bw, \bu)],
    \end{dcases}
\end{align*}
then for any $c \geq 0$, we have
\begin{equation*}
    \EE[\exp\{-c \xi\}] \leq \EE[\exp\{- c \xi_\mathrm{lin.}\}].
\end{equation*}
For our purposes, we would like to obtain an upper bound for $\bbP[\hat{\xi} \leq t]$, cf.\ eq.~\eqref{eq:ground_state_duality_smoothed}.
In the settings of~\cite{stojnic2013negative,montanari2024tractability}, the vectors $\bw, \bu$ are unit-normed, so that 
$\xi = \hat{\xi} + z$, and therefore
\begin{equation*}
    \EE[e^{-c \hat{\xi}}] = \frac{\EE[e^{- c\xi}]}{\EE[e^{-cz}]} \leq \frac{\EE[\exp\{- c \xi_\mathrm{lin.}\}]}{\EE[e^{-cz}]}.
\end{equation*}
Markov's inequality yields then:
\begin{equation*}
    \bbP[\hat{\xi} \leq t] \leq e^{ct} \EE[e^{-c \hat{\xi}}] \leq e^{ct} \frac{\EE[\exp\{- c \xi_\mathrm{lin.}\}]}{\EE[e^{-cz}]}.
\end{equation*}
The authors of \cite{stojnic2013negative,montanari2024tractability} use this bound and optimize it over $c \geq 0$.
However, in our setting the vectors are not unit-normed, cf eq.~\eqref{eq:ground_state_duality_smoothed}.
Instead, we use in Proposition~\ref{prop:gaussian_minmax} a form of Gordon's inequality that is weaker than stochastic domination, but that allows to directly compare $\hat{\xi}$ and $\xi_{\mathrm{lin.}}$:
    \begin{equation*}
    \bbP[\hat{\xi} \leq t] \leq 2 \bbP[\xi_\mathrm{lin.} \leq t].
    \end{equation*}
This technicality prevents us from directly applying the 
    inequality used in~\cite{stojnic2013negative,montanari2024tractability}: nevertheless, similar improvements over the standard application of the min-max inequality are likely 
    possible, and we leave it as an open question for future work.

\section{Replica-symmetric supplementary calculations}\label{sec_app:technicalities}
\subsection{Zero-temperature limit of the replica-symmetric solution}\label{subsec_app:zerotemp_RS}

In this section, we derive eqs.~\eqref{eq:chi_RS} and \eqref{eq:fstar_RS}. 
Our arguments will sometimes be informal, and a rigorous treatment would demand more care.

\myskip
Recall that we have the expansion of eq.~\eqref{eq:q_RS_zerotemp}, with 
$\chi_\RS$ the zero-temperature susceptibility of the system.
In this section, we often drop the $\RS$ subscript on quantities to lighten the notations.
We use the expansion of $H(x) = \int_x^\infty \mcD u$ for large $x \gg 1$:
\begin{align}\label{eq:expansion_H}
    H(x) &= \frac{e^{-\frac{x^2}{2}}}{\sqrt{2 \pi}} \Big[\frac{1}{x} + \mathcal{O}_{x \to \infty}\Big(\frac{1}{x^3}\Big)\Big].
\end{align}
\textbf{Computation of $f^\star_\RS(\alpha)$ --}
We start by deriving eq.~\eqref{eq:fstar_RS}.
As one can check from eq.~\eqref{eq:phi_RS} that $\Phi_\RS(\alpha,\beta)$ is a differentiable function of $\beta$, 
by L'Hospital's rule we have $f^\star_\RS(\alpha) = \lim_{\beta\to\infty} e^\star_\RS(\alpha, \beta)$, 
with $e^\star_\RS(\alpha,\beta) \coloneqq - \partial_\beta \Phi(\alpha,\beta)$.
We have from eq.~\eqref{eq:q_RS_zerotemp}:
\begin{align*}
    \sqrt{\frac{q}{1-q}} &= \sqrt{\frac{\beta}{\chi}} + \mathcal{O}(\beta^{-1/2}),
\end{align*}
We compute the limit of the integrand in eq.~\eqref{eq:estar_beta} (changing variables $\xi \to - \xi$):
\begin{align*}
    \frac{e^{-\beta} H \Big(-\xi \sqrt{\frac{q}{1-q} }\Big)}{1 - (1 - e^{-\beta}) H \Big(-\xi \sqrt{\frac{q}{1-q} }\Big)} 
    &\simeq 
    \frac{e^{-\beta} H \Big(-\xi \sqrt{\frac{\beta}{\chi} }\Big)}{1 - (1 - e^{-\beta}) H \Big(-\xi \sqrt{\frac{\beta}{\chi} }\Big)}.
\end{align*}
We separate three cases, and use the expansion of eq.~\eqref{eq:expansion_H} to reach that at leading order 
in $\beta$: 
\begin{align}\label{eq:expansion_estar_RS}
    \frac{e^{-\beta} H \Big(-\xi \sqrt{\frac{\beta}{\chi} }\Big)}{1 - (1 - e^{-\beta}) H \Big(-\xi \sqrt{\frac{\beta}{\chi} }\Big)} 
    &\simeq \begin{dcases}
        \frac{\sqrt{\chi}}{\sqrt{2 \pi \beta} |\xi|}e^{-\beta - \frac{\beta \xi^2}{2 \chi}} \to_{\beta \to \infty} 0 & \textrm{ if } \xi < 0, \\
        \frac{\xi \sqrt{2 \pi \beta}}{\sqrt{\chi}}  e^{-\beta + \frac{\beta \xi^2}{2 \chi}} \to_{\beta \to \infty} 0 & \textrm{ if } \xi \in (0, \sqrt{2 \chi}), \\
        1 & \textrm{ if } \xi > \sqrt{2 \chi}.
    \end{dcases}
\end{align}
Using the pointwise limit above, we reach (as we mentioned above, a more careful argument would need to be carried out to make this expansion rigorous)
\begin{align*}
    e^\star(\alpha,\beta)
    &\simeq \alpha \int_{\sqrt{2\chi}}^\infty \, \mcD \xi = \alpha H[\sqrt{2\chi}].
\end{align*}
In the end, we reach eq.~\eqref{eq:fstar_RS}:
\begin{align*}
    f^\star_\RS(\alpha) = \lim_{\beta \to \infty} e^\star_\RS(\alpha,\beta) &= \alpha H[\sqrt{2 \chi_\RS}].
\end{align*}
\textbf{Computing $\chi$ --}
There now remains to find $\chi$ as a function of $\alpha$, from eq.~\eqref{eq:q_RS_eq_new}.
Plugging in the expansion of eq.~\eqref{eq:q_RS_zerotemp} we find (changing $\xi \to - \xi$):
\begin{align}\label{eq:chi_RS_1}
   \frac{1}{\sqrt{\chi}} 
&= -\alpha \lim_{\beta \to \infty} \frac{1}{\sqrt{\beta}} \int \mcD \xi \frac{(1-e^{-\beta}) \xi H'\Big(-\xi \sqrt{\frac{q}{1-q}} \Big)}{1-(1-e^{-\beta}) H\Big(-\xi \sqrt{\frac{q}{1-q}} \Big)}.
\end{align}
In the same way as in eq.~\eqref{eq:expansion_estar_RS}, we can show: 
\begin{align}\label{eq:expansion_chi_RS}
- \frac{1}{\sqrt{\beta}} \frac{(1-e^{-\beta}) \xi H'\Big(-\xi \sqrt{\frac{q}{1-q}} \Big)}{1-(1-e^{-\beta}) H\Big(-\xi \sqrt{\frac{q}{1-q}} \Big)}
    &\simeq \begin{dcases}
        \frac{\xi}{\sqrt{2 \pi\beta}} \, e^{-\frac{\beta \xi^2}{2 \chi}} \to_{\beta \to \infty} 0 & \textrm{ if } \xi < 0, \\
        \frac{\xi^2}{\sqrt{\chi}} & \textrm{ if } \xi \in (0, \sqrt{2 \chi}), \\
        \frac{\xi}{\sqrt{2 \pi\beta}} \, e^{\beta-\frac{\beta \xi^2}{2 \chi}} \to_{\beta \to \infty} 0 & \textrm{ if } \xi > \sqrt{2 \chi}.
    \end{dcases}
\end{align}
Therefore, we reach from eq.~\eqref{eq:chi_RS_1} that, as $\beta \to \infty$: 
\begin{align*}
   \alpha \int_0^{\sqrt{2\chi}} \mcD \xi \, \xi^2 &= 1, 
\end{align*}
which is eq.~\eqref{eq:chi_RS}.

\subsection{Stability of the replica-symmetric solution}\label{subsec_app:stability_rs}

In this section we follow Appendix~4 of \cite{engel2001statistical} (see also e.g.\ \cite{urbani2018statistical}) to characterize the stability of the RS solution in replica space. 
This gives rise to the so-called de Almeida-Thouless conditions \cite{de1978stability,gardner1988optimal}, which is a criterion for stability 
expressed in terms of so-called \emph{replicon eigenvalues}.

\myskip
We start again from the general expression of eq.~\eqref{eq:Phir_final}: 
$\Phi(\alpha,\beta;r) = \sup_{\bQ} G_r(\bQ)$, with
\begin{align}\label{eq:Gr}
   G_r(\bQ) \coloneqq \frac{1}{2} \log \det \bQ + \alpha \log \int_{\bbR^r} \frac{\rd \bz}{(2\pi)^{r/2} \sqrt{\det \bQ}} e^{-\frac{1}{2} \bz^\intercal \bQ^{-1} \bz -\beta \sum_{a=1}^r \theta(z^a)} = G_{1,r}(\bQ) + \alpha G_{2,r}(\bQ).
\end{align}
In what follows, we compute the Hessian of $G_r(\bQ)$ taken at the replica-symmetric point.

\subsubsection{The derivatives of \texorpdfstring{$G_{1,r}$}{G1r}}

The derivatives of $G_{1,r}(\bQ)$ can be worked out in terms of the matrix elements of $\bQ^{-1}$ (here $a<b$ and $c < d$):
\begin{align*}
    \begin{dcases}
        \frac{\partial G_{1,r}}{\partial Q_{ab}} &= Q^{-1}_{ab}, \\ 
        \frac{\partial^2 G_{1,r}}{\partial Q_{ab} \partial Q_{cd}} &= -[Q^{-1}_{ac} Q^{-1}_{bd} + Q^{-1}_{ad} Q^{-1}_{bc}].
    \end{dcases}
\end{align*}
Recall that at the replica symmetric point with $Q_{ab} = q$ and $Q_{aa} = 1$ we have
\begin{align}\label{eq:Qm1_RS}
    \begin{dcases}
        Q^{-1}_{aa} &= \frac{1 + (r-2)q}{(1-q) [1+(r-1)q]}, \\
        Q^{-1}_{ab} &= -\frac{q}{(1-q) [1+(r-1)q]}.
    \end{dcases}
\end{align}
Therefore (taking the notations of \cite{engel2001statistical}):
\begin{align}\label{eq:HessG1_RS}
    \Bigg[\frac{\partial^2 G_{1,r}}{\partial Q_{ab} \partial Q_{cd}}\Bigg]_{\mathrm{RS}} &=
    \begin{dcases}
        P_1 & \textrm{ if } a = c \, ; b = d , \\
        Q_1 & \textrm{ if } a = c \, ; b \neq d \textrm{ or } b = c \textrm{ or } a \neq c \, ; b = d \textrm{ or } a = d , \\
        R_1 & \textrm{ if all indices are distinct},
    \end{dcases}
\end{align}
in which $P_1,Q_1,R_1$ are defined as: 
\begin{align*}
    \begin{dcases}
        P_1 &\coloneqq -\Bigg(\frac{1 + (r-2)q}{(1-q) [1+(r-1)q]}\Bigg)^2 - \Bigg(\frac{q}{(1-q) [1+(r-1)q]}\Bigg)^2, \\ 
        Q_1 &\coloneqq -\Bigg(\frac{1 + (r-2)q}{(1-q) [1+(r-1)q]}\Bigg)\Bigg(-\frac{q}{(1-q) [1+(r-1)q]}\Bigg) - \Bigg(\frac{q}{(1-q) [1+(r-1)q]}\Bigg)^2, \\ 
        R_1 &\coloneqq - 2 \Bigg(\frac{q}{(1-q) [1+(r-1)q]}\Bigg)^2.
    \end{dcases}
\end{align*}
We now take the limit $r \downarrow 0$. With an abuse of notation, we still denote the limits $P_1,Q_1,R_1$:
\begin{align}\label{eq:PQR_1}
    \begin{dcases}
        P_1 &= \frac{-1 + 4 q(1-q) - q^2}{(1-q)^4}, \\ 
        Q_1 &= \frac{q (1-q) - 2 q^2}{(1-q)^4}, \\ 
        R_1 &= -\frac{2 q^2}{(1-q)^4}.
    \end{dcases}
\end{align}

\subsubsection{The derivatives of \texorpdfstring{$G_{2,r}$}{G2r}}

We now turn to $G_{2,r}(\bQ)$, that we rewrite using a Gaussian transformation:
\begin{align}\label{eq:G2_alternate}
   G_{2,r}(\bQ) &= \log \int_{\bbR^r} \frac{\rd \bu \rd \bv}{(2\pi)^{r}} e^{-\frac{1}{2} \sum_{a,b} Q^{ab} v^a v^b -\beta \sum_{a=1}^r \theta(u^a) + i \sum_{a=1}^r u^a v^a}.
\end{align}
This form is more suitable for computing the Hessian with respect to $\bQ$.
In order to write the formulas compactly, we introduce the following average for any function of $\{v^a\}$:
\begin{align*}
   \langle g(\{v^a\}) \rangle_r &\coloneqq \Bigg\{\frac{\int_{\bbR^r} \rd \bu \, \rd \bv\, g(\{v^a\}) \, e^{-\frac{1}{2} \sum_{a,b} Q^{ab} v^a v^b -\beta \sum_{a=1}^r \theta(u^a) + i \sum_{a=1}^r u^a v^a}}{
\int_{\bbR^r} \rd \bu \, \rd \bv \, e^{-\frac{1}{2} \sum_{a,b} Q^{ab} v^a v^b -\beta \sum_{a=1}^r \theta(u^a) + i \sum_{a=1}^r u^a v^a}}\Bigg\}_{\mathrm{RS}}.
\end{align*}
With this definition, we have from eq.~\eqref{eq:G2_alternate}:
\begin{align}\label{eq:HessG2_1}
    &\Bigg[\frac{\partial^2 G_{2,r}}{\partial Q_{ab} \partial Q_{cd}}\Bigg]_{\mathrm{RS}} =
    \langle v^a v^b v^c v^d \rangle - \langle v^a v^b \rangle \langle v^c v^d \rangle,
\end{align}
in which $a < b$ and $c < d$.
One can easily see that this Hessian has the same ``replica-symmetric'' structure as the one of $G_{1,r}$:
\begin{align}\label{eq:HessG2_RS}
   \Bigg[\frac{\partial^2 G_{2,r}}{\partial Q_{ab} \partial Q_{cd}}\Bigg]_{\mathrm{RS}} &=
   \begin{dcases}
      P_2 & \textrm{ if } a = c \, ; b = d , \\
      Q_2 & \textrm{ if } a = c \, ; b \neq d \textrm{ or } b = c \textrm{ or } a \neq c \, ; b = d \textrm{ or } a = d , \\
      R_2 & \textrm{ if all indices are distinct}.
   \end{dcases}
\end{align}
We compute these three terms separately in the limit $r \to 0$. 
In order to simplify the results, we introduce the notation $\EE \langle g(v) \rangle$, with $\EE$ the expectation over $\xi \sim \mcN(0,1)$, and 
\begin{align*}
   \langle g(v) \rangle \coloneqq \frac{\int \rd u \, \rd v\,g(v) \, e^{-\frac{1-q}{2} v^2 - \beta \theta(u) + i v [u - \sqrt{q}\xi]}}{\int \rd u \, \rd v\, e^{-\frac{1-q}{2} v^2 - \beta \theta(u) + i v [u - \sqrt{q}\xi]}}.
\end{align*}
From this definition and eq.~\eqref{eq:HessG2_1}, one can check (using the same trick to decouple the replicas we used in the RS calculation, cf.\ Section~\ref{subsec:rs}) that we have, 
as $r \to 0$:
\begin{align}\label{eq:PQR_2}
    \begin{dcases}
        P_2 &= \EE[\langle v^2\rangle^2] - \EE[\langle v\rangle^2]^2, \\ 
        Q_2 &= \EE[\langle v^2\rangle \langle v\rangle^2] - \EE[\langle v\rangle^2]^2, \\ 
        R_2 &= \EE[ \langle v\rangle^4] - \EE[\langle v\rangle^2]^2.
    \end{dcases}
\end{align}

\subsubsection{de Almeida-Thouless condition for replica-symmetric stability}\label{subsubsec_app:AT_condition}

Classical replica studies \cite{engel2001statistical} show that for a Hessian having the form of eqs.~\eqref{eq:HessG1_RS} or \eqref{eq:HessG2_RS},
the linear stability of the RS local maximum is given by the sign of the ``replicon'' eigenvalue $P-2Q+R$.
More precisely, the AT condition for the stability of the RS solution in replica space reads here:
\begin{align*}
    \lambda_3 = [P_1 - 2Q_1+R_1] + \alpha [P_2 - 2Q_2 +R_2]\leq 0.
\end{align*}
By eqs.~\eqref{eq:PQR_1} and \eqref{eq:PQR_2} we get:
\begin{align}\label{eq:AT_condition}
    \frac{1}{(1-q)^2} \geq \alpha \EE\Big[\big(\langle v^2 \rangle - \langle v \rangle^2\big)^2\Big].
\end{align}
In order to make eq.~\eqref{eq:AT_condition} more explicit, we compute the right-hand side using the identity $\langle v^2 \rangle - \langle v \rangle^2 = - q^{-1} \partial^2_\xi \log \mcZ(\xi)$,
with 
\begin{align*}
    \mcZ(\xi) &\coloneqq \int \frac{\rd u \, \rd v}{2\pi} \, e^{-\frac{1-q}{2} v^2 - \beta \theta(u) + i v [u - \sqrt{q}\xi]}.
\end{align*}
This integral is easy to work out: 
\begin{align*}
   \log \mcZ(\xi) &= \log  \int \frac{\rd u}{\sqrt{2\pi(1-q)}} \, e^{- \beta \theta(u) - \frac{1}{2 (1-q)} (u - \sqrt{q}\xi)^2} 
   = \log \Big[1 - (1-e^{-\beta}) H \Big(- \xi \sqrt{\frac{q}{1-q} }\Big)\Big].
\end{align*}
Let us define $f_\beta(h) \coloneqq \log (1 - (1-e^{-\beta}) H[-h/\sqrt{1-q}])$, 
so that $\log \mcZ(\xi) = f_\beta(\sqrt{q} \xi)$.
Then $\langle v^2 \rangle - \langle v \rangle^2 = - f_\beta''(\sqrt{q}\xi)$.
The AT condition for the stability of the replica-symmetric solution is then expressed easily as a function of $(\alpha,q)$
at any $\beta \geq 0$ as
\begin{align}\label{eq:AT_explicit}
    \frac{1}{\alpha} \geq (1-q)^2 \int \mcD\xi f_\beta''(\sqrt{q} \xi)^2.
\end{align}

\subsubsection{The \texorpdfstring{$\beta \to \infty$}{hightemp} limit}

We now take the limit $\beta \to \infty$ in eq.~\eqref{eq:AT_explicit}, introducing the zero-temperature susceptibility $\chi_\RS = \chi$ (cf.\ eq.~\eqref{eq:q_RS_zerotemp}).
Using the same expansions as in eqs.~\eqref{eq:expansion_estar_RS} and \eqref{eq:expansion_chi_RS}
we have as $\beta \to \infty$:
\begin{align}\label{eq:expansion_fbeta}
    \frac{1}{\beta} f_\beta(h) &\simeq \begin{cases}
     0 & \textrm{ if } h < 0, \\ 
     -1 & \textrm{ if } h > \sqrt{2\chi}, \\ 
     -\frac{h^2}{2\chi} & \textrm{ otherwise }.
    \end{cases}
\end{align}
Therefore, we have at large $\beta$, that $f_\beta''(h) \simeq - \beta \chi^{-1} \indi\{h \in (0,\sqrt{2\chi})\}$.
Since $(1-q)^2 \simeq \chi^2 / \beta^2$, the RS stability condition~\eqref{eq:AT_explicit} becomes, in the $\beta \to \infty$ limit:
\begin{align}\label{eq:AT_explicit_zero_temp}
   \alpha \int_0^{\sqrt{2\chi}} \mcD \xi &\leq 1.
\end{align}
However, recall that in the zero-temperature limit, the RS susceptibility $\chi$ is given by the solution to eq.~\eqref{eq:chi_RS}: 
\begin{align*}
   \alpha \int_0^{\sqrt{2\chi}} \mcD \xi \, \xi^2 &= 1,
\end{align*}
which can be turned easily by integration by parts into:
\begin{align*}
   \alpha \int_0^{\sqrt{2\chi}} \mcD \xi  &= 1 + \alpha \sqrt{\frac{\chi}{\pi}} e^{-\chi} > 1,
\end{align*}
in which the inequality holds in all the ``UNSAT'' phase $\alpha > 2$ for which $\chi < \infty$.
Therefore, eq.~\eqref{eq:AT_explicit_zero_temp} is \emph{never} satisfied for any $\alpha > 2$: at zero-temperature, the replica-symmetric solution is never linearly stable!

\section{A replica-symmetric lower bound}\label{sec_app:rs_lower_bound}
\noindent
\textbf{High-dimensional concentration on energy level sets --}
Let us first describe physical reasons (at a heuristic level) for the concentration of the energy under the Gibbs measure of eq.~\eqref{eq:def_Gibbs}, 
for any $\beta \geq 0$. 
As we mentioned in Section~\ref{sec:introduction},
proving this property is highly non-trivial.
While we did not need to assume this concentration to hold in the rest of the paper, it will be important in 
this part to describe the derivation of a replica-symmetric lower bound for the injectivity threshold.

\myskip
At fixed $\bW$, the distribution of intensive energies is described by a probability density 
$P_\beta(e)$ given by: 
\begin{align}\label{eq:Pe}
    P_\beta(e) &\coloneqq \int \rd \bbP_{\beta,\bW}(\bx) \, \delta\Big(\frac{E_\bW(\bx)}{n} - e\Big) = \frac{e^{-n \beta e}}{\mcZ_n(\bW, \beta)}\int \mu_n(\rd \bx) \, \delta\Big(\frac{E_\bW(\bx)}{n} - e\Big).
\end{align}
One can show (using properties of the uniform measure $\mu_n$ and the fact that the energy is extensive) that the ``entropic'' term on the right scales exponentially with $n$, that is that 
for any $e \in [0, \alpha]$, one has a well-defined $F(e) \coloneqq \lim_{n \to \infty} (1/n) \log \int \mu_n(\rd \bx) \, \delta(E_\bW(\bx)/n - e) \in [-\infty, 0]$.
In mathematical terms, for $\bx \sim \mu_n$, $E_\bW(\bx)/n$ satisfies a large deviation principle in the scale $n$, with rate function $-F(e)$.
Therefore, by eq.~\eqref{eq:Pe}, $P_\beta(e)$ has large deviations in the scale $n$ around a value $e^\star(\beta)$, i.e.\ we have the following behavior: 
\begin{align}\label{eq:def_estar_ldp}
    \begin{dcases}
        e^\star(\beta) &\coloneqq \argmax_e [-\beta e + F(e)], \\
        P_\beta(e) &\simeq \exp\Big\{n \Big(- \beta e + F(e) - [-\beta e^\star(\beta) + F(e^\star(\beta))]\Big)\Big\}.
    \end{dcases}
\end{align}
In particular, the probability (under the Gibbs measure) of having a configuration with energy $e$ such that $|e - e^\star(\beta)| > \varepsilon$ is exponentially small in $n$ for any $\varepsilon > 0$.
Therefore, we expect that at any $\beta \geq 0$, all the mass of the Gibbs measure concentrates (as $n \to \infty$) around the level set with 
intensive energy $e^\star(\beta)$, given thus also by the mean energy under the Gibbs measure \cite{ellis2006entropy}.
Note that we discarded the dependency of $e^\star$ on $\bW$:
the concentration with respect to $\bW$ can be justified (but not proven!) using the concentration of $\Phi_n(\bW,\beta)$ 
in Theorem~\ref{thm:free_entropy_concentration}.
Indeed, note that the average energy under the Gibbs measure is precisely given by a derivative of the free energy: 
\begin{align*}
    \int \rd \bbP_{\beta,\bW}(\bx) \, E_\bW(\bx) &= -\frac{\partial}{\partial \beta} \Big[\log \mcZ_n(\bW, \beta)\Big].
\end{align*}
Therefore, one expects that the concentration of the free energy 
transfers to the derivatives, and thus that 
the energy level also concentrates as a function of $\bW$.
Summing up, the function $e^\star(\alpha,\beta)$ (we explicit its dependency on $\alpha$ and $\beta$) is -- conjecturally -- equal to the following limit: 
\begin{align*}
    e^\star(\alpha, \beta) &= \plim_{n \to \infty} \frac{1}{n} \int \rd \bbP_{\beta,\bW}(\bx) \, E_\bW(\bx) = -\partial_\beta\Phi(\alpha,\beta).
\end{align*}
in which the limit is again in probability over the randomness induced by $\bW$.
Furthermore, in the limit $\beta \to \infty$ -- as we argued in the main text -- we expect the Gibbs measure to concentrate its mass around the global minima of $E_\bW$, 
and therefore that
\begin{align}\label{eq:estar_GS}
    \lim_{\beta \to \infty} e^\star(\alpha, \beta) &= \plim_{n \to \infty} \Big\{\frac{1}{n}  \min_{\bx \in \mcS^{n-1}} E_\bW(\bx) \Big\}.
\end{align}
\textbf{The lower bound --}
By eq.~\eqref{eq:pmn_minimum} and eq.~\eqref{eq:estar_GS}, $e^\star(\alpha, \beta = \infty) \geq 1 \Leftrightarrow \alpha \geq \alpha_\inj$\footnote{Assuming that $\alpha \mapsto e^\star(\alpha, \beta = \infty)$ is continuous and strictly increasing, which we always observe, see
Fig.~\ref{fig:chi_estar_T0}.}.
Since $e^\star(\alpha,\beta)$ is a decreasing function of $\beta$ and $e^\star(\alpha,\beta = 0) = \alpha/2$, this implies that for any $\alpha \in (2,\alpha_\inj]$
there exists $\beta^\star(\alpha) \in [0,\infty]$ such that 
\begin{align*}
    \beta^\star(\alpha) &\coloneqq \sup\{\beta \geq 0 \, : \, e^\star(\alpha,\beta) \geq 1\}.
\end{align*}
Moreover, it is easy to see that $\beta^\star(\alpha)$ is a non-decreasing function of $\alpha$.
In particular, if for all $\beta < \beta^\star(\alpha)$ the RS solution is stable (in the sense of the dAT condition described above, and derived in Appendix~\ref{subsec_app:stability_rs}), 
then the replica-symmetric ansatz will yield the exact solution for all $\beta \in [0,\beta^\star(\alpha))$.
We denote $\alpha_\AT$ the largest such $\alpha$:
\begin{align*}
    \alpha_\AT &\coloneqq \sup \{\alpha \, | \, \alpha > 2 \textrm{ and the RS solution is stable for all } \beta < \beta^\star(\alpha) \}.
\end{align*}
Recall that $\alpha_\inj$ is, according to our criterion, equal to:
\begin{align*}
   \alpha_\inj &= \inf \{\alpha \, : \, \lim_{\beta\to \infty} e^\star(\alpha,\beta) \geq 1\} = \inf\{\alpha \, : \, \beta^\star(\alpha) = \infty\}.
\end{align*}
However, we know that for $\beta \to \infty$ the RS solution is never stable for $\alpha > 2$ (see Appendix~\ref{subsec_app:stability_rs}), 
so for all $\alpha > 2$, if $\beta^\star(\alpha) = \infty$ then $\alpha > \alpha_\AT$.
In particular, we get the lower bound $\alpha_\AT \leq \alpha_\inj$.
We stress that an important property of $\alpha_\AT$ is that it can be computed solely from the RS solution (and its stability analysis),
and might thus be more amenable to a rigorous analysis than the replica symmetry breaking upper bounds.
Recall that the stability condition for the RS solution is given by eq.~\eqref{eq:AT_explicit}, in which $q$ is the overlap given by the RS calculation, i.e.\ by eq.~\eqref{eq:q_RS_eq_new}.
A numerical evaluation of eq.~\eqref{eq:AT_explicit}, available in the attached code \cite{github_repo}, yields $\alpha_\AT \simeq 5.3238$, 
which implies the lower bound presented in eq.~\eqref{eq:lower_bound_RS_stability}.

\section{One-step replica symmetry breaking}\label{sec_app:1rsb}
\subsection{Derivation of the 1-RSB free entropy}\label{subsec_app:1rsb_derivation}

We perform here, for completeness of our presentation, the textbook calculation of the spherical perceptron free entropy at the one-RSB level.
We start again from eq.~\eqref{eq:Phir_final}, which we rewrite using a Gaussian transformation as:
\begin{align}\label{eq:Phir_1rsb_1}
   \Phi(\alpha,\beta,r) &= \sup_{\bQ}\Big[\frac{1}{2} \log \det \bQ + \alpha \log \int_{\bbR^r} \frac{\rd \bu \rd \bv}{(2\pi)^{r}} e^{-\frac{1}{2} \sum_{a,b} Q^{ab} v^a v^b -\beta \sum_{a=1}^r \theta(u^a) + i \sum_{a=1}^r u^a v^a}\Big].
\end{align}
We assume a 1RSB ansatz given in eq.~\eqref{eq:rhoq_Q_1RSB}, with $q_1 > q_0$, and $m \in \{1,\cdots,r\}$ with $m \, | \, r$ the Parisi parameter (i.e.\ the size of the diagonal blocks
in the ultrametric $\bQ$).
More precisely we have, with $k \coloneqq r / m$:
\begin{align}\label{eq:1RSB_explicit}
    \begin{dcases}
        Q_{aa} &= 1, \\
        Q_{ab} &= q_1 \textrm{ if } \Big\lfloor \frac{a}{k} \Big\rfloor = \Big\lfloor \frac{b}{k} \Big\rfloor, \\
        Q_{ab} &= q_0 \textrm{ otherwise.}
    \end{dcases}
\end{align}
\textbf{The entropic contribution --} We focus on the first term of eq.~\eqref{eq:Phir_1rsb_1}.
It is elementary algebra to check that under the ansatz of eq.~\eqref{eq:1RSB_explicit}, the spectrum of $\bQ$ is: 
\begin{align*}
    \mathrm{Sp}(\bQ) &= \{1-q_1\}^{r-k} \cup \{1-mq_0 + (m-1)q_1\}^{k-1} \cup \{1+(r-m)q_0 + (m-1)q_1\}.
\end{align*}
In particular, this yields:
\begin{align*}
    \log \det \bQ &= r \frac{m - 1}{m} \log (1-q_1) + \frac{r-m}{m} \log [1-mq_0 + (m-1)q_1] + \log[1+(r-m)q_0 + (m-1)q_1].
\end{align*}
And thus: 
\begin{align}\label{eq:entropic_1rsb}
    \partial_r[\log \det \bQ]_{r = 0 } &= \frac{m - 1}{m} \log (1-q_1) + \frac{1}{m} \log [1-mq_0 + (m-1)q_1] + \frac{q_0}{[1-mq_0 + (m-1)q_1]}.
\end{align}
\textbf{The interaction contribution --}
We focus now on the second term $\alpha G_{2,r}(\bQ)$ in eq.~\eqref{eq:Phir_1rsb_1}, with:
\begin{align*}
    G_{2,r}(\bQ) &\coloneqq \log \int_{\bbR^r} \frac{\rd \bu \rd \bv}{(2\pi)^{r}} e^{-\frac{1}{2} \sum_{a,b} Q^{ab} v^a v^b -\beta \sum_{a=1}^r \theta(u^a) + i \sum_{a=1}^r u^a v^a}.
\end{align*}
Under the 1-RSB ansatz, it becomes:
\begin{align*}
    G_{2,r}(\bQ) &= \log \int_{\bbR^r} \frac{\rd \bu \rd \bv}{(2\pi)^{r}} e^{-\frac{1-q_1}{2} \sum_{a} (v^a)^2 - \frac{q_0}{2} \big(\sum_a v^a\big)^2 
    - \frac{q_1 - q_0}{2} \sum_{x=0}^{k-1} \big(\sum_{l=1}^m v^{mx + l}\big)^2 - \sum_{a} [\beta\theta(u^a) - i u^a v^a]}.
\end{align*}
Introducing Gaussian transformations based on the formula $e^{-x^2/2} = \int \mcD z \, e^{-i z x}$ to decouple the replicas, we obtain:
\begin{align*}
    &G_{2,r}(\bQ) = \log \int \mcD \xi \prod_{x=0}^{k-1} \int \mcD z_x \int_{\bbR^r} \frac{\rd \bu \rd \bv}{(2\pi)^{r}} \exp\Big\{-\frac{1-q_1}{2} \sum_{a} (v^a)^2 - i \sqrt{q_0} \xi \sum_a v^a \\
    &- i\sqrt{q_1 - q_0} \sum_{x=0}^{k-1} z_x \sum_{l=1}^m v^{mx + l} -\beta \sum_{a} \theta(u^a) + i \sum_{a} u^a v^a\Big\}, \\ 
    &= \log \int \mcD \xi \Bigg\{ \int \mcD z \Bigg[\int \frac{\rd u \rd v}{2\pi} \exp\Big\{-\frac{1-q_1}{2} v^2 - i v [\sqrt{q_0} \xi + \sqrt{q_1 - q_0} z] -\beta \theta(u) + i u v\Big\} \Bigg]^m \Bigg\}^{\frac{r}{m}}.
\end{align*}
Using this Gaussian transformation trick, we were able to decouple replicas and therefore obtain an expression that is analytic in $r$.
This allows to take the $r \downarrow 0$ limit (keeping $m$ fixed), and to reach:
\begin{align*}
    &\partial_r \big[G_{2,r}(\bQ)\big]_{r=0} \\
    &= \frac{1}{m} \int \mcD \xi \log \Bigg\{ \int \mcD z \Bigg[\int \frac{\rd u \rd v}{2\pi} \exp\Big\{-\frac{1-q_1}{2} v^2 + i v [u - \sqrt{q_0} \xi - \sqrt{q_1 - q_0} z] -\beta \theta(u)\Big\} \Bigg]^m \Bigg\}.
\end{align*}
Performing the Gaussian integrals, and recall the definition of $H(x) \coloneqq \int_x^\infty \mcD u$, we reach:
\begin{align}\label{eq:energetic_1rsb}
    \partial_r \big[G_{2,r}(\bQ)\big]_{r=0} &= \frac{1}{m}  \int \mcD \xi \log \Bigg\{ \int \mcD z \Bigg[1- (1-e^{-\beta}) H\Big(- \frac{\sqrt{q_0} \xi + \sqrt{q_1 - q_0} z}{\sqrt{1-q_1}}\Big) \Bigg]^m \Bigg\}.
\end{align}
Combining eq.~\eqref{eq:entropic_1rsb} and eq.~\eqref{eq:energetic_1rsb}, we reach 
eq.~\eqref{eq:phi_1rsb}.

\subsection{Zero-temperature limit}\label{subsec_app:zerotemp_1rsb}

Recall that in the $\beta \to \infty$ limit we have the scaling (see e.g.\ \cite{franz2017universality}) 
\begin{align}\label{eq:scaling_1rsb_zerotemp}
        m \sim \frac{c_m}{\beta} \hspace{1cm} \textrm{and} \hspace{1cm}
        1 - q_1 \sim \frac{\chi_\ORSB}{\beta},
\end{align}
while $q_0$ has a limit in $(0,1)$ as $\beta \to \infty$.
In the following of this section, we write $\chi_\ORSB = \chi$ to lighten the notations.
The asymptotics of the determinant term in eq.~\eqref{eq:phi_1rsb} can be worked out:
\begin{align*}
   &\frac{m - 1}{2m} \log (1-q_1) + \frac{1}{2m} \log [1-mq_0 + (m-1)q_1] + \frac{q_0}{2[1-mq_0 + (m-1)q_1]} \nonumber \\
   &\simeq \frac{\beta}{2} \Big[\frac{q_0}{\chi+c_m(1-q_0)} + \frac{1}{c_m} \log \Big(\frac{\chi + c_m(1-q_0)}{\chi}\Big)\Big].
\end{align*}
The limit of the other term in eq.~\eqref{eq:phi_1rsb} can also be computed, using that:
\begin{align*}
    \Bigg[1- (1-e^{-\beta}) H\Big(- \frac{\sqrt{q_0} \xi + \sqrt{q_1 - q_0} \xi_1}{\sqrt{1-q_1}}\Big) \Bigg]^m &= \exp\{m f_\beta(u)\},
\end{align*}
with $u \coloneqq \sqrt{q_0} \xi_0 + \sqrt{1-q_0} \xi_1$, and $f_\beta(h) \coloneqq \log (1 - (1-e^{-\beta}) H[-h/\sqrt{1-q_1}])$. 
We described the expansion of $f_\beta(h)/\beta$ for large $\beta$ in eq.~\eqref{eq:expansion_fbeta} (simply replacing $\chi_\RS$ by $\chi_\ORSB$).
We reach then:
\begin{align}\label{eq:expansion_1rsb_basis_term}
    \Bigg[1- (1-e^{-\beta}) H\Big(- \frac{\sqrt{q_0} \xi + \sqrt{q_1 - q_0} \xi_1}{\sqrt{1-q_1}}\Big) \Bigg]^m &\simeq 
    \begin{dcases}
        1 &\textrm{ if } u < 0, \\
        e^{-c_m} &\textrm{ if } u > \sqrt{2\chi}, \\
        e^{-\frac{c_m u^2}{2 \chi}} &\textrm{ if } u \in (0,\sqrt{2\chi}).
    \end{dcases}
\end{align}
Anticipating on what follows, we introduce the auxiliary functions (in which $u = u(\xi_0,\xi_1) \coloneqq \sqrt{q_0}\xi_0 + \sqrt{1-q_0} \xi_1$): 
\begin{align}\label{eq:def_nabc}
    \begin{dcases}
        n(\xi_0) &\coloneqq \int_{u \leq 0} \mcD \xi_1 + e^{-c_m} \int_{u > \sqrt{2 \chi}} \mcD \xi_1 +  \int_{0<u< \sqrt{2 \chi}} \mcD \xi_1 e^{- \frac{c_m}{2\chi} u^2}, \\
        a(\xi_0) &\coloneqq \int_{0< u < \sqrt{2\chi}} \mcD \xi_1 \, e^{-\frac{c_m}{2\chi}u^2} \, u, \\
        b(\xi_0) &\coloneqq \int_{0< u < \sqrt{2\chi}} \mcD \xi_1 \, e^{-\frac{c_m}{2\chi}u^2} \, u^2, \\
        c(\xi_0) &\coloneqq e^{-c_m} \int_{u > \sqrt{2\chi}} \mcD \xi_1 + \frac{1}{2 \chi} \int_{0 < u < \sqrt{2\chi}} \mcD \xi_1 \, u^2 \, e^{-\frac{c_m}{2 \chi} u^2}.
    \end{dcases}
\end{align}
By integration by parts, all these functions can be expressed in terms of elementary functions and $H(x) \coloneqq \int_x^\infty \mcD \xi$.
Moreover, note that we have the identities: 
\begin{subnumcases}{\label{eq:dn}}
    \label{eq:dn_dq}
    \partial_{q_0}n(\xi_0) =  -\frac{c_m}{2\chi(1-q_0)} \Big[\frac{\xi_0}{\sqrt{q_0}} a(\xi_0) - b(\xi_0)\Big],& \\ 
    \label{eq:dn_dchi}
    \partial_{\chi}n(\xi_0) = \frac{c_m}{2\chi^2} b(\xi_0), & \\ 
    \label{eq:dn_dcm}
    \partial_{c_m}n(\xi_0) = -c(\xi_0). &
\end{subnumcases}
Eqs.~\eqref{eq:dn_dchi} and \eqref{eq:dn_dcm} can be obtained directly from the definition of eq.~\eqref{eq:def_nabc}. 
For eq.~\eqref{eq:dn_dq}, we found more convenient to differentiate the finite-$\beta$ integral one can write for $n(\xi_0)$ using eq.~\eqref{eq:expansion_1rsb_basis_term}, and then take its large $\beta$ limit. 
We leave the derivation of these equations to the reader.
Using the expansion of eq.~\eqref{eq:expansion_1rsb_basis_term}, we obtain the limit of the second term of eq.~\eqref{eq:phi_1rsb}:
\begin{align*}
    &\int \mcD \xi_0 \log \Bigg\{ \int \mcD \xi_1 \Bigg[1- (1-e^{-\beta}) H\Big(- \frac{\sqrt{q_0} \xi_0 + \sqrt{q_1 - q_0} \xi_1}{\sqrt{1-q_1}}\Big) \Bigg]^m \Bigg\} \Bigg] \simeq \int \mcD \xi_0 \log  n(\xi_0).
\end{align*}
In the end, we have computed the limit of the free energy at the 1-RSB level, i.e.\ 
$f^\star_\ORSB(\alpha) \coloneqq - \lim_{\beta \to \infty} \Phi_\ORSB(\alpha,\beta)/\beta$:
\begin{align}\label{eq:phi_1rsb_zero_temp}
   f^\star_\ORSB(\alpha) &=
    -\frac{q_0}{2[\chi+c_m(1-q_0)]} - \frac{1}{2c_m} \log \Big(\frac{\chi + c_m(1-q_0)}{\chi}\Big) - \frac{\alpha}{c_m} \int \mcD \xi_0 \log n(\xi_0),
\end{align}
in which one must implicitly maximize over $(c_m, q_0, \chi)$.
Note that an equivalent expression can be obtained using the limit of the average energy, 
since $e^\star_\ORSB(\alpha,\beta = \infty) = \lim_{\beta \to \infty} [-\partial_\beta \Phi_\ORSB(\alpha, \beta)] = f^\star_\ORSB(\alpha)$. 
Performing expansions in a similar way to the RS computations described in Appendix~\ref{subsec_app:zerotemp_RS},
we reach:
\begin{align}\label{eq:fstar_1rsb_app}
    f^\star_\ORSB(\alpha) &= e^\star_\ORSB(\alpha,\beta = \infty) = \alpha \int \mcD \xi_0 \frac{1}{n(\xi_0)} e^{-c_m} \int_{u > \sqrt{2\chi}} \mcD \xi_1.
\end{align}
Let us emphasize that eq.~\eqref{eq:fstar_1rsb_app} is an identity involving the parameters $(c_m, q_0, \chi)$, which have to be found by maximizing eq.~\eqref{eq:phi_1rsb_zero_temp}.

\subsection{Numerical procedure}\label{subsec_app:1rsb_numerical}

Let us summarize here the equations that allow to find the $1$-RSB prediction for the injectivity threshold, using the set of auxiliary functions of eq.~\eqref{eq:def_nabc}.
One simply proceeds by derivation of the limit of the free energy functional given in eq.~\eqref{eq:phi_1rsb_zero_temp} with respect to $(q_0, \chi,c_m)$, 
using eq.~\eqref{eq:dn}.
More precisely, at a given value of $\alpha > 2$, one must find $q_0 \in (0,1)$ and $\chi,c_m > 0$ satisfying the following set of three equations:
\begin{align}\label{eq:final_eq_1rsb_zerotemp}
    \begin{dcases}
        &\frac{q_0}{[\chi + (1-q_0)c_m]^2} = \frac{\alpha}{c_m \chi (1-q_0)} 
        \int \mcD \xi_0 \frac{1}{n(\xi_0)} \Big[\frac{\xi_0}{\sqrt{q_0}} a(\xi_0) - b(\xi_0) \Big], \\
        &\frac{\chi + (1-q_0)^2 c_m}{\chi[(1-q_0) c_m + \chi]^2} 
        = \frac{\alpha}{\chi^2}\int \mcD \xi_0 \frac{b(\xi_0)}{n(\xi_0)}, \\
        &\frac{c_m (1-q_0) [\chi + c_m (1-2q_0)]}{2[\chi + c_m(1-q_0)]^2} -\frac{1}{2} \log\frac{\chi+c_m(1-q_0)}{\chi}
        = \alpha \int \mcD \xi_0 \, \Big\{ \log n(\xi_0) + c_m \frac{c(\xi_0)}{n(\xi_0)}\Big\}.
    \end{dcases}
\end{align}
Once one has found the solution to eq.~\eqref{eq:final_eq_1rsb_zerotemp}, we can obtain the large-$\beta$ limit of the energy either from eq.~\eqref{eq:phi_1rsb_zero_temp} or eq.~\eqref{eq:fstar_1rsb_app}. 

\myskip
Following the statistical physics folklore, in order to implement an iterative scheme to solve eq.~\eqref{eq:final_eq_1rsb_zerotemp}, 
we use auxiliary variables.
Namely, we iterate the first two equations of eq.~\eqref{eq:final_eq_1rsb_zerotemp} as:
\begin{align}\label{eq:final_eq_1rsb_zerotemp_what}
    \begin{dcases}
        A_0^t &= \frac{\alpha}{c_m \chi^t (1-q_0^t)}
        \int \mcD \xi_0 \frac{1}{n_t(\xi_0)} \Big[\frac{\xi_0}{\sqrt{q_0^t}} a_t(\xi_0) - b_t(\xi_0) \Big], \\
        A_1^t &= \frac{\alpha}{(\chi^t)^2}\int \mcD \xi_0 \frac{b_t(\xi_0)}{n_t(\xi_0)}, \\
        q_0^{t+1} &= F_1(A_0^t, A_1^t, c_m^t), \\ 
        \chi^{t+1} &= F_2(A_0^t,A_1^t, c_m^t),
    \end{dcases}
\end{align}
in which we added a time subscript for the auxiliary functions to highlight their dependency on $q_0^t,\chi^t,c_m^t$. 
Moreover, the functions $F_1,F_2$ are defined as the only roots (in $q_0,\chi$) of the equations 
\begin{align*}
      A_0 = \frac{q_0}{[\chi + (1-q_0)c_m]^2} \hspace{1cm} \textrm{and} \hspace{1cm}
      A_1 = \frac{\chi + (1-q_0)^2 c_m}{\chi[(1-q_0) c_m + \chi]^2},
\end{align*}
such that $q_0 \in (0,1)$ and $\chi \geq 0$.
Note that this implies that 
\begin{align}\label{eq:chi_from_q0_1rbs}
    \chi = \frac{A_0 c_m (1-q_0)^2}{q_0 A_1 - A_0}.
\end{align}
Therefore, in order for the solution to exist we must have $A_0 < A_1$, and then the solution satisfies $q_0 > A_0 / A_1$.
The remaining equation on $q_0$ can be written as:
\begin{align}\label{eq:1rsb_zerotemp_remaining_q0}
    A_0 &= \frac{(A_1 q_0 - A_0)^2}{q_0 c_m^2 (1-q_0)^2 (A_1 - A_0)^2}.
\end{align}
We solve eq.~\eqref{eq:1rsb_zerotemp_remaining_q0} on $q_0$ with a polynomial equations solver, and consider the unique solution in $(0,1)$ such that the 
corresponding $\chi$ in eq.~\eqref{eq:chi_from_q0_1rbs} satisfies $\chi \geq 0$, i.e.\ such that $q_0 > A_0 / A_1$.

\myskip
At a given iteration $t$, we iterate eq.~\eqref{eq:final_eq_1rsb_zerotemp_what} for the value $c_m = c_m^t$. 
We then do a binary search to solve the last equation of eq.~\eqref{eq:final_eq_1rsb_zerotemp} and find $c_m^{t+1}$.
We found this procedure to converge very quickly (see the attached code \cite{github_repo}), and it yields the 1RSB curves in Fig.~\ref{fig:chi_estar_T0}
and the prediction of eq.~\eqref{eq:alphainj_1RSB}.

\section{Details of the FRSB computation}\label{sec_app:frsb}
In this section we derive the full RSB conjecture for the free entropy $\Phi(\alpha,\beta)$. 
Our computation is extremely close to the one of \cite{franz2017universality}, and we refer to this work (and the lecture notes \cite{urbani2018statistical}) for more details on 
the technicalities of the derivation.

\myskip
Recall the form of the $r$-th moment of the partition function, written as a function of the overlap matrix (without any assumption on the form of the saddle point), 
that is eq.~\eqref{eq:Phir_final}. 
Note that by using the Gaussian integration formula, we can rewrite $I_\beta(\bQ)$ so as to obtain:
\begin{align}\label{eq:Phir_frsb_1}
   \Phi(\alpha,\beta;r) &= \sup_{\bQ} \Big[\frac{1}{2} \log \det \bQ + \alpha \log \int_{\bbR^r} \frac{\rd \bu \rd \bv}{(2\pi)^{r}} e^{-\frac{1}{2} \sum_{a,b} Q^{ab} v^a v^b -\beta \sum_{a=1}^r \theta(u^a) + i \sum_{a=1}^r u^a v^a} \Big].
\end{align}
Let us now perform the replica method under the full-RSB ansatz described in Fig.~\ref{fig:q_frsb}.

\subsection{Entropic contribution}\label{subsec_app:entropic}

We start with the first ``entropic'' term in eq.~\eqref{eq:Phir_frsb_1}.
Its expression under a full-RSB ansatz is given in eq.~(23) of \cite{franz2017universality}, 
itself taken from Appendix~A.II of \cite{mezard1991replica}. 
However
the derivation is itself very interesting and will be useful for the other term in eq.~\eqref{eq:Phir_frsb_1}, so we first detail it here.

\myskip\textbf{Derivation for $r > 0$ --}
We focus on the entropic term, which we may write as: 
\begin{align}\label{eq:entropic_term_1}
    \frac{1}{2} \log \det \bQ &= -\log \int_{\bbR^r} \frac{\rd \bu}{(2\pi)^{r/2}} \exp\Big\{-\frac{1}{2} \bu^\intercal \bQ \bu\Big\}.
\end{align}
We fix a $k$-RSB ansatz, cf.\ Fig.~\ref{fig:q_frsb}, and we will take in the end the limit $k \to \infty$. 
We have in the $r \downarrow 0$ limit $m_{-1} \coloneqq r \leq m_0 \leq m_1 \leq \cdots \leq m_{k-1} \leq m_k = 1$, 
and the parameters $q_0 \leq q_1 \leq q_k \leq q_{k+1} = 1$.
Recall that in this ansatz, the hierarchical overlap matrix $\{Q_{ab}\}$ can be written as:
\begin{align*}
    \bQ &= \sum_{i=0}^{k+1} (q_i - q_{i-1}) \bJ^{(r)}_{m_{i-1}}, 
\end{align*}
with $J_m^{(r)}$ the block-diagonal matrix with $r/m$ blocks of size $m$, each diagonal block being the all-ones matrix.
In order to compute the integral of eq.~\eqref{eq:entropic_term_1}, we use a simple yet very powerful identity introduced in \cite{duplantier1981comment}, 
and valid for any matrix (not necessarily a hierarchical RSB matrix) $\{Q_{ab}\}$:
\begin{align}\label{eq:duplantier}
    \exp\Big\{-\frac{1}{2} \sum_{a,b=1}^r Q_{ab} u_a u_b\Big\} 
    &= \exp\Bigg(\frac{1}{2} \sum_{a,b=1}^r Q_{ab} \frac{\partial^2}{\partial h_a \partial h_b}\Bigg)\Bigg[\prod_{c=1}^r \exp(-i u_c h_c)\Bigg]_{\bh = 0}.
\end{align}
This identity can be shown by Taylor-expanding the exponential involving the differential operator.
Using it in eq.~\eqref{eq:entropic_term_1} we get:
\begin{align}\label{eq:entropic_term_2}
    - \frac{1}{2} \log \det \bQ &= \log \int_{\bbR^r} \frac{\rd \bu}{(2\pi)^{r/2}} \exp\Bigg(\frac{1}{2} \sum_{a,b=1}^r Q_{ab} \frac{\partial^2}{\partial h_a \partial h_b}\Bigg)\Bigg[\prod_{c=1}^r \exp(-i u_c h_c)\Bigg]_{\bh = 0}.
\end{align}
Note that $\bu$ does not appear in the differential operator, so that one can exchange the differential operator and the integral over $\bu$.
Integrating with respect to $\bu$ yields then, using the Fourier representation of the delta distribution (we denote $\partial_a = \partial/\partial_{h_a}$):
\begin{align}\label{eq:entropic_term_3}
    &\hspace{-0.3cm}- \frac{1}{2} \log \det \bQ = \log  \Bigg\{ (2\pi)^{r/2} \exp\Bigg(\frac{1}{2} \sum_{a,b=1}^r Q_{ab} \partial_a \partial_b\Bigg)\Bigg[\prod_{c=1}^r \delta(h_c)\Bigg]_{\bh = 0}\Bigg\}, \nonumber \\
    &\hspace{-0.3cm}= \log  \Bigg\{ (2\pi)^{r/2} \exp\Bigg(\frac{1}{2}\sum_{i=0}^{k+1} (q_i - q_{i-1}) \sum_{a,b=1}^r (\bJ^{(r)}_{m_{i-1}})_{ab} \partial_a \partial_b \Bigg)\Bigg[\prod_{c=1}^r \delta(h_c)\Bigg]_{\bh = 0}\Bigg\}, \nonumber \\
    &\hspace{-0.3cm}= \log  \Bigg\{ (2\pi)^{r/2} \exp\Bigg(\frac{1}{2}\sum_{i=0}^{k} (q_i - q_{i-1}) \sum_{a,b=1}^r (\bJ^{(r)}_{m_{i-1}})_{ab} \partial_a \partial_b\Bigg) \exp\Bigg(\frac{1-q_k}{2} \sum_{a=1}^r \partial^2_a\Bigg)\Bigg[\prod_{c=1}^r \delta(h_c)\Bigg]_{\bh = 0}\Bigg\}.
\end{align}
We use now the crucial identity, for any $\omega \geq 0$ and smooth function $f$, and which can be shown by Taylor-expanding $f$ around $h$ inside the integral on the right hand side:
\begin{align}\label{eq:identity_convolution}
    \exp \Big(\frac{\omega}{2} \partial_h^2\Big) f[h] &= [\gamma_\omega \star f](h) = \int \frac{\rd z}{\sqrt{2 \pi \omega}} e^{-\frac{z^2}{2\omega}} f(h-z).
\end{align}
Here we denoted $\gamma_\omega(x) = e^{-x^2/(2\omega)} / \sqrt{2\pi \omega}$, and $\gamma_0(x) = \delta(x)$.
Using eq.~\eqref{eq:identity_convolution} inside eq.~\eqref{eq:entropic_term_3} we reach:
\begin{align}\label{eq:entropic_term_4}
    &- \frac{1}{2} \log \det \bQ = \log  \Bigg\{ (2\pi)^{r/2} \exp\Bigg(\frac{1}{2}\sum_{i=0}^{k} (q_i - q_{i-1}) \sum_{a,b=1}^r (\bJ^{(r)}_{m_{i-1}})_{ab} \partial_a \partial_b\Bigg) \Bigg[\prod_{c=1}^r \gamma_{1-q_k}(h_c)\Bigg]_{\bh = 0}\Bigg\}.
\end{align}
We will iteratively apply the differential operator in the exponential, starting from $i = 0$ up to $i = k$.
We will make use of another important identity, which is just a consequence of simple differential calculus combined with eq.~\eqref{eq:identity_convolution}, 
and valid for any $p,n \in \bbN$ and smooth $R(h_1, \cdots, h_n)$:
\begin{align}\label{eq:identity_diff_calculus}
    \begin{dcases}
   \Bigg[\Big(\sum_{a=1}^n \frac{\partial}{\partial h_a}\Big)^p R(h_1, \cdots, h_n)\Bigg]_{h_a = h} &= \frac{\partial^p}{\partial h^p} [h \mapsto R(h,h,\cdots,h)], \\
    \exp\Bigg(\frac{\omega}{2} \Big(\sum_{a=1}^n \partial_a\Big)^2 \Bigg)[R(h_1, \cdots, h_n)]_{h_a = h} \!\!\!\! &=e^{\frac{\omega}{2} \frac{\partial^2}{\partial h^2}} R(h,\cdots,h) = \gamma_\omega \star [h \mapsto R(h,\cdots,h)].
    \end{dcases}
\end{align}
Let us now come back to eq.~\eqref{eq:entropic_term_4}. We separate the term $i = 0$, and we have, 
with eq.~\eqref{eq:identity_convolution}:
\begin{align}\label{eq:entropic_term_5}
     \nonumber
    &\exp\Bigg(\frac{1}{2}\sum_{i=0}^{k} (q_i - q_{i-1}) \sum_{a,b=1}^r (\bJ^{(r)}_{m_{i-1}})_{ab} \partial_a \partial_b\Bigg) \Bigg[\prod_{c=1}^r \gamma_{1-q_k}(h_c)\Bigg]_{\bh = 0} \\
    &= \exp\Bigg(\frac{q_0}{2} \Big(\sum_a \partial_a\Big)^2\Bigg) [\Xi(\bh)]_{\bh = 0}
    = \gamma_{q_0} \star [h \mapsto \Xi(h, \cdots, h)]_{h = 0},
\end{align}
with $\Xi(\bh)$ defined as: 
\begin{align*}
   \Xi(\bh) &\coloneqq \exp\Bigg(\frac{1}{2}\sum_{i=1}^{k} (q_i - q_{i-1}) \sum_{a,b=1}^r (\bJ^{(r)}_{m_{i-1}})_{ab} \partial_a \partial_b\Bigg)\Bigg[\prod_{c=1}^r \gamma_{1-q_k}(h_c)\Bigg]. 
\end{align*}
Note that $\Xi(\bh)$ factorizes over the inner diagonal blocks of size $m_0$, and we have
$\Xi(h,\cdots,h) = \zeta(h)^{r/{m_0}}$, with
\begin{align}
   \label{eq:def_f_entropic_contribution}
   \zeta(h)&\coloneqq  \exp\Bigg(\frac{1}{2}\sum_{i=1}^{k} (q_i - q_{i-1}) \sum_{a,b=1}^{r/m_0} (\bJ^{(r/m_0)}_{m_{i-1}/m_0})_{ab} \partial_a \partial_b\Bigg)\Bigg[\prod_{c=1}^{m_0} \gamma_{1-q_k}(h_c)\Bigg]_{h_c = h}.
\end{align}
Therefore, putting it back into eq.~\eqref{eq:entropic_term_5} and using eq.~\eqref{eq:identity_diff_calculus}, we have:
\begin{align*}
    &\exp\Bigg(\frac{1}{2}\sum_{i=0}^{k} (q_i - q_{i-1}) \sum_{a,b=1}^r (\bJ^{(r)}_{m_{i-1}})_{ab} \partial_a \partial_b\Bigg) \Bigg[\prod_{c=1}^r \gamma_{1-q_k}(h_c)\Bigg]_{\bh = 0}
    \! \! \! = [\gamma_{q_0} \star \zeta^{r/m_0}](h = 0),
\end{align*}
with $\zeta(h)$ defined in eq.~\eqref{eq:def_f_entropic_contribution}.
This procedure can then be repeated iteratively on the diagonal blocks, all the way to the innermost ones. 
Eq.~\eqref{eq:entropic_term_4} then becomes:
\begin{align}\label{eq:entropic_term_7}
    &- \frac{1}{2} \log \det \bQ = \log\Big[ (2\pi)^{r/2} \, \gamma_{q_0} \star g^{r/m_0} (m_0, h=0) \Big],
\end{align}
with the functions $g(m_i,h)$ iteratively defined as:
\begin{align}\label{eq:iterative_construction_g}
    \begin{dcases}
        g(m_k = 1, h) &= \gamma_{1-q_k}(h), \\ 
        g(m_{i-1}, h) &= \gamma_{q_i - q_{i-1}} \star g^{m_{i-1}/m_i}(m_i, h).
    \end{dcases}
\end{align}
We now take the $k \to \infty$ (Full RSB) limit in eq.~\eqref{eq:iterative_construction_g}.
In this limit we can approximate any function $q(x)$ (see Fig.~\ref{fig:q_frsb}),
and taking for $(m_i)_{i=0}^k$ a regular grid on $x \in [0, 1]$,
we have $m_0 \to 0$, $m_{k-1} \to 1$, and for all $i = 0, \cdots, k-1$, we have $m_i \to x$ and $m_{i} - m_{i-1} = \rd x$. 
Moreover $q_k \to q(1)$, $q_0 \to q(0)$, and $q_{i+1} - q_{i} = \dot{q}(x) \rd x$.
We sometimes also use the notation $q_m = q(0)$, $q_M = q(1)$.
To make things clearer, we will denote derivatives w.r.t.\ $x$ with dots, and the ones w.r.t.\ $h$ with the usual prime.
The second line of eq.~\eqref{eq:iterative_construction_g} becomes, at first order in $\rd x$ (recall the crucial eq.~\eqref{eq:identity_convolution}), 
for $x \in (0,1)$:
\begin{align*}
    g(x, h) - \rd x \, \dot{g}(x,h) &= e^{\frac{\dot{q}(x)}{2} \rd x \, \partial_h^2} \Big[g - \frac{\rd x}{x} g \log g\Big](x,h) = \Big(1 + \frac{\dot{q}(x)}{2} \rd x \, \partial_h^2\Big) \Big[g - \frac{\rd x}{x} g \log g\Big](x,h).
\end{align*}
Comparing the terms at first order in $\rd x$, we reach the PDE: 
\begin{align}\label{eq:Parisi_PDE_g}
    \dot{g}(x,h) &= -\frac{\dot{q}(x)}{2} g''(x,h) + \frac{1}{x} g \log g (x,h), \hspace{0.5cm} x \in (0, 1).
\end{align}
It is convenient to rewrite eq.~\eqref{eq:Parisi_PDE_g} in terms of $f(x,h) \coloneqq (1/x) \log g(x,h)$, which yields the \emph{Parisi PDE}: 
\begin{align}\label{eq:Parisi_PDE_f}
    \begin{dcases}
        f(1,h) &= \log \gamma_{1-q(1)}(h), \\ 
        \dot{f}(x,h) &= - \frac{\dot{q}(x)}{2} \big[f''(x,h) + x f'(x,h)^2\big], \hspace{0.5cm} x \in (0,1).
    \end{dcases}
\end{align}
The boundary condition in the first line was given by eq.~\eqref{eq:iterative_construction_g}: 
$g(1, h) = \gamma_{1-q(1)}(h)$.

\myskip 
\textbf{Remark: universality of the Parisi PDE --}
As can be already hinted by the calculation above and the method of \cite{duplantier1981comment}, the Parisi PDE 
described in eq.~\eqref{eq:Parisi_PDE_f} is actually extremely general: the specificities of the term that we wish to compute only appear 
in the boundary conditions at $x = 1$, while the evolution equation is only dependent on the ultrametric structure of the problem. 
We will see a clear example of this when computing the energetic contribution to the free entropy.

\myskip
\textbf{The $r \to 0$ limit --}
Taking the $r \to 0$ limit in eq.~\eqref{eq:entropic_term_7} yields finally: 
\begin{align*}
    - \frac{1}{2} \partial_r [\log \det \bQ]_{r = 0} = \frac{\log 2\pi}{2} + \gamma_{q_m} \star f (0,h = 0).
\end{align*}
\textbf{Solution to the Parisi PDE for the entropic contribution --}
Fortunately, with the boundary condition that we have here, the Parisi PDE of eq.~\eqref{eq:Parisi_PDE_f} is analytically solvable. 
Indeed $g(x,h)$ always remains (up to a scaling) a centered Gaussian function of $h$, or equivalently we can look for a solution in the form 
\begin{align*}
    f(x,h) &= \frac{1}{x} \log C(x) + \frac{1}{x} \log \gamma_{\omega(x)}(h),
\end{align*}
with $\omega(1) = 1 - q(1)$ and $C(1) = 1$.
This yields after some algebra simple ODEs on $\omega, C$ that are easily verified to be solved by: 
\begin{align*}
    \begin{dcases}
        \omega(x) &= \frac{1 - x q(x) - \int_x^1 q(u) \rd u}{x} = \frac{\lambda(x)}{x}, \\ 
        \log C(x) &= \frac{x}{2} \int_x^1 \frac{\rd u}{u^2} [1 + \log 2\pi \omega(u)].
    \end{dcases}
\end{align*}
We took the notation $\lambda(x)$ defined in eq.~\eqref{eq:def_lambdax}.
In particular, for every $x \in (0,1)$, we have: 
\begin{align*}
    \gamma_{q_m} \star f(x,h=0) &= \frac{1}{2} \int_x^1 \frac{\rd u}{u^2} \Big[1 + \log 2\pi \frac{\lambda(u)}{u}\Big] - \frac{1}{2 x} \log \Big[2 \pi \frac{\lambda(x)}{x}\Big] - \frac{q_m}{2 \lambda(x)}.
\end{align*}
We can take the limit of this equation as $x \to 0$.
With our notations, we have $\lambda(0) = 1 - \langle q \rangle$, $\lambda(1) = 1 - q_M$ and $\dot{\lambda}(q) = - u \dot{q}(u)$.
By integration by parts, we reach: 
\begin{align*}
    \gamma_{q_m} \star f(x,h=0) &= \frac{1}{2} \int_x^1 \frac{\rd u}{u} \Big[\frac{- u \dot{q}(u)}{\lambda(u)} - \frac{1}{u}\Big]- \frac{1}{2} [1 + \log 2\pi(1-q_M)] + \frac{1}{2 x} - \frac{q_m}{2 \lambda(x)}, \nonumber \\ 
    &=  -\frac{1}{2} \int_x^1 \rd u \frac{\dot{q}(u)}{\lambda(u)} - \frac{1}{2} \log 2\pi(1-q_M) - \frac{q_m}{2 \lambda(x)}.
\end{align*}
\noindent
\textbf{Final result for the entropic contribution --}
Therefore, taking the limit $x \to 0$, we have
\begin{align}\label{eq:entropic_frsb}
    &\partial_r [\log \det \bQ]_{r = 0} = -\log 2\pi - 2 \gamma_{q_m} \star f (0,h = 0) = \log (1-q_M) + \frac{q_m}{1-\langle q \rangle} + \int_0^1 \rd u \frac{\dot{q}(u)}{\lambda(u)}.
\end{align}
Note that eq.~\eqref{eq:entropic_frsb} is also equivalent to a formula given in Appendix II of \cite{mezard1991replica}
as can be seen by integration by parts: 
\begin{align}\label{eq:entropic_frsb_2}
   \partial_r \big[\log \det \bQ \big]_{r=0} &= \log (1-\langle q \rangle) + \frac{q_m}{1 - \langle q \rangle} - \int_0^1 \frac{\rd x}{x^2} \log \frac{\lambda(x)}{ 1- \langle q \rangle}.
\end{align}
In the IPP, one uses $\lambda(0) = 1 - \langle q \rangle$, $\lambda(1) = 1 - q_M$, and $\lambda(u) = \lambda(0) + \mathcal{O}(u^2)$.

\subsection{Energetic contribution}\label{subsec_app:energetic}

The second part of the free entropy is the energetic contribution, i.e.\ $\alpha G_{2,r}(\bQ)$, with
\begin{align*}
    G_{2,r}(\bQ) &\coloneqq \log \int_{\bbR^r} \frac{\rd \bu \rd \bv}{(2\pi)^{r}} e^{-\frac{1}{2} \sum_{a,b} Q^{ab} v^a v^b -\beta \sum_{a=1}^r \theta(u^a) + i \sum_{a=1}^r u^a v^a}.
\end{align*}
Again using the identity of eq.~\eqref{eq:duplantier}, we have:
\begin{align*}
    G_{2,r}(\bQ) &= \log \int_{\bbR^r} \frac{\rd \bu \rd \bv}{(2\pi)^{r}} e^{-\beta \sum_{a=1}^r \theta(u^a) + i \sum_{a=1}^r u^a v^a} e^{\frac{1}{2} \sum_{a,b} Q^{ab} \partial_a \partial_b} \Bigg[\prod_{a=1}^r e^{-i v^a h_a}\Bigg]_{h=0}, \\ 
    &= \log  e^{\frac{1}{2} \sum_{a,b} Q^{ab} \partial_a \partial_b} \Bigg[\prod_{a=1}^r e^{-\beta \sum_{a=1}^r \theta(h_a)}\Bigg]_{h=0}.
\end{align*}
One can notice that this equation is extremely similar to eq.~\eqref{eq:entropic_term_3}, 
but the function $\delta(h)$ has been replaced with $e^{-\beta \theta(h)}$. However, the whole procedure that we described above to obtain the Parisi PDE does not change at all, since it did not depend on the specifics of this function:
the PDE itself remains the same, only the boundary condition at $x = 1$ will be different.
In the end, this yields:
\begin{align*}
    \partial_r [G_{2,r}(\bQ)]_{r = 0} &= \gamma_{q(0)} \star f(x = 0, h = 0),
\end{align*}
with $f(x,h)$ given as the solution to the Parisi PDE with specific boundary condition at $x=1$:
\begin{align}\label{eq:Parisi_PDE_f_energetic}
    \begin{dcases}
        f(1,h) &= \log [\gamma_{1-q(1)} \star e^{-\beta \theta}](h), \\ 
        \dot{f}(x,h) &= - \frac{\dot{q}(x)}{2} \big[f''(x,h) + x f'(x,h)^2\big], \hspace{0.5cm} x \in (0,1).
    \end{dcases}
\end{align}
Note that one can equivalently write this PDE in terms of the parameter $q$ rather than $x$ by a change of variable $q = q(x)$, as described e.g.\ in \cite{urbani2018statistical}.

\subsection{Recovering the RS result from the full RSB equations}\label{subsec_app:rs_from_rsb}

In this paragraph we show that eq.~\eqref{eq:frsb_eq_rs} is equivalent to eq.~\eqref{eq:q_RS_eq_new}.
We denote $q_0 = q$ coherently with the RS computation.
One computes easily that  
\begin{align*}
    \gamma_{1-q} \star e^{-\beta \theta} (h) &= 1 - (1-e^{-\beta}) H \Big(\frac{-h}{\sqrt{1-q}}\Big).
\end{align*}
In particular, we have:
\begin{align*}
    \Big[\gamma_{1-q} \star e^{-\beta \theta}\Big]' (h) &= \frac{1-e^{-\beta}}{\sqrt{1-q}} H' \Big(\frac{-h}{\sqrt{1-q}}\Big).
\end{align*}
Therefore eq.~\eqref{eq:frsb_eq_rs} reads: 
\begin{align*}
    \frac{q}{(1-q)^2} &= \alpha \int \rd h \frac{e^{-\frac{h^2}{2q}}}{(1-q)\sqrt{2 \pi q}}
    \Bigg\{\frac{(1-e^{-\beta}) H' \Big(\frac{-h}{\sqrt{1-q}}\Big)}{1 - (1-e^{-\beta}) H \Big(\frac{-h}{\sqrt{1-q}}\Big)}\Bigg\}^2,
\end{align*}
or equivalently:
\begin{align}\label{eq:frsb_to_rs_1}
    \frac{q}{1-q} &= \alpha \int \mcD \xi
    \Bigg\{\frac{(1-e^{-\beta}) H' \Big(\xi \sqrt{\frac{q}{1-q}}\Big)}{1 - (1-e^{-\beta}) H \Big(\xi \sqrt{\frac{q}{1-q}}\Big)}\Bigg\}^2.
\end{align}
Since $H'(x) = - e^{-x^2/2}/\sqrt{2\pi}$, letting 
$f(\xi) \coloneqq 1 - (1-e^{-\beta}) H [\xi \sqrt{q/(1-q)}]$ we can rewrite eq.~\eqref{eq:frsb_to_rs_1}
and use an integration by parts:
\begin{align*}
    \frac{q}{1-q} &= -\alpha (1-e^{-\beta}) \sqrt{\frac{1-q}{q}} \int \rd \xi \frac{e^{-\frac{\xi^2}{2(1-q)}}}{2\pi} \Big[-\frac{f'(\xi)}{f(\xi)^2}\Big], \\ 
    &=  -\alpha (1-e^{-\beta}) \sqrt{\frac{1-q}{q}} \int \rd \xi \frac{e^{-\frac{\xi^2}{2(1-q)}}}{2\pi} \frac{\xi}{1-q} \frac{1}{f(\xi)}, \\ 
    &=  \alpha (1-e^{-\beta}) \sqrt{\frac{1}{q(1-q)}} \int \mcD \xi \frac{\xi H' \Big(\xi \sqrt{\frac{q}{1-q}}\Big)}{1 - (1-e^{-\beta}) H \Big(\xi \sqrt{\frac{q}{1-q}}\Big)},
\end{align*}
which is equivalent to eq.~\eqref{eq:q_RS_eq_new}.

\section{Technicalities of the algorithmic FRSB procedure}\label{sec_app:numerics_frsb}
\subsection{Technicalities of the derivation of the algorithmic procedure}\label{subsec_app:derivation_algorithmic_frsb}

\noindent
We give here some details on the arising of eqs.~\eqref{eq:frsb_procedure_iii}-\eqref{eq:frsb_procedure_vi}.
Recall that here all the quantities are considered at zero-temperature, with the scaling of eq.~\eqref{eq:frsb_zerotemp_scaling}.
\begin{itemize}[leftmargin=*]
    \item Eq.~\eqref{eq:frsb_procedure_iii} is a general relation between $q^{-1}$, $f$ and $\Lambda$ when eq.~\eqref{eq:frsb_eqs} is satisfied. It is explained for instance in \cite{franz2017universality}, 
    see eq.~(B.4).
    \item Eq.~\eqref{eq:frsb_procedure_iv} is a consequence of the general relation between $q^{-1}(x)$ (the function corresponding to the overlap matrix $\bQ^{-1}$) and $q(x)$ in the full RSB ansatz,
    which is (see e.g.\ eq.~(B.9) in \cite{franz2017universality}): 
    \begin{align*}
        \frac{1}{\lambda(x)} - \frac{1}{\lambda(0)} &= - x q^{-1}(x) + \int_0^x \rd y \, q^{-1}(y) \hspace{1cm} \textrm{and} \hspace{1cm} \lambda(0) = \sqrt{-\frac{q(0)}{q^{-1}(0)}}.
    \end{align*}
    Discretization of this relation yields eq.~\eqref{eq:frsb_procedure_iv}.
    \item One can invert eq.~\eqref{eq:def_lambdax} to obtain $q(x)$ as a function of $\lambda(x)$ via:
    \begin{align*}
        q(x) &= 1 - \frac{\lambda(x)}{x} + \int_x^1 \frac{\rd y}{y^2} \lambda(y).
    \end{align*}
    It is the discretization of this equation that yields eq.~\eqref{eq:frsb_procedure_v}.
    \item Eq.~\eqref{eq:frsb_procedure_vi} is a consequence of the boundary condition of eq.~\eqref{eq:frsb_eq_1} taken at $x = 1$, followed by a change of variable from $x$ to $q$ in the parameters 
    of the functions $\Lambda,f,\lambda$. After these procedures, eq.~\eqref{eq:frsb_eq_1} becomes, for the \emph{unrescaled variables} and any $\beta \geq 0$: 
    \begin{align*}
        \frac{q(0)}{\lambda(q(0))^2} + \int_{q(0)}^{q(1)} \frac{\rd p}{\lambda(p)^2} &= \alpha \int \rd h \Lambda(q(1), h) f'(q(1), h)^2.
    \end{align*}
    After taking the $\beta \to \infty$ limit, this yields for the variables that are rescaled as $\beta \to \infty$ according to eq.~\eqref{eq:frsb_zerotemp_scaling} (dropping the $\infty$ subscript):
    \begin{align*}
        \frac{q_0}{\lambda(q_0)^2} + \int_{q_0}^{1} \frac{\rd p}{\lambda(p)^2} &= \alpha \int \rd h \Lambda(1, h) f'(1, h)^2.
    \end{align*}
    Moreover, $f'(1,h) = - (h / \chi) \indi \{h \in (0,\sqrt{2\chi})\}$. 
    Therefore, rescaling then $t = h / \sqrt{2\chi}$ (and using the abusive notation $\Lambda(1,h) = \Lambda(1,t)$) we have:
    \begin{align}\label{eq_app:frsb_iii_1}
        \frac{q_0}{\lambda(q_0)^2} + \int_{q_0}^{1} \frac{\rd p}{\lambda(p)^2} &= \frac{2^{3/2}\alpha}{\sqrt{\chi}} \int_0^{1} \rd t \Lambda(1, t) \, t^2.
    \end{align}
    We focus on the left-hand side of this last equation, in the $k$-RSB ansatz. 
    We first use that $\lambda(q) = \chi + \int_q^1 \rd p \, x(p)$, a simple consequence of eq.~\eqref{eq:def_lambdax}, after change of variables and rescaling.
    Therefore, we have:
    \begin{align*}
       \int_{q_0}^{1} \frac{\rd p}{\lambda(p)^2} &= \sum_{i=0}^{k-1} \int_{q_i}^{q_{i+1}} \frac{\rd p}{\Big[\chi + \sum_{j=i+1}^{k-1} (q_{j+1} - q_j) x_j + (q_{i+1} - p) x_i\Big]^2}, \\ 
        &= \sum_{i=0}^{k-1} \frac{(q_{i+1} - q_i)}{\Big[\chi + \sum_{j=i+1}^{k-1} (q_{j+1} - q_j) x_j\Big] \Big[\chi + \sum_{j=i}^{k-1} (q_{j+1} - q_j) x_j\Big]}.
    \end{align*}
    Using the convention $q_{-1} = 0$ and $x_{-1} = 0$, we therefore reach from eq.~\eqref{eq_app:frsb_iii_1} that:
    \begin{align*}
      \sum_{i=0}^k \frac{(q_i - q_{i-1})}{\Big[\chi + \sum_{j=i+1}^k (q_j - q_{j-1}) x_{j-1}\Big]\Big[\chi + \sum_{j=i}^k (q_j - q_{j-1}) x_{j-1}\Big]}
      &= \frac{2^{3/2}\alpha}{\sqrt{\chi}} \int_0^{1} \rd t \, \Lambda( 1,t) \, t^2,
    \end{align*}
    which is precisely eq.~\eqref{eq:frsb_procedure_vi}.
\end{itemize}

\subsection{Numerical results of the procedure}\label{subsec_app:numerical_results_frsb}

In Fig.~\ref{fig_app:convergence_frsb} we present the results of typical iterations of the algorithmic procedure described above.
For different values of $\alpha$ and the RSB parameter $k$ we show the evolution of the estimates of $f^\star(\alpha)$, the susceptibility $\chi$, and the function $q(x)$, along the iterations.
In all the cases implemented we see power-law convergence to the solution, and very consistent results when varying the parameters used in the algorithm (in particular increasing $k$).

\begin{figure}
  \centering
\includegraphics[width=0.9\textwidth]{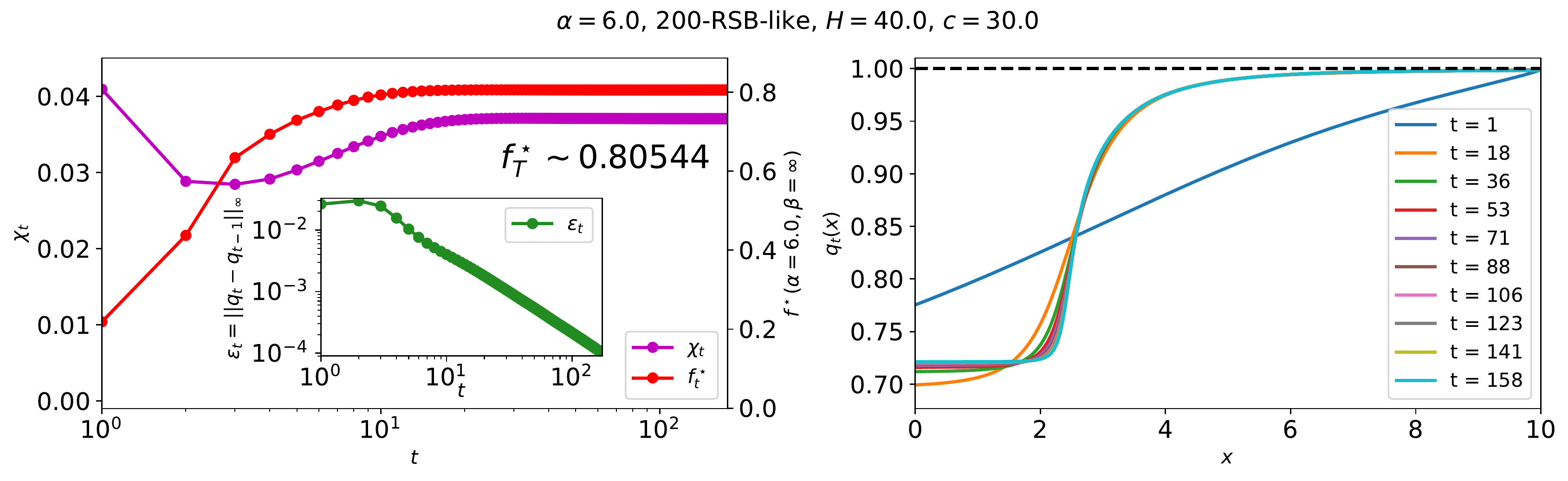}
\includegraphics[width=0.9\textwidth]{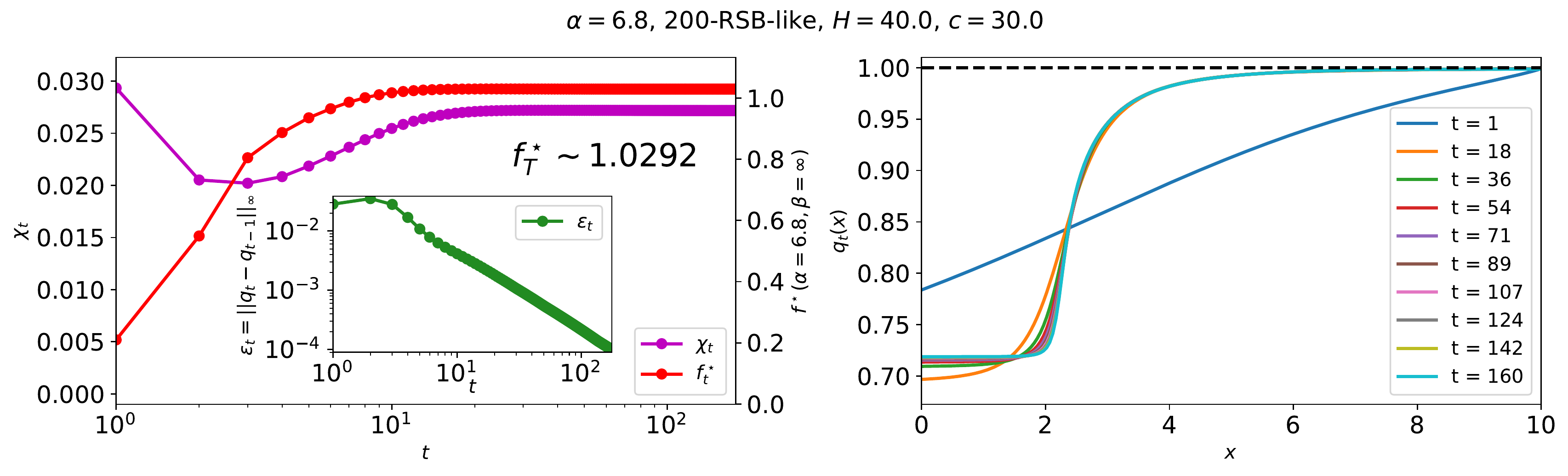}
\caption{
      Illustration of the convergence of the Full RSB procedure for different values of $\alpha$, 
      for $k =200, c = 30, H = 40$ (see Section~\ref{subsec_app:convolutions} for the definitions of $H,c$).
      On the left we show the convergence of $\chi$ and $f^\star(\alpha)$ along the iterations, as well as (in inset) the evolution of the error $\|q_t - q_{t-1}\|_\infty$ up to the threshold $10^{-4}$ we took for convergence. 
      On the right, we show the evolution of $q(x)$ along the iterations. We find very consistent behaviors when varying the parameter $k$, indicating that our simulations indeed capture well the Full RSB limit.
      We use $x_\mathrm{max} = 10$, well validated by the functions $q(x)$ we obtain.
      \label{fig_app:convergence_frsb}}
\end{figure}

\subsection{Some details on the implementation of convolutions}\label{subsec_app:convolutions}

\subsubsection{Convolutions via DFTs}

In order to implement the algorithmic procedure of Section~\ref{subsec:zero_temp_algorithmic_frsb}, we use a discrete Fourier transform approach.
We refer to \cite{getreuer2013survey} for a review on Gaussian convolution algorithms.
The goal is to compute the convolution of a centered Gaussian $\gamma_\omega$ with variance $\omega > 0$ and a function 
$f(h)$: 
\begin{align*}
    \gamma_\omega \star f(h) &= \int \rd z \, \gamma_\omega(z) \, f(h-z).
\end{align*}
We fix $N \in \bbN^\star$ and $H > 0$, and we consider a grid $h_\mu = \mu H / N$, with $\mu \in \{-N, \cdots, N\}$.
In order to leverage analytical formulas for the DFT of the Gaussian, we use a Shannon-Whittaker interpolation for $f$, i.e.\  
we approximate $f$ as: 
\begin{align*}
   f(h) &\simeq \sum_{\nu=-N}^N f_\nu \, \varphi\Big[\frac{Nh}{H} - \nu\Big],
\end{align*}
with $\varphi(x) = \mathrm{sinc}(x) = \sin(\pi x)/ (\pi x)$. Since $\varphi(\nu) = 0$ for all $\nu \in \bbZ^\star$, and $\varphi(0) = 1$, we have 
$f_\mu = f(h_\mu)$.
This approximation transfers into an approximation for $\gamma_\omega \star f$ as: 
\begin{align*}
    \gamma_\omega \star f(h) \simeq \sum_{\nu=-N}^N f_\nu \, (\gamma_\omega \star \varphi_\nu)(h),
\end{align*}
with $\varphi_\nu(h) \coloneqq \varphi(Nh / H - \nu)$.
Thus, we have, with $(\gamma_\omega \star f)_\mu = \gamma_\omega \star f(h_\mu)$, 
and using $\varphi_\nu(x) = \varphi_0(x - h_\nu)$:
\begin{align}\label{eq:dft_primal}
    (\gamma_\omega \star f)_\mu \simeq \sum_{\nu=-N}^N f_\nu (\gamma_\omega \star \varphi_\nu)_\mu = \sum_{\nu=-N}^N f_\nu (\gamma_\omega \star \varphi_0)_{\mu-\nu}.
\end{align}
Note that we naturally extended $(\gamma_\omega \star \varphi_0)_\mu$ to all $\mu \in \bbZ$, since these coefficients have an analytic expression. 
In the same way, we extend $f_\nu = 0$ if $|\nu| > N$.
For a general sequence $(f_\nu)_{\nu=-N}^N$, we define its Discrete Fourier Transform (DFT) as, for $k \in \{0, \cdots, 2N\}$:
\begin{align}
    \label{eq:def_dft}
    \begin{dcases}
        \hat{f}_k &= \sum_{\mu=-N}^N e^{-\frac{2\pi ik (\mu + N)}{2N+1}} f_\mu = e^{-\frac{2\pi i k N}{2N+1}} \sum_{\mu=-N}^N e^{-\frac{2\pi i k\mu }{2N+1}} f_\mu, \\ 
        f_\mu &= \frac{1}{2N+1} e^{\frac{2\pi i k N}{2N+1}} \sum_{k=0}^{2N} e^{\frac{2\pi i k \mu}{2N+1}} \hat{f}_k.
    \end{dcases}
\end{align}
Taking the DFT of eq.~\eqref{eq:dft_primal}, one finds:
\begin{align}\label{eq:dft_convolution_1}
    \widehat{\gamma_\omega \star f}_k &\simeq e^{\frac{2\pi i k N}{2N+1}} \, \hat{f}_k \, (\widehat{\gamma_\omega \star \varphi_0})_k.
\end{align}
Moreover, we define the Fourier transform as $\tilde{f}(\xi) \coloneqq \int \rd x \, f(x) \, e^{- 2 i \pi x \xi}$, and have
then easily $\tilde{\varphi}_0(\xi) = (H/N) \indi\{|\xi| \leq N/(2H)\}$.
The Fourier transform of the convolution is $\widetilde{f \star g}(\xi) = \tilde{f}(\xi) \tilde{g}(\xi)$.
This yields, via inverse Fourier transformation:
\begin{align*}
    (\gamma_\omega \star \varphi_0)_\mu &= \frac{H}{N} \int_{|\xi| \leq \frac{N}{2H}} \rd \xi \, e^{- 2 \pi^2 \omega \xi^2 + \frac{2i \pi H \mu \xi}{N}}
    = \int_{|\xi| \leq \frac{1}{2}} \rd \xi \, e^{- \frac{2 \pi^2 N^2 \omega \xi^2}{H^2} + 2i \pi \mu \xi}.
\end{align*}
Therefore, we have by eq.~\eqref{eq:def_dft}:
\begin{align*}
    (\widehat{\gamma_\omega \star \varphi_0})_k &= e^{-\frac{2\pi i k N}{2N+1}} \int_{|\xi| \leq \frac{1}{2}} \rd \xi \, e^{- \frac{2 \pi^2 N^2 \omega \xi^2}{H^2}}\sum_{\mu=-N}^N e^{-\frac{2\pi i k\mu }{2N+1}  + 2i \pi \mu \xi}.
\end{align*}
Taking $N \gg 1$, the term on the right is well approximated by the Dirac comb:
\begin{align*}
    \sum_{\mu=-N}^N e^{-\frac{2\pi i k\mu }{2N+1}  + 2i \pi \mu \xi} &\simeq \sum_{n \in \bbZ} \delta\Big(\xi - \frac{k}{2N+1} - n\Big).
\end{align*}
However, since $|\xi| \leq 1/2$ and $k \in \{0,\cdots,2N\}$, this implies: 
\begin{align*}
    (\widehat{\gamma_\omega \star \varphi_0})_k &\underset{N \to \infty}{\simeq} e^{-\frac{2\pi i k N}{2N+1}}\times
    \begin{dcases}
       \, e^{- \frac{2 \pi^2 N^2 \omega}{H^2} \Big[\frac{k}{2N+1}\Big]^2} \hspace{1cm} &\textrm{if } k \leq N, \\
       \, e^{- \frac{2 \pi^2 N^2 \omega}{H^2} \Big[\frac{k}{2N+1} - 1\Big]^2} \hspace{1cm} &\textrm{if } k > N.
    \end{dcases}
\end{align*}
Plugging it back into eq.~\eqref{eq:dft_convolution_1}, we finally obtain the formula we use for the DFT of the convolution $\gamma_\omega \star f$:
\begin{align*}
    \widehat{\gamma_\omega \star f}_k &\simeq \hat{f}_k \times
    \begin{dcases}
        \,e^{- \frac{2 \pi^2 N^2 \omega}{H^2} \Big[\frac{k}{2N+1}\Big]^2} \hspace{1cm} &\textrm{if } k \leq N, \\
        \,e^{- \frac{2 \pi^2 N^2 \omega}{H^2} \Big[\frac{k}{2N+1} - 1\Big]^2} \hspace{1cm} &\textrm{if } k > N.
    \end{dcases}
\end{align*}

\subsubsection{Taking a large enough value of \texorpdfstring{$N$}{N}}

Note that in order for the Gaussian convolutions to be numerically well-defined, we need the 
spacing in the grid we take on $h$ to be much smaller than the standard deviation of the Gaussians, that is we need for any $(q(x), q(x) + \rd x \, \dot{q}(x))$:
\begin{align*}
    \frac{H}{N} \ll \sqrt{\frac{\rd x \, \dot{q}(x)}{2 \chi}}.
\end{align*}
Note that, as shown in \cite{franz2017universality}, and as one can also verify from Fig.~\ref{fig:q_T0}, we have the following scaling 
as $x \to \infty$: $q(x) \sim 1 - A / x^2$, with $A > 0$. Therefore, $\dot{q}(x_\mathrm{max}) \sim 2 A / x_\mathrm{max}^3$.
Since we take $\rd x \sim x_\mathrm{max}/k$ in our numerical procedure, we have that in order for our procedure to be valid 
we need
\begin{align*}
    \frac{H}{N} \ll \sqrt{\frac{A}{k \chi x_\mathrm{max}^2}}.
\end{align*}
In practice, we find typically $\chi/A \sim 10^{-1}$, so that we will impose $N \gg N_0$, with 
\begin{align*}
    N_0 &\coloneqq \sqrt{k} \, H \, x_\mathrm{max}.
\end{align*}
In practice, we consider $N = c N_0$ (we often take $c = 30$) with a constant $c \gg 1$ in order to be well into the regime $N \gg N_0$, and still have a reasonable computational time.

\subsection{Bounds on the injectivity threshold}\label{subsec_app:numerics_frsb_threshold}

Let us detail the results of our numerical computation of $\alpha_\inj^\FRSB$, illustrated in Fig.~\ref{fig:alpha_inj_frsb_summary}.
For a given value of all parameters of the algorithm detailed in Section~\ref{subsec:zero_temp_algorithmic_frsb}, we ran 
Brent's method to find the zero of $f^\star_\FRSB(\alpha) - 1$, with a tolerance of $10^{-4}$ on the value of $\alpha$.
For all values of $\alpha$, we iterated the FRSB equations until $\norm{q^{t+1} - q_t}_\infty \leq \epsilon = 10^{-5}$.
We ran this procedure for different values of $k \in \{30, 50, 100, 200\}$, $x_\mathrm{max} \in \{10,11,12,13,14,15\}$, 
$H \in \{40,60\}$, and $c = 30$ (recall that $H$ and $c$ are defined in Section~\ref{subsec_app:convolutions}).
In Fig.~\ref{fig:alpha_inj_frsb_summary} the runs with different values of $H$ and $c$ are aggregated.

\section{``Escape through a mesh'' theorem: an alternative proof of Theorem~\ref{thm:bound_Gordon}}\label{sec_app:mesh}
In this appendix, we provide an alternative proof of Theorem~\ref{thm:bound_Gordon}
using Gordon's ``escape through a mesh'' theorem~\cite{gordon1988milman}. 
This seminal result establishes upper bounds on the probability of a random set intersecting a fixed set.
In an earlier version of this paper, we applied Gordon's ``mesh'' theorem to derive a highly suboptimal bound, $\alpha_\inj \lesssim 23.54$, due to loose estimates in our calculations. 
During the review process, however, we refined this approach, ultimately achieving the same replica-symmetric upper bound as that obtained through Gordon's min-max inequality.

\myskip 
For $m \geq 1$, we denote $a_m \coloneqq \EE[\|\bg\|_2]$, for $\bg \sim \mcN(0, \Id_m)$.
One can easily show the bound \cite{vershynin2018high}:
\begin{align}\label{eq:bound_am}
    \frac{m}{\sqrt{m+1}} \leq a_m \leq \sqrt{m}.
\end{align}
Moreover, for a closed set 
$S \subseteq \bbR^m$, we define its \emph{Gaussian width} as 
\begin{align*}
    \omega(S) \coloneqq \EE \max_{\bx \in S} [\bg \cdot \bx].
\end{align*}
We are now ready to state Gordon's ``escape through a mesh'' theorem.
\begin{theorem}[\cite{gordon1988milman}]
    \noindent
    Let $S \subseteq \mcS^{m-1}$ be a closed subset such that $\omega(S) < a_{m-n}$.
    Let $V$ be a uniformly-sampled random $n$-dimensional subspace of $\bbR^m$. 
    Then  
    \begin{align*}
        \bbP\Big[V \cap S = \emptyset\Big] \geq 1 - \frac{7}{2} \exp \Big\{-\frac{1}{18} \Big(a_{m-n} - \omega(S)\Big)^2\Big\}.
    \end{align*}
\end{theorem}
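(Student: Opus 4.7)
The plan is to prove the escape theorem by reducing the random-subspace problem to a bilinear Gaussian min--max over $S$ and the unit sphere, and then invoking Gaussian comparison (Proposition~\ref{prop:gaussian_minmax}, which is a version of Gordon's inequality) to convert this to a simpler linear expression whose tails we can control directly via Gaussian concentration.

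First, I would realize $V$ as the kernel of a Gaussian matrix. Specifically, let $\bG \in \bbR^{(m-n) \times m}$ have i.i.d.\ $\mcN(0,1)$ entries; then $V \coloneqq \ker \bG$ is a uniformly random $n$-dimensional subspace of $\bbR^m$. Consequently
\begin{align*}
    V \cap S = \emptyset \iff \min_{\bs \in S} \|\bG \bs\| > 0 \iff \min_{\bs \in S} \max_{\bu \in \mcS^{m-n-1}} \bu^\intercal \bG \bs > 0.
\end{align*}
Denote the right-most quantity $L(\bG)$. Since $S \subseteq \mcS^{m-1}$ is closed, hence compact, and the auxiliary sphere is compact, the min and max are attained and continuous in $\bG$.

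Next I would apply Gordon's Gaussian min--max comparison (Proposition~\ref{prop:gaussian_minmax}, used with $\mcS_\bv = S$, $\mcS_\bu = \mcS^{m-n-1}$, and $\psi \equiv 0$). This yields, for every $t \in \bbR$,
\begin{align*}
    \bbP[L(\bG) \leq t] \leq 2\,\bbP\Big[\mcC(\bg,\bh) \leq t\Big], \qquad \mcC(\bg,\bh) = \min_{\bs \in S} \max_{\bu \in \mcS^{m-n-1}} \big[\|\bu\|\,\bh^\intercal \bs + \|\bs\|\,\bg^\intercal \bu\big],
\end{align*}
with $\bh \sim \mcN(0,\Id_m)$, $\bg \sim \mcN(0,\Id_{m-n})$ independent. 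Since $\|\bs\| = \|\bu\| = 1$ the max and min decouple, giving the simple expression
\begin{align*}
    \mcC(\bg,\bh) = \|\bg\| + \min_{\bs \in S}\, \bh^\intercal \bs = \|\bg\| - \max_{\bs \in S}\,(-\bh)^\intercal \bs.
\end{align*}
By the symmetry $-\bh \stackrel{d}{=} \bh$, setting $A = \|\bg\|$ and $B = \max_{\bs \in S} \bh^\intercal \bs$ one has $\EE A = a_{m-n}$, $\EE B = \omega(S)$, and
\begin{align*}
    \bbP[\mcC(\bg,\bh) \leq 0] = \bbP[A \leq B].
\end{align*}

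Finally I would apply standard Gaussian concentration to both $A$ and $B$. Since $\bg \mapsto \|\bg\|$ is $1$-Lipschitz, and $\bh \mapsto \sup_{\bs \in S} \bh^\intercal \bs$ is $1$-Lipschitz (as a sup of $1$-Lipschitz maps, using $\|\bs\| = 1$), the Borell--Tsirelson--Ibragimov--Sudakov inequality yields Gaussian tail bounds
\begin{align*}
    \bbP[|A - \EE A| \geq s] \leq 2e^{-s^2/2}, \qquad \bbP[|B - \EE B| \geq s] \leq 2 e^{-s^2/2}.
\end{align*}
Splitting $\{A \leq B\}$ via $A \leq \EE A - s$ or $B \geq \EE B + s$ with $s = (a_{m-n}-\omega(S))/2$ gives a bound of the form $C \exp\{-c(a_{m-n}-\omega(S))^2\}$, which combined with the factor $2$ from Gordon's inequality yields the claim up to explicit constants. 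The main obstacle is sharpening these constants to the stated $\tfrac{7}{2}$ and $\tfrac{1}{18}$: the naive split above gives something like $8\exp\{-(a_{m-n}-\omega(S))^2/8\}$, and to reach the original Gordon--Milman constants one needs to balance the two concentration bounds more carefully (for instance, using the stronger one-sided deviation $\bbP[A \leq \EE A - s] \leq e^{-s^2/2}$ without the factor $2$, and optimizing the split ratio $s$ vs.\ $a_{m-n}-\omega(S)-s$, rather than using a symmetric cut at $1/2$).
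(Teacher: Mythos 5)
The paper does not prove this statement; it is Gordon's escape-through-a-mesh theorem, cited directly from \cite{gordon1988milman} and used as a black box. Your proposed derivation---realizing $V$ as the kernel of an i.i.d.\ Gaussian matrix $\bG\in\bbR^{(m-n)\times m}$, rewriting $\{V\cap S=\emptyset\}$ as positivity of $\min_{\bs\in S}\max_{\bu\in\mcS^{m-n-1}}\bu^\intercal\bG\bs$, applying the Gaussian min--max inequality with $\psi\equiv 0$, and finishing with Gaussian concentration of the two $1$-Lipschitz functionals $\|\bg\|$ and $\max_{\bs\in S}\bh^\intercal\bs$---is a valid and quite standard modern route, and every individual step you outline is correct.

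Where you fall short of the statement as written is only in the constants, which you flag honestly. Using the one-sided bound $\bbP[F\leq\EE F-s]\leq e^{-s^2/2}$ and a split parameter $\lambda\in(0,1)$ of the margin $\Delta\coloneqq a_{m-n}-\omega(S)$, your argument yields $\bbP[V\cap S\neq\emptyset]\leq 2\big(e^{-\lambda^2\Delta^2/2}+e^{-(1-\lambda)^2\Delta^2/2}\big)\leq 4\,e^{-\min(\lambda,1-\lambda)^2\Delta^2/2}$. The choice $\lambda=1/2$ gives $4\,e^{-\Delta^2/8}$ (a sharper exponent than Gordon's, larger prefactor), while $\lambda=1/3$ gives $4\,e^{-\Delta^2/18}$ (Gordon's exponent, but prefactor $4$ instead of $7/2$). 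The factor of $2$ inherited from the two-sided min--max comparison is the structural reason you cannot match the prefactor by this route: Gordon's original argument applies the comparison theorem directly to the bilinear Gaussian process rather than passing through the probabilistic min--max bound, which is how the constant $7/2$ is obtained. For the way the theorem is used downstream (one only needs some bound of the form $Ce^{-c\Delta^2}$ with $c>0$ so that the probability tends to $1$ when $a_{m-n}-\omega(S)\to\infty$), your version is entirely serviceable.
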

Applying this theorem to Proposition~\ref{prop:injectivity_random_intersection}, we directly reach 
\begin{corollary}\label{cor:gordon_mesh_injectivity}
    \noindent
    Assume that $a_{m-n} - \omega(C_{m,n} \cap \mcS^{m-1}) \to \infty$ as $m,n \to \infty$.
    Then $p_{m,n} \to 1$, i.e.\ $\varphi_\bW$ is injective w.h.p.
\end{corollary}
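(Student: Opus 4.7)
The plan is to combine Proposition~\ref{prop:injectivity_random_intersection} with Gordon's ``escape through a mesh'' theorem in a fairly direct way. First, I would use Proposition~\ref{prop:injectivity_random_intersection} to write $p_{m,n} = \bbP_V[V \cap C_{m,n} = \{0\}]$, and then exploit the fact that $C_{m,n}$ is a cone (it is invariant under positive scalar multiplication, since multiplying by $t > 0$ preserves the sign of each coordinate). Because $V$ is a linear subspace, this immediately yields the equivalence
\begin{align*}
    V \cap C_{m,n} = \{0\} \quad \Longleftrightarrow \quad V \cap S = \emptyset, \qquad S \coloneqq C_{m,n} \cap \mcS^{m-1},
\end{align*}
which puts the problem in precisely the form required by the mesh theorem.

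Next, I would verify that $S$ is a closed subset of $\mcS^{m-1}$. Since $\mcS^{m-1}$ is closed in $\bbR^m$, it suffices to show $C_{m,n}$ is closed. If $\bx^{(k)} \to \bx$ in $\bbR^m$ and $\bx$ had at least $n$ strictly positive coordinates, say indices $i_1, \ldots, i_n$, then by continuity $x^{(k)}_{i_j} > 0$ for all $j$ and all sufficiently large $k$, contradicting the assumption that each $\bx^{(k)}$ has strictly fewer than $n$ strictly positive coordinates. Hence $C_{m,n}$ is closed, and so is $S$.

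With these two observations in place I would apply Gordon's mesh theorem directly: under the hypothesis $a_{m-n} - \omega(S) \to \infty$, we have $\omega(S) < a_{m-n}$ for all sufficiently large $m, n$, so the theorem yields
\begin{align*}
    p_{m,n} \;=\; \bbP_V[V \cap S = \emptyset] \;\geq\; 1 - \frac{7}{2}\exp\Big\{-\tfrac{1}{18}\big(a_{m-n} - \omega(S)\big)^2\Big\},
\end{align*}
and the right-hand side converges to $1$ by assumption, proving the claim.

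This argument itself is almost immediate once the mesh theorem is in hand; the only substantive bookkeeping is the cone-to-sphere reduction and the closedness check above. The \emph{real} work needed to turn this corollary into the replica-symmetric threshold $\alpha_\inj^{\RS} \simeq 7.65$ lies in estimating the Gaussian width $\omega(C_{m,n} \cap \mcS^{m-1})$ and comparing it to $a_{m-n} \sim \sqrt{m-n}$ in the proportional regime --- that is what would be the hard part of applying the corollary, but it is outside the scope of the corollary itself.
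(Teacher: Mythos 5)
Your proof is correct and follows exactly the route the paper takes: Proposition~\ref{prop:injectivity_random_intersection} together with the cone-to-sphere reduction (which the paper already notes in the remark following that proposition) and Gordon's mesh theorem. The paper presents this corollary as an immediate consequence; your write-up just spells out the two routine checks — that $C_{m,n}\cap\mcS^{m-1}$ is closed and that $V\cap C_{m,n}=\{0\}$ is equivalent to $V\cap C_{m,n}\cap\mcS^{m-1}=\emptyset$ — which are indeed all that is required.
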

We can now state the core of our argument, which is an upper bound for the Gaussian width $\omega(C_{m,n} \cap \mcS^{m-1})$, and is proven in Appendix~\ref{subsec_app:ub_gaussian_width}.
\begin{proposition}\label{prop:ub_gaussian_width}
    \noindent
    Assume $\alpha > 2$.
    Then  
    \begin{align}\label{eq:ub_gaussian_width}
        \limsup_{m \to \infty} \frac{\omega(C_{m,n} \cap \mcS^{m-1})^2}{n}  &\leq \alpha \left(1 - \int_{0}^{t_\alpha} \mcD x \, x^2\right),
    \end{align}
    where recall that $\mcD x \coloneqq e^{-x^2/2} \rd x / \sqrt{2\pi}$ is the standard Gaussian measure. Moreover, $t_\alpha$ is the unique value of $t \geq 0$ such that 
    \begin{align}\label{eq:def_talpha}
        \int_{t_\alpha}^\infty \mcD x &= \frac{1}{\alpha}.
    \end{align}
\end{proposition}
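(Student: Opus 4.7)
\medskip
\noindent\textbf{Proof plan for Proposition~\ref{prop:ub_gaussian_width}.}
The plan is to solve the defining maximization problem for the Gaussian width explicitly in terms of order statistics of $\bg$, and then to compute the resulting expectation asymptotically using standard quantile convergence.
First I would show that for fixed $\bg \sim \mcN(0, \Id_m)$,
\begin{align*}
    M(\bg) \coloneqq \max_{\bx \in C_{m,n} \cap \mcS^{m-1}} \bg \cdot \bx = \sqrt{\sum_{k=1}^{n-1} (g_{(k)})_+^2 + \sum_{i \, : \, g_i < 0} g_i^2},
\end{align*}
where $g_{(1)} \geq g_{(2)} \geq \cdots \geq g_{(m)}$ are the order statistics and $(x)_+ = \max(x, 0)$.
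The reasoning: for any coordinate $i$, we may leave $x_i$ unconstrained or force $x_i \leq 0$, the total number of unconstrained coordinates being at most $n-1$. Given such a choice, the partial contribution to $\bg \cdot \bx$ under $\|\bx\|_2 = 1$ is maximized by $x_i \propto g_i$ (resp.\ $x_i \propto -(-g_i)_+$) for unconstrained (resp.\ constrained) indices, yielding partial squared sum $g_i^2$ (resp.\ $(-g_i)_+^2$). Maximizing over the choice of unconstrained set reduces to keeping the $n-1$ largest strictly positive coordinates of $\bg$ unconstrained, which gives the closed form above.

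Next, I would apply Jensen's inequality $\omega(C_{m,n} \cap \mcS^{m-1})^2 = (\EE M)^2 \leq \EE[M^2]$ and compute the two parts of $\EE[M^2]$ separately. The second sum is trivial: $\EE[\sum_{i \, : \, g_i < 0} g_i^2] = m \int_{-\infty}^0 x^2 \, \mcD x = m/2$. For the first sum, I bound $\sum_{k=1}^{n-1} (g_{(k)})_+^2 \leq \sum_{i=1}^m g_i^2 \, \indi\{g_i > g_{(n)}\}$, and then use the fact that under $\alpha > 2$, the empirical quantile $g_{(n)}$ converges almost surely and in $L^2$ to the deterministic value $t_\alpha > 0$ defined by $\int_{t_\alpha}^\infty \mcD x = 1/\alpha$. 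By Glivenko--Cantelli-type convergence of empirical distributions together with a uniform integrability argument,
\begin{align*}
    \frac{1}{m} \EE\Bigg[ \sum_{i=1}^m g_i^2 \, \indi\{g_i > g_{(n)}\}\Bigg] \xrightarrow[m \to \infty]{} \int_{t_\alpha}^\infty x^2 \, \mcD x.
\end{align*}
Combining the two contributions and using the identity $\int_{t_\alpha}^\infty x^2 \, \mcD x + \int_{-\infty}^0 x^2 \, \mcD x = 1 - \int_0^{t_\alpha} x^2 \, \mcD x$, we obtain $\limsup_m \EE[M^2]/n \leq \alpha \, (1 - \int_0^{t_\alpha} x^2 \, \mcD x)$, which is the desired bound.

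The main obstacle is the quantitative control of the expectation involving the order statistic $g_{(n)}$ in the third step. Specifically, one needs to replace the random threshold $g_{(n)}$ by the deterministic $t_\alpha$ without losing any constant factors, and to ensure uniform integrability of the squared sum. This can be handled via standard large deviation bounds on $g_{(n)} - t_\alpha$ (which concentrates at rate $1/\sqrt{m}$) combined with the fact that $g_i^2 \indi\{g_i > c\}$ has exponentially decaying tails uniformly in $c \in [t_\alpha/2, 2 t_\alpha]$. A clean alternative would be to split the expectation according to the event $\{|g_{(n)} - t_\alpha| \leq \epsilon\}$ and handle the complementary low-probability event using the crude deterministic bound $M^2 \leq \|\bg\|^2$, whose expectation is $m$; letting $\epsilon \downarrow 0$ afterwards yields the result.
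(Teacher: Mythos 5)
Your proposal is correct and follows essentially the same route as the paper's Appendix~\ref{subsec_app:ub_gaussian_width}: identify the optimal $\bx$ per choice of positive support (yielding the same quantity $\sum_{k<n}(g_{(k)})_+^2 + \sum_{i:g_i<0} g_i^2$ inside the square root), pass to $\EE[M^2]$ by Jensen, split off the $m/2$ contribution from negative coordinates, and then replace the random threshold $g_{(n)}$ by the deterministic quantile $t_\alpha$ using order-statistic concentration plus a crude bound $M^2 \le \|\bg\|^2$ on the bad event. The only notable presentational difference is that you compute $M(\bg)$ exactly by a direct projection-onto-cone argument, whereas the paper obtains the same expression as an upper bound via weak duality with a Lagrange multiplier for the sphere constraint — both give the same result and neither is materially shorter; the concentration step you sketch is what the paper isolates as Lemma~\ref{lemma:control_seq_gaussians}.
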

Notably -- as discussed below -- we expect that our proof can be improved by classical concentration arguments, to yield that eq.~\eqref{eq:ub_gaussian_width} holds as an equality for $\lim(\omega^2/n)$, although we do not require it 
to prove Theorem~\ref{thm:bound_Gordon}.
Since $a_{m-n} \geq \sqrt{\alpha-1} \sqrt{n} (1 - \smallO(1))$ by eq.~\eqref{eq:bound_am}, 
Proposition~\ref{prop:ub_gaussian_width} and Corollary~\ref{cor:gordon_mesh_injectivity}
imply that $p_{m,n} \to 1$ whenever 
\begin{align*}
    \alpha - 1 > \alpha \left(1 - \int_{0}^{t_\alpha} \mcD x \, x^2\right),
\end{align*}
i.e.\ whenever $\alpha > \alpha_\inj^\mathrm{mesh}$ with $\alpha_\inj^\mathrm{mesh}$ the solution to
\begin{align}\label{eq:alpha_inj_mesh}
    \begin{dcases}
    \alpha \int_0^{t_{\alpha}} \mcD x \, x^2 &= 1, \\
    \int_{t_\alpha}^\infty \mcD x &= \frac{1}{\alpha}.
    \end{dcases}
\end{align}
We recognize in eq.~\eqref{eq:alpha_inj_mesh} the replica-symmetric threshold prediction of eqs.~\eqref{eq:chi_RS} and \eqref{eq:fstar_RS}, with $t = \sqrt{2\chi_\RS}$. 
Thus, $\alpha_\inj^\mathrm{mesh} = \alpha_\inj^\RS$, and this ends our alternative proof of Theorem~\ref{thm:bound_Gordon}.

\subsection{Proof of Proposition~\ref{prop:ub_gaussian_width}}\label{subsec_app:ub_gaussian_width}

    We denote $\omega = \omega(C_{m,n} \cap \mcS^{m-1})$.
    By weak duality, we have
    \begin{align*}
        \omega &= \EE \, \sup_{\bx \in C_{m,n}} \inf_{\lambda \in \bbR} \Big[\bg \cdot \bx - \frac{\lambda}{2} (\|\bx\|^2- 1) \Big], \\ 
        &\leq \EE \, \inf_{\lambda \in \bbR} \sup_{\bx \in C_{m,n}} \Big[\bg \cdot \bx - \frac{\lambda}{2} (\|\bx\|^2- 1) \Big], \\
        &\leq \EE \, \inf_{\lambda > 0} \sup_{\bx \in C_{m,n}} \Big[\bg \cdot \bx - \frac{\lambda}{2} (\|\bx\|^2- 1) \Big]. 
    \end{align*}
    An element $\bx \in C_{m,n}$ can be parametrized by a subset $S \subseteq [m]$ with $|S| < n$, and a set of values $\{x_\mu\}$, with $x_\mu > 0$ for $\mu \in S$ and $x_\mu \leq 0$ for $\mu \notin S$.
    This yields:
    \begin{align*}
        \omega &\leq \EE \, \inf_{\lambda > 0} \max_{\substack{S \subseteq [m] \\ |S| < n}} \Big[\frac{\lambda}{2} + \sum_{\mu \in S} \sup_{x > 0} \Big(x g_\mu - \frac{\lambda}{2} x^2\Big) 
        + \sum_{\mu \notin S} \sup_{x \leq 0} \Big(x g_\mu - \frac{\lambda}{2} x^2\Big) 
        \Big], \\ 
        &\leq \EE \, \inf_{\lambda > 0} \max_{\substack{S \subseteq [m] \\ |S| < n}} \Big[\frac{\lambda}{2} + \frac{1}{2 \lambda} \sum_{\mu \in S} g_\mu^2 \indi\{g_\mu \geq 0\} + \frac{1}{2 \lambda} \sum_{\mu \notin S} g_\mu^2 \indi\{g_\mu \leq 0\} 
        \Big], \\
        &\leq \EE \, \Bigg[\Bigg(\max_{\substack{S \subseteq [m] \\ |S| < n}} \Big[\sum_{\mu \in S} g_\mu^2 \indi\{g_\mu \geq 0\} + \sum_{\mu \notin S} g_\mu^2 \indi\{g_\mu \leq 0\}\Big]\Bigg)^{1/2} \Bigg].
    \end{align*}
    We used that $\inf_{\lambda > 0}[\lambda + T / \lambda] = 2\sqrt{T}$, for $T> 0$.
    Since the law of $\bg$ is invariant under permutation of the indices, 
    we can assume $g_1 \geq g_2 \geq \cdots \geq g_p \geq 0 > g_{p+1} \geq \cdots \geq g_m$, 
    with $p = p(\bg) \in [m]$.
    It is then straightforward to check that:
    \begin{align*}
        \max_{\substack{S \subseteq [m] \\ |S| < n}} \Big[\sum_{\mu \in S} g_\mu^2 \indi\{g_\mu \geq 0\} + \sum_{\mu \notin S} g_\mu^2 \indi\{g_\mu \leq 0\}\Big] 
        &= \sum_{\mu=1}^{n-1} g_\mu^2 \indi\{g_\mu \geq 0\} + \sum_{\mu=p+1}^m g_\mu^2.
    \end{align*}
    Therefore 
    \begin{align}\label{eq:ub_omega}
          \frac{1}{\sqrt{n}} \omega 
            \leq \EE \sqrt{\frac{1}{n} \sum_{\mu=1}^{n-1} g_\mu^2 \indi\{g_\mu \geq 0\} + \frac{1}{n}\sum_{\mu=p+1}^m g_{\mu}^2}
            \leq \sqrt{ \frac{1}{n} \EE\sum_{\mu=1}^{n-1} g_\mu^2 \indi\{g_\mu \geq 0\} + \frac{1}{n}\EE \sum_{\mu=p+1}^m g_{\mu}^2}.
    \end{align}
    Note that -- although it is not needed in what follows -- a careful analysis based on concentration would show that the first and second inequalities of eq.~\eqref{eq:ub_omega} actually hold as equalities, 
    up to a multiplicative term $1 + \smallO(1)$ as $n, m \to \infty$.
    The second term inside the square root in eq.~\eqref{eq:ub_omega} can be computed easily: 
    \begin{align}\label{eq:gmu_half}
        \frac{1}{n} \EE \sum_{\mu=p+1}^m g_{\mu}^2 &= \frac{1}{n} \EE \sum_{\mu=1}^m g_\mu^2 \indi\{g_\mu < 0\} = 
        \frac{m}{2n} \xrightarrow[m \to \infty]{} \frac{\alpha}{2}.
    \end{align}
    We now show: 
    \begin{align}\label{eq:to_show_gwidth}
        \lim_{m \to \infty} \frac{1}{m} \EE \sum_{\mu=1}^{n-1} g_\mu^2 \indi\{g_\mu \geq 0\} = \int_{t_\alpha}^\infty \mcD x \, x^2 = \frac{1}{2} - \int_0^{t_\alpha} \mcD x \, x^2,
    \end{align}
    with $t_\alpha$ defined in eq.~\eqref{eq:def_talpha}. Combining eqs.~\eqref{eq:gmu_half} and \eqref{eq:to_show_gwidth} in eq.~\eqref{eq:ub_omega} ends the proof of Proposition~\ref{prop:ub_gaussian_width}.
    To prove eq.~\eqref{eq:to_show_gwidth} we rely on the following technical lemma, proven in Appendix~\ref{subsec_app:proof_control_seq_gaussians}.
    \begin{lemma}\label{lemma:control_seq_gaussians}
        \noindent
        Let $m \geq 1$, and $z_1, \cdots, z_m \iid \mcN(0,1)$.
        Denote as $g_1 \geq \cdots \geq g_m$ the non-increasing ordering of $(z_\mu)_{\mu=1}^m$.
        For $\beta \geq 1$, let $t_\beta \in \bbR$ be the unique solution to $\beta \int_{t_\beta}^\infty \mcD x = 1$.
        Then, for any $\beta \in [1, \infty)$, if $n \in \{1, \cdots, m\}$ with $m/n \to \beta$ as $m \to \infty$:
        \begin{itemize}
            \item[$(i)$] $g_{n} \xrightarrow[m \to \infty]{a.s.} t_\beta$.
            \item[$(ii)$]
            We have
            \begin{equation*}
           \lim_{\delta \downarrow 0} \limsup_{m \to \infty} \frac{1}{m} \EE\left[\sum_{(1-\delta)n\leq \mu \leq n} g_\mu^2 \right] = 0.
            \end{equation*}
        \end{itemize}
    \end{lemma}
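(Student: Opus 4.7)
The plan is to establish both statements by a standard appeal to empirical quantile convergence for i.i.d.\ samples, combined with a truncation argument for part~$(ii)$. Let $F$ denote the standard Gaussian CDF and $\hat{F}_m$ the empirical CDF of $(z_\mu)_{\mu=1}^m$. The classical Glivenko--Cantelli theorem gives $\|\hat{F}_m - F\|_\infty \to 0$ almost surely, and this will be the only probabilistic input.

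For part~$(i)$, observe that $g_n$ being the $n$-th largest order statistic means $\hat{F}_m(g_n) = 1 - (n-1)/m$ (ties being excluded with probability one). Glivenko--Cantelli combined with $n/m \to 1/\beta$ gives $F(g_n) \to 1 - 1/\beta$ almost surely. Since $F$ is continuous and strictly increasing, $g_n \to F^{-1}(1 - 1/\beta) = t_\beta$ almost surely, where the last identity is the definition of $t_\beta$ (assuming $\beta > 1$ so that $t_\beta$ is finite; the application in Proposition~\ref{prop:ub_gaussian_width} has $\beta = \alpha > 2$).

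For part~$(ii)$, fix $\delta > 0$ small, set $k_m \coloneqq \lceil (1-\delta) n \rceil$, and note $m / k_m \to \beta/(1-\delta)$. Part~$(i)$ applied at both indices gives $g_n \to t_\beta$ and $g_{k_m} \to t_{\beta/(1-\delta)}$ almost surely. The crucial rewriting, which uses only that sorting preserves the multiset of values, is
\begin{align*}
   \sum_{\mu = k_m}^{n} g_\mu^2 &= \sum_{\nu=1}^m z_\nu^2 \, \indi\{g_n \leq z_\nu \leq g_{k_m}\}.
\end{align*}
Given $\epsilon > 0$, let $A_{m,\epsilon}$ be the event that $|g_n - t_\beta| \leq \epsilon$ and $|g_{k_m} - t_{\beta/(1-\delta)}| \leq \epsilon$. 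On $A_{m,\epsilon}$ the indicator above is dominated by $\indi\{t_\beta - \epsilon \leq z_\nu \leq t_{\beta/(1-\delta)} + \epsilon\}$, while off $A_{m,\epsilon}$ we bound brutally by $\sum_\nu z_\nu^2$. Taking expectations and dividing by $m$ yields
\begin{align*}
   \frac{1}{m}\EE\Big[\sum_{\mu = k_m}^n g_\mu^2\Big] &\leq \int_{t_\beta - \epsilon}^{t_{\beta/(1-\delta)} + \epsilon} \mcD x \, x^2 \, + \, \frac{1}{m} \EE\Big[\Big(\sum_\nu z_\nu^2\Big) \indi\{A_{m,\epsilon}^c\}\Big].
\end{align*}
Sending $m \to \infty$, then $\epsilon \to 0$, then $\delta \to 0$, and using continuity of $\beta \mapsto t_\beta$ at $\beta$ gives $t_{\beta/(1-\delta)} \to t_\beta$ and hence a vanishing integral, which is the claimed result.

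The one genuinely non-trivial point, and the main (mild) obstacle, is controlling the second term above: convergence in probability of $g_n, g_{k_m}$ is not by itself enough to control an expectation weighted by the unbounded quantity $\sum_\nu z_\nu^2$. I would handle this by Cauchy--Schwarz together with the elementary bound $\EE\bigl[(\sum_\nu z_\nu^2)^2\bigr] = m^2 + 2m$, yielding
\begin{align*}
   \frac{1}{m} \EE\Big[\Big(\sum_\nu z_\nu^2\Big) \indi\{A_{m,\epsilon}^c\}\Big] &\leq \frac{1}{m} \sqrt{\EE\Big[\Big(\sum_\nu z_\nu^2\Big)^2\Big] \bbP(A_{m,\epsilon}^c)} \,\longrightarrow\, 0,
\end{align*}
since $\bbP(A_{m,\epsilon}^c) \to 0$ by part~$(i)$. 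No sharp deviation estimates are needed; convergence in probability suffices thanks to the $L^2$ control of the chi-squared variable.
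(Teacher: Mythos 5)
Your part $(i)$ matches the paper's argument essentially verbatim (Glivenko--Cantelli plus continuity of the Gaussian CDF), so nothing to say there. Your part $(ii)$ is correct but proceeds by a genuinely different route. The paper's proof is a self-contained Chernoff-type estimate: it bounds $\EE\sum_{(1-\delta)n\leq\mu\leq n} g_\mu^2$ by $\tfrac{2}{tm}\log\EE\exp\{\tfrac{t}{2}\sum g_\mu^2\}$ via Jensen, dominates the ordered partial sum by a maximum over all subsets of cardinality $\delta n$, applies a union bound and the chi-squared MGF with $t=1/2$, and finishes with the entropy bound on $\binom{m}{\delta n}$. This is combinatorial, makes no use of part $(i)$, and yields an explicit quantitative rate in $\delta$. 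Your argument instead leans on part $(i)$: you rewrite $\sum_{k_m}^n g_\mu^2=\sum_\nu z_\nu^2\indi\{g_n\le z_\nu\le g_{k_m}\}$, localize the thresholds $g_n, g_{k_m}$ near $t_\beta, t_{\beta/(1-\delta)}$ on a high-probability event, bound the main term by a Gaussian integral that vanishes as $\delta\downarrow 0$ by continuity of $\beta\mapsto t_\beta$, and you correctly identify (and resolve by Cauchy--Schwarz with $\EE[(\sum_\nu z_\nu^2)^2]=m^2+2m$) the genuine obstacle that convergence in probability alone does not control an expectation against an unbounded chi-squared weight. What your approach buys is conceptual transparency (the estimate is just ``a thin slab of quantiles carries vanishing second-moment mass''), at the cost of making $(ii)$ dependent on $(i)$ and of giving a weaker, non-quantitative bound; the paper's concentration argument is more robust and exposes the explicit $O(\delta\log(1/\delta))$ scaling. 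Both are valid proofs.
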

    By the law of large numbers and the triangular inequality, 
    \begin{align*}
         \left|\frac{1}{m} \EE \sum_{\mu=1}^{n-1} g_\mu^2 \indi\{g_\mu \geq 0\} - \int_{t_\alpha}^\infty \mcD x \, x^2\right| 
         &\leq 
         \frac{1}{m} \EE 
         \left|\sum_{\mu=1}^{n-1} g_\mu^2 \indi\{g_\mu \geq 0\} - \sum_{\mu=1}^{m} g_\mu^2 \indi\{g_\mu \geq t_\alpha\}\right| + \smallO(1)
         , \\
         &\leq 
           \underbrace{\frac{1}{m} \EE 
         \sum_{\mu=1}^{n-1} g_\mu^2 \indi\{g_\mu < t_\alpha\}}_{\coloneqq I_1} +
           \underbrace{\frac{1}{m} \EE 
        \sum_{\mu=n}^{m} g_\mu^2 \indi\{g_\mu \geq t_\alpha\}}_{\coloneqq I_2} + \smallO(1).
    \end{align*}
    We now show that $I_1, I_2 \to 0$. Letting $\delta > 0$ and $A_{\delta} \coloneqq \{g_{\lceil(1-\delta) n\rceil} < t_\alpha\}$, then $\bbP[A_\delta] \to 0$ by $(i)$ of Lemma~\ref{lemma:control_seq_gaussians}.
    By the Cauchy-Schwarz inequality:
    \begin{align*}
        I_1 &=
         \frac{1}{m}  \EE  \sum_{\mu=1}^{n-1} g_\mu^2 \indi\{g_\mu < t_\alpha\} \indi\{A_\delta\}
         + \frac{1}{m}  \EE  \sum_{\mu=1}^{n-1} g_\mu^2 \indi\{g_\mu < t_\alpha\} \indi\{g_{\lceil(1-\delta) n\rceil} \geq t_\alpha\}, \\  
         &\leq \sqrt{\bbP[A_\delta]} \frac{[\EE(\|\bg\|^4)]^{1/2}}{m} + \frac{1}{m} \EE  \sum_{\mu=\lceil(1-\delta) n\rceil}^{n} g_\mu^2.
    \end{align*}
    We then deduce that $I_1 \to 0$ as $m \to \infty$ since $\EE(\|\bg\|^4) = \mcO(m^2)$ and by taking the $\delta \downarrow 0$ limit using $(ii)$ of Lemma~\ref{lemma:control_seq_gaussians}.
    The proof that $I_2 \to 0$ follows exactly the same lines. Together, this implies eq.~\eqref{eq:to_show_gwidth}, which ends the proof of Proposition~\ref{prop:ub_gaussian_width} as detailed above.
$\qed$

\subsection{Proof of Lemma~\ref{lemma:control_seq_gaussians}}\label{subsec_app:proof_control_seq_gaussians}

We start with $(i)$.
Let $\hmu_\bg \coloneqq (1/m) \sum_{\mu=1}^m \delta_{z_\mu} = (1/m) \sum_{\mu=1}^m \delta_{g_\mu}$. By definition of $g_n$, 
$\hmu_\bg((g_n, \infty)) = (n-1)/m$.
If $\xi \sim \mcN(0,1)$, by the Glivenko-Cantelli theorem:
\begin{equation*}
    |\hmu_\bg((g_n, \infty)) - \bbP[\xi > g_n]| \leq \sup_{\theta \in \bbR} |\hmu_\bg((\theta, \infty)) - \bbP[\xi > \theta]| \xrightarrow[m \to \infty]{a.s.} 0.
\end{equation*}
Therefore since $n/m \to 1/\beta$:
\begin{align*}
    \bbP[\xi > g_n] = \int_{g_n}^\infty \mcD x \xrightarrow[m \to \infty]{a.s.} \frac{1}{\beta}.
\end{align*}
Since $t \mapsto \int_t^\infty \mcD x$ is a smooth and strictly decreasing function, this implies that $g_n \xrightarrow[m \to \infty]{a.s.} t_\beta$.

\myskip 
We now prove $(ii)$.
 For any $t \in(0,1)$ we have by Jensen's inequality and the union bound:
 \begin{align*}
    \frac{1}{m} \EE \sum_{(1-\delta) n\leq \mu \leq n} g_\mu^2 \leq \frac{2}{tm} \log 
    \EE \exp\Bigg\{\frac{t}{2} \sum_{(1-\delta) n\leq \mu \leq n} g_\mu^2\Bigg\} 
    &\leq \frac{2}{tm} \log \EE \max_{\substack{I \subseteq [m] \\ |I| = \delta n}} \exp\Bigg\{\frac{t}{2} \sum_{\mu \in I} g_\mu^2\Bigg\}, \\ 
     &\leq \frac{2}{tm} \log\sum_{\substack{I \subseteq [m] \\ |I| = \delta n}} \EE \exp\Bigg\{\frac{t}{2} \sum_{\mu \in I} g_\mu^2 \Bigg\}, \\ 
     &\leq \frac{2}{tm} \log\left[\binom{m}{\delta n} (1-t)^{-\delta n/2}\right].
 \end{align*}
 Taking $t = 1/2$ and using the bound $\binom{p}{k} \leq 2^{p H(k/p)}$ with $H(q) \coloneqq -q \log q - (1-q) \log(1-q)$ the binary entropy function, yields (recall that $m/n \to \beta$):
 \begin{align*}
    \frac{1}{m} \EE \sum_{(1-\delta) n\leq \mu \leq n} g_\mu^2 \leq 4 H \left(\frac{\delta}{\beta}\right) \log 2 + \frac{2 \delta}{\beta} \log 2 + \smallO(1).
 \end{align*}
 Taking the limit $m \to \infty$ followed by $\delta \to 0$ ends the proof.
$\qed$

\end{document}